\documentclass[11pt]{article}

\usepackage[ruled]{algorithm2e}
\SetAlFnt{\small}
\SetAlCapFnt{\small}
\SetAlCapNameFnt{\small}
\SetAlCapHSkip{0pt}
\IncMargin{-\parindent}

\usepackage{algorithmic}
\usepackage{nicefrac}

\usepackage[
  bookmarks=true,
  bookmarksnumbered=true,
  bookmarksopen=true,
  pdfborder={0 0 0},
  breaklinks=true,
  colorlinks=true,
  linkcolor=black,
  citecolor=black,
  filecolor=black,
  urlcolor=black,
]{hyperref}
\usepackage{graphicx}
\usepackage{xcolor}
\usepackage[margin=1in]{geometry}
\usepackage{amssymb}
\usepackage{amsmath}
\usepackage[square]{natbib}
\setcitestyle{numbers}
\usepackage{amsthm}
\usepackage[capitalize]{cleveref}
\usepackage{bbm}	

\Crefname{figure}{Figure}{Figures}
\Crefname{equation}{Equation}{Equations}

\theoremstyle{plain}
\newtheorem{theorem}{Theorem}[section]
\newtheorem{corollary}[theorem]{Corollary}
\newtheorem{lemma}[theorem]{Lemma}
\newtheorem{proposition}[theorem]{Proposition}
\newtheorem{claim}[theorem]{Claim}
\newtheorem{observation}[theorem]{Observation}
\newtheorem{open}[theorem]{Open Question}

\theoremstyle{definition}
\newtheorem{definition}[theorem]{Definition}
\newtheorem{example}[theorem]{Example}

\newcommand{\com}{F_{\text{co-mon}}}
\newcommand{\eqrev}{F_{\text{eq-rev}}}
\newcommand{\Exp}{\operatorname{Exp}}
\newcommand{\ALG}{\operatorname{ALG}}
\newcommand{\OPT}{\operatorname{OPT}}
\newcommand{\Mye}{\operatorname{Mye}}

\hypersetup{
  pdfauthor      = {Moshe Babaioff <moshe@microsoft.com>, Michal Feldman <michal.feldman@cs.tau.ac.il>, Yannai A. Gonczarowski <yannai@gonch.name>, Brendan Lucier <brlucier@microsoft.com>, Inbal Talgam-Cohen <italgam@cs.technion.ac.il>},
  pdftitle       = {Escaping Cannibalization? Correlation-Robust Pricing for a Unit-Demand Buyer},
}
\begin{document}

\title{Escaping Cannibalization?\texorpdfstring{\\}{ }Correlation-Robust Pricing for a Unit-Demand Buyer\thanks{A one-page abstract of this paper appeared in the Proceedings of the 21st ACM Conference on Economics and Computation (EC 2020). A one-minute flash video for this paper, which was awarded the EC 2020 Best Flash Video award, is available at: \url{https://www.youtube.com/watch?v=w3fQt3CklxI} . Feldman and Gonczarowski were supported in part by ISF grant 317/17 administered by the Israeli Academy of Sciences. Feldman was supported in part by the European Research Council (ERC) under the European Union's Horizon 2020 research and innovation program (grant agreement No.\ 866132). Gonczarowski was supported in part by ISF grant 1841/14 administered by the Israeli Academy of Sciences. Talgam-Cohen is a Taub Fellow supported by the Taub Family Foundation, and by ISF grant 336/18 administered by the Israeli Academy of Sciences.}}
\date{August 15, 2020}

\author{Moshe Babaioff\thanks{Microsoft Research | \emph{E-mail}: \href{mailto:moshe@microsoft.com}{moshe@microsoft.com}.} \and Michal Feldman\thanks{Tel Aviv University and Microsoft Research | \emph{E-mail}: \href{mailto:michal.feldman@cs.tau.ac.il}{michal.feldman@cs.tau.ac.il}.} \and Yannai A. Gonczarowski\thanks{Microsoft Research | \emph{E-mail}: \href{mailto:yannai@gonch.name}{yannai@gonch.name}. Research carried out in part while at Tel Aviv University.} \and Brendan Lucier\thanks{Microsoft Research | \emph{E-mail}: \href{mailto:brlucier@microsoft.com}{brlucier@microsoft.com}.} \and Inbal Talgam-Cohen\thanks{Technion--Israel Institute of Technology | \emph{E-mail}: \href{mailto:italgam@cs.technion.ac.il}{italgam@cs.technion.ac.il}.}}

\maketitle

\begin{abstract}
	We consider a robust version of the revenue maximization problem, where a single seller wishes to sell $n$ items to a single unit-demand buyer. In this robust version, the seller knows the buyer's marginal value distribution for each item separately, but not the joint distribution, and prices the items to maximize revenue in the worst case over all compatible correlation structures.
We devise a computationally efficient (polynomial in the support size of the marginals) algorithm that computes the worst-case joint distribution for any choice of item prices.
And yet, in sharp contrast to the additive buyer case (\citeauthor{Carroll17}, \citeyear{Carroll17}), we show that it is NP-hard to approximate the optimal choice of prices to within any factor better than $n^{1/2-\epsilon}$. 
For the special case of marginal distributions that satisfy the \emph{monotone hazard rate} property, we show how to guarantee a constant fraction of the optimal worst-case revenue using item pricing; this pricing equates revenue across all possible correlations and can be computed efficiently.  
\end{abstract}

\section{Introduction}
\label{sec:intro}
A core field of study within Algorithmic Mechanism Design is that of the design of selling mechanisms, with one of the most fundamental questions being that of revenue-maximization by a single seller, even when facing only a single buyer. The standard
setting for this question is the Bayesian setting, where the seller knows a prior distribution from which the values of the buyer for the various items are drawn, and aims to maximize her revenue in expectation over this prior distribution. 
When the seller has only a single item for sale, the optimal mechanism in such setting turns out to be a simple pricing mechanism, as established by Myerson~\cite{Myerson} and Riley and Zeckhauser~\cite{RileyZeckhauser}.
But for multiple items, even for simple valuation models such as \emph{additive} or \emph{unit-demand} valuations, and even assuming that the buyer's values are independent across items, revenue-optimal mechanisms can be quite complex~\cite{DaskalakisDT17}, hard to compute~\cite{ChenDPSY18,DaskalakisDT14}, and exhibit unintuitive behavior such as nonmonotonicity in the valuation distributions \cite{HartReny}, with the general revenue-maximizing solution even in these settings continuing to elude a complete characterization to date.
In such settings, the search for simple mechanisms, and in particular for \emph{pricing mechanisms} (i.e., deterministic mechanisms, which price items and/or bundles of items) that give good revenue guarantees, has therefore spawned many important results \cite{ChawlaHMS10,ChawlaHK07,BabaioffILW14,HartN17}.

As one might expect, there are grave impossibility results \cite{HartNisan2019Correlated,BriestK11} for the Bayesian setting when the prior distribution (that is perfectly known to the seller) over the valuations of the various items exhibits correlations between these valuations (rather than the items being independently distributed).
Nonetheless, Carroll~\cite{Carroll17} has asked whether 
the situation may become more hopeful
under a ``partially known underlying distribution'' scenario in which
the seller is only given the marginal distribution of each valuation, and wishes to maximize her guaranteed expected revenue over any possible correlated valuation distribution with the given marginals.

In his pioneering paper, among other contributions, Carroll~\cite{Carroll17} considers the scenario of a single seller with a number of items to sell to a single buyer with an \emph{additive} valuation, where the seller knows the distribution of the buyer's valuation for each good separately.
Remarkably, 
the mechanism that provides the highest ``worst case'' (expected) revenue guarantee across all such correlations is an exceedingly simple pricing mechanism: it simply prices each item separately according to its optimal take-it-or-leave-it price {\`a} la Myerson / Riley and Zeckhauser.  Interestingly, since this pricing mechanism prices only single items and not bundles of items, it has the appealing property that its expected revenue is independent of the correlation structure. One might therefore take an intuitive message that in the absence of knowledge about such correlations, one should opt for a mechanism whose revenue is agnostic of these correlations.  This message is further echoed in recent extensions to budgeted buyers~\cite{GravinL18} and optimal contract design\footnote{In this robust contract design problem, the hidden information is not correlations but rather the higher moments of the reward distributions for the principle~\cite{DuttingRT19}, however the message remains: the (expected) revenue of the (worst-case) optimal mechanism/contract is agnostic of the information that is hidden from the mechanism/contract designer (even though for other, nonoptimal, mechanisms/contracts, this information is needed to compute the revenue).}~\cite{DuttingRT19}.  As such, one might naturally wonder whether such a principle of agnosticism holds more generally, even if only for approximate revenue-maximization; or, failing that, whether one can always find simple pricing mechanisms (correlation-agnostic or not) that even only approximate the (worst-case) optimal revenue.

In this paper we study \emph{revenue-maximizing pricing} in a correlation-robust setting where a seller with multiple items faces a single \emph{unit demand} buyer, and in particular consider the above question through the lens of this setting. The analysis of this setting was posed as an open problem by Gravin and Lu \cite{GravinL18}, who in particular also explicitly posed the question of the tractability of finding a solution.  We first ask: does the optimal correlation-robust mechanism take the form of a correlation-agnostic pricing mechanism that can be computed efficiently? (As is often done, we use ``can be computed efficiently'' to formalize the amorphic ``simple'' from the above discussion.) As it turns out, the answer is no; we present examples in which even the optimal choice of item prices\footnote{A pricing mechanism (any deterministic mechanism) for a unit-demand buyer without loss of generality simply sets a price for each item (being the lowest price of any bundle containing the item), and need not offer any bundles since a unit-demand buyer has no value for more than one item.} (a lower benchmark than the optimal mechanism) leads to an unboundedly higher revenue than any correlation-agnostic choice of prices as $n$ grows large.  This negative answer gives rise to a new challenge: identifying the revenue guarantee given the (worst-case) optimal pricing.  Our first main result addresses this challenge and shows that given the optimal pricing, and in fact given \emph{any} pricing, the revenue guarantee of that pricing can be efficiently calculated:
\begin{theorem}[See \cref{thm:alg-is-opt}]
Given $n$ discrete (marginal) distributions $F_1,\ldots,F_n$ from which a unit-demand buyer's valuations of $n$ respective items are drawn, and given respective prices $p_1,\ldots,p_n$ for these items, the correlation among the $n$ given distributions that gives the lowest (expected) revenue from the buyer, as well as that revenue itself, can be computed in polynomial time.
\end{theorem}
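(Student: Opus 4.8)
The plan is to write the adversary's problem as an exponentially large linear program, dualize it, and show that the dual can be separated in polynomial time thanks to the unit-demand structure; the ellipsoid method then returns the worst-case revenue, and a standard constraint-generation argument extracts a worst-case coupling with polynomial-size support.

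\emph{Setup.} Let $S_i\subseteq\mathbb R$ be the finite support of $F_i$ and $f_i(v)=\Pr_{F_i}[v]$. A coupling is a distribution $\pi$ on $S_1\times\cdots\times S_n$ respecting the marginals, i.e.\ $\sum_{x:\,x_i=v}\pi(x)=f_i(v)$ for all $i$ and all $v\in S_i$. Fixing the model's (deterministic) tie-breaking rule, the payment of a unit-demand buyer facing $p=(p_1,\dots,p_n)$ at profile $x$ is a well-defined function $R(x)$, evaluable in $O(n)$ time and taking values in $\{0,p_1,\dots,p_n\}$: the buyer takes the item maximizing $x_i-p_i$ if that is nonnegative, and nothing otherwise. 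The adversary thus solves the LP
\[
  \min_{\pi\ge 0}\Bigl\{\,\textstyle\sum_x\pi(x)\,R(x)\ :\ \sum_{x:\,x_i=v}\pi(x)=f_i(v)\ \ \forall i,v\,\Bigr\},
\]
which is feasible (the independent coupling works) and bounded, and whose dual is
\[
  \max_y\Bigl\{\,\textstyle\sum_{i,v}f_i(v)\,y_{i,v}\ :\ \sum_i y_{i,x_i}\le R(x)\ \ \forall\,x\in S_1\times\cdots\times S_n\,\Bigr\},
\]
with one free variable $y_{i,v}$ per marginal constraint and one constraint per profile.

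\emph{Separation.} The key point is that, although the dual has exponentially many constraints, we can separate it efficiently because $R$ is constant once we fix ``which item the buyer buys, and at what utility.'' Given a candidate $y$, classify a would-be most-violated profile $x$ by its revenue $R(x)$: if $R(x)=0$ then $x$ ranges over the box $\{x: x_i-p_i<0\ \forall i\}$ (or $\le 0$, per convention) and $\max_x\sum_i y_{i,x_i}$ decouples coordinatewise; if $R(x)=p_i$ then the buyer buys $i$, so $u:=x_i-p_i=\max_j(x_j-p_j)\ge 0$, and for each of the $\le|S_i|$ admissible levels $u$ the consistent profiles form a box (each coordinate $j\ne i$ satisfies $x_j-p_j\le u$ or $x_j-p_j<u$, with strictness set by which items $i$ must beat on ties), over which $\sum_i y_{i,x_i}$ again decouples. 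There are at most $1+\sum_i|S_i|$ boxes; each maximization reduces to $n$ ``largest $y_{j,\cdot}$ below a threshold'' queries, answerable in $O(\log|S_j|)$ time after pre-sorting. So separation runs in polynomial time, returning a violated profile when some box's optimum exceeds its $R$-value and certifying feasibility otherwise.

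\emph{Assembling and the main obstacle.} By the equivalence of separation and optimization, the ellipsoid method solves the dual in polynomial time, giving the worst-case revenue. For the coupling itself, let $\mathcal Q$ be the polynomially many profiles the oracle ever returned; the computed optimum $y^\ast$ is still optimal for the dual restricted to $\mathcal Q$, so by LP duality the primal restricted to couplings supported on $\mathcal Q$ --- a polynomial-size LP --- attains the same value, and solving it yields an explicit worst-case coupling on at most $\sum_i|S_i|$ profiles. (Equivalently, run column generation with the above oracle as the pricing step.) The hard part is really just the structural observation enabling separation: for a \emph{unit-demand} buyer, conditioning on which item wins and at what utility collapses the revenue to a constant and the set of consistent profiles to a product of one-dimensional thresholds, so that maximizing the linear functional $\sum_i y_{i,x_i}$ over that set decouples across items. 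The rest is routine LP/ellipsoid machinery, with ties (among items and at zero utility) only a cosmetic nuisance in the precise inequalities defining the boxes.
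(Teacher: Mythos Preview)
Your approach is correct and genuinely different from the paper's. The paper does \emph{not} go through LP duality and ellipsoid; instead it designs an explicit greedy/water-filling algorithm (Algorithm~1) that, processing items from cheapest to most expensive, builds as many ``chains'' (jointly sampled profiles) as possible dominated by each item in turn, with a careful re-coupling step between phases. Correctness is argued combinatorially: the output simultaneously maximizes, for every prefix $[i]$ of the price order, the probability that one of the $i$ cheapest items is bought (Proposition~3.3, proved via an exchange argument the paper calls ``sift\&lift''), which immediately implies revenue-minimality.

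What each route buys: your LP argument is short, modular, and reusable---the only problem-specific content is the observation that conditioning on the winning item and its value collapses the feasible profiles to a product set, which is exactly what makes separation tractable. The paper's algorithm is strongly polynomial, works verbatim for irrational probabilities, and yields structural dividends your proof does not directly give: the adversary's optimum is always attained by a \emph{perfect coupling}, and moreover by one with the specific layered form Algorithm~1 produces. Conversely, your approach delivers a worst-case coupling with support size at most $\sum_i |S_i|$ as a basic feasible solution, but does not by itself recover the perfect-coupling characterization, and it inherits the usual ellipsoid caveats (bit-complexity dependence, impracticality). One small point of care in your write-up: when two items share a price, ``$R(x)=p_i$'' does not pin down the bought item, but your enumeration over all pairs $(i,x_i)$ handles this correctly regardless; it might be worth saying explicitly that you partition by \emph{(winning item, winning value)} rather than by the value of $R$.
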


In other words, the high-dimensional problem of finding a revenue-minimizing correlation given prices is tractably solvable. What about the lower-dimensional problem of finding a pricing that well approximates the worst-case optimal revenue?

A major hurdle in finding high-revenue pricings for a unit-demand buyer, a hurdle that does not arise in the additive case, is that of \emph{cannibalization}, whereby one item that is offered for sale \emph{cannibalizes} from the revenue of another item.\footnote{This cannibalization issue is common to the literature on assortment planning, where typically the prices are fixed and the decision-maker must choose which items to make available; see~\cite{kok2008assortment} for a survey.}
Consider for example two items, one priced at a low price, say, \$1, and the other at a high price, say, \$1,000,000. Say that the buyer has realized values \$1.5 for the first item and \$1,000,000.25 for the second item. Such a buyer would opt for buying the first item, resulting in a revenue of only \$1, as the buyer's utility from that item (at that price) is slightly higher. Since the buyer has unit demand, she would therefore not buy the second item. It is not hard to see that for this particular value realization, pricing the first item at infinity (i.e., not offering it for sale at all) leads to higher revenue.
When the valuations are drawn from an underlying distribution, the extent to which cannibalization affects the seller's revenue is of course intimately connected to the correlation between the values of different items.

In the Bayesian setting, if item valuations are independently drawn, then while cannibalization manifests to some extent, it turns out that simple pricings can still achieve a constant approximation to the optimal revenue for any number of items~\cite{ChawlaHMS10,ChawlaHK07}. 
But correlations between item values can potentially amplify cannibalization.  For example, one might correlate values so that the buyer often has just slightly higher utility from low-priced items than from high-priced items.
Such a ``bad'' correlation could depend crucially on the specific choice of item prices, though: they determine which values are above-the-price, and the utility given by each of these values.
Therefore, avoiding excessive cannibalization is in some sense even more challenging in the correlation-robust setting, as in this setting the worst-case correlation is effectively tailored to maximally cannibalize the revenue of the chosen prices (rather than given in advance, which gives the seller a chance to price in a way that mitigates the cannibalization caused by a specific correlation).

Keeping cannibalization under control, a challenge that is completely absent from the additive setting, is indeed the main challenge in our unit-demand setting.
As we show, finding prices that best overcome this challenge is as hard as getting a relatively good approximation to the notoriously hard problem of finding the maximal independent set in a graph.
Hence, it is not only hard to find prices that ``thread the needle'' by fighting cannibalization ``just right'' to optimize the (worst-case) revenue among all prices, but it is in fact NP-hard to find prices that even coarsely approximate the revenue among all prices to any reasonable extent:

\begin{theorem}[See \cref{thm:hardness}]
The \emph{max-min robust pricing} problem is NP-hard: given $n$ (marginal) distributions $F_1,\ldots,F_n$, each described using at most $\mathrm{poly}(n)$ bits, from which a unit-demand buyer's valuations of $n$ respective items are drawn, compute respective prices $p_1,\ldots,p_n$ for these items that maximize, among all possible choices of prices, the guaranteed (expected) revenue from the buyer over any correlation among the $n$ given distributions. Furthermore, it is NP-hard to find prices that even approximate this guaranteed revenue up to a factor of $n^{1/2-\varepsilon}$ for any $\varepsilon>0$.
\end{theorem}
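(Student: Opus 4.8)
The plan is to give a gap-preserving polynomial-time reduction from \textsc{Maximum Independent Set}. I will use the classical fact (due to H\aa{}stad, with deterministic hardness via Zuckerman) that for every $\delta>0$ it is NP-hard to distinguish $N$-vertex graphs $G$ with $\alpha(G)\ge K$ from those with $\alpha(G)\le K/N^{1-\delta}$, where $\alpha(G)$ is the independence number. From such a $G$ I would build, in polynomial time, a family of $n$ discrete marginals together with a scale parameter $r$ so that the value of the resulting max-min robust pricing instance is $\Theta(r\cdot\alpha(G))$ in both directions: at least $c_1\,r\,\alpha(G)$ in the ``yes'' case (witnessed by an explicit pricing) and at most $c_2\,r\,\alpha(G)$ for \emph{every} pricing in the ``no'' case. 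A $\rho$-factor approximation for robust revenue would then separate these two cases as long as $\rho<(c_1/c_2)N^{1-\delta}$, which is impossible unless $\mathrm P=\mathrm{NP}$. A careful construction will have $n=\Theta(N^2)$; as I explain below, this quadratic blow-up is exactly what is needed to make the edge-conflicts \emph{non-circumventable}, and it converts the $N^{1-\delta}$ hardness for $\alpha$ into hardness within $\Theta(n^{(1-\delta)/2})$, i.e.\ within $n^{1/2-\epsilon}$ for every $\epsilon>0$.

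\textbf{The gadget and completeness.}
The heart is a ``cannibalization graph'' equal to $G$. I would create, for each vertex $v$, a ``vertex-item'' $I_v$ living at a geometrically separated scale $V_v$, taking value $V_v$ with a small probability $q_v$ calibrated so that $V_v q_v=r$ and $0$ otherwise, priced just below $V_v$ with the tiny residual utilities ordered so that a buyer interested in several vertex-items strictly prefers the most expensive one. Two properties should hold. First, at these prices no correlation can make the buyer substitute a cheaper vertex-item for a pricier one, so \emph{among vertex-items alone} there is no cannibalization; together with a covering argument — the marginal constraints force the adversary to leave $\Omega(r)$ of each offered item's high-value mass uncovered — this shows that offering the vertex-items of a set $T$, and nothing else, guarantees revenue $\Omega(|T|\,r)$. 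Second, built into the distributions are near-free ``cannibalizer'' pieces, one per edge $\{u,v\}$, whose value cap sits just above the residual utilities of both $I_u$ and $I_v$, so that whenever such a piece is active a buyer about to buy $I_u$ or $I_v$ is diverted to it and that revenue collapses to $\approx 0$; for non-edges the cap is placed below, rendering the piece inert. Completeness is then immediate: in the ``yes'' case, price exactly the vertex-items of a maximum independent set $S$ and price everything else out of the market; since $S$ is independent no cannibalizer is active among the offered items, and the first property gives guaranteed revenue $\Omega(r\,\alpha(G))$.

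\textbf{Soundness.}
This is the crux: \emph{no} pricing may guarantee more than $O(r\,\alpha(G))$. I would invoke \cref{thm:alg-is-opt}, whose proof describes the revenue-minimizing correlation for a fixed price vector as the optimum of a transportation-type linear program (equivalently a min-cost flow). Using that description, I would exhibit, for an \emph{arbitrary} price vector, a correlation under which essentially all revenue — all but $O(r)$ per vertex — is generated by vertex-items whose index set is \emph{independent} in $G$: any two adjacent vertex-items that the seller ``effectively offers'' are mutually cannibalized via the corresponding edge-piece, while items the seller underprices or leaves at junk prices are shown to contribute only a separately bounded lower-order amount. Since an independent set has size at most $\alpha(G)$, the guaranteed revenue is $O(r\,\alpha(G))$; and the same extraction turns any pricing with guaranteed revenue $R$ into an independent set of size $\Omega(R/r)$, which is what makes the reduction gap-preserving and yields the stated inapproximability of finding good prices.

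\textbf{Main obstacle.}
I expect soundness to be the hard part, for two intertwined reasons. First, it must cover \emph{all} price vectors — the seller may mix price levels, undercut a pricey item with a cheaper one, or deliberately engage cannibalizer pieces — so the argument that purchases collapse onto an independent set cannot use any hand-picked correlation and must instead be read off the LP/flow structure from \cref{thm:alg-is-opt}. Second, and this is what truly drives the size of the construction: the cannibalizer pieces must be \emph{welded} to the vertex-items, so that the seller cannot retain a vertex-item's $\Theta(r)$ of guaranteed revenue while discarding its ability to cannibalize an adjacent vertex-item. Engineering this welding — so that for every price vector there is a worst-case correlation turning ``offered and adjacent'' into ``mutually cannibalized'' — is what forces the auxiliary structure to blow up to roughly $\Theta(N)$ items per vertex, hence $n=\Theta(N^2)$; and getting all the scale separations, utility orderings, and cannibalizer-cap placements simultaneously consistent across this structure is where the bulk of the technical care goes.
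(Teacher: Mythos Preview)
Your high-level plan—reduce from \textsc{Maximum Independent Set} and turn edges into ``cannibalization conflicts'' between offered items—is exactly the paper's strategy. But the construction the paper uses is simpler and sidesteps precisely the obstacle you flag.

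The paper uses only $n=N$ items, one per vertex. The edge structure is not encoded via separate cannibalizer items; instead it is baked into the vertex items' own supports. Item $i$ has value $2^{in}$ with probability $2^{-in}$ (its ``main'' mass, Myerson revenue $1$), and for every neighbor $j>i$ an additional atom at $2^{jn}$ with probability $2^{-jn}/\sqrt{n}$. If the seller prices item $i$ near $2^{in}$, then whenever $v_i=2^{jn}$ the buyer's utility from $i$ is about $2^{jn}$, which dominates the utility from $j$ priced near $2^{jn}$. So the ``welding'' you worry about is automatic: offering item $i$ at its reasonable price \emph{is} offering a cannibalizer of each higher-indexed neighbor; the seller cannot keep one and discard the other. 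Your $\Theta(N^2)$-item blowup and the unresolved welding mechanism are therefore unnecessary.

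The soundness argument is also more elementary than you anticipate: it does not use the LP/flow structure of \cref{thm:alg-is-opt} at all. For an arbitrary pricing, the paper partitions items into under-priced, over-priced, and ``reasonably priced'' ($p_i\in(2^{(i-1)n},2^{in}]$), bounds the first two buckets trivially, and for the third exhibits an explicit correlation: when $v_j=2^{jn}$ and $j$ has at least $\sqrt{n}$ reasonably-priced lower-indexed neighbors, route that event to a uniformly random such neighbor's $2^{jn}$-atom, killing $j$'s sale. The surviving set $T$ (items with fewer than $\sqrt{n}$ such neighbors) admits a greedy independent set of size $\ge |T|/\sqrt{n}$, giving $R^*\le O(\alpha(G)\sqrt{n})$. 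Thus the $\sqrt{n}$ slack sits in the soundness bound (not in an item blowup), and the gap $\Omega(n^{1-\delta})$ versus $O(n^{1/2+\delta})$ yields the $n^{1/2-\varepsilon}$ inapproximability directly.

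In short: your reduction is the right one, but the key missing idea is to place the cannibalizer mass \emph{inside} each vertex item's own distribution (with a $1/\sqrt{n}$ scaling), which dissolves the welding problem and lets the soundness go through by an explicit correlation plus a one-line greedy argument, with no appeal to the best-response LP.
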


This theorem strongly indicates that a ``clean'' characterization of (approximately) optimal robust prices is unlikely to exist. The theorem also immediately implies the same lower bound also for finding prices that approximate the more ambitious benchmark of the (worst-case) optimal revenue from any (not necessarily pricing) mechanism. To the best of our knowledge, this is the first hardness result in the correlation-robust framework.

Given this main negative result, in the last part of our paper we ask whether certain standard assumptions on distributions, when satisfied by the marginals, can mitigate this impossibility, and possibly mitigate also some other undesirable phenomena that we identify. Our third and final main result gives a positive answer to this question, showing that if all of the marginal distributions exhibit \emph{monotone hazard rate (MHR)}, a standard tail condition in mechanism design, then the worst-case optimal revenue (from any mechanism) can be approximated up to a factor of $3.5$ using a simple pricing with many desirable properties; it concisely depends on the marginals only through one single-dimensional statistic of each, its revenue is agnostic of the unknown correlations, and its revenue is monotone in the given marginals:

\begin{theorem}[See \cref{mhr}]
Given $n$ MHR (marginal) distributions $F_1,\ldots,F_n$ from which a unit-demand buyer's valuations of $n$ respective items are drawn,
let $i$ be such that the median $p_i$ of $F_i$ is (weakly) higher than the medians of all other $F_j$. 
Setting a price of $p_i$ for item $i$ and setting a price of infinity for every other item maximizes up to a factor of at most $3.5$ (across all mechanisms whatsoever) the guaranteed (expected) revenue from the buyer over any correlation among the $n$ given distributions.
\end{theorem}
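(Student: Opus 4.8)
The plan is to sandwich the optimal guaranteed revenue — the best, over all mechanisms, of the worst-case expected revenue over all compatible correlations — between the guaranteed revenue of the proposed ``max-median'' pricing and a small constant multiple of it. Throughout, write $p$ for the median $p_i$, and let $m_j$ denote the median of $F_j$, so that $m_j \le p$ for every $j$.

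First I would pin down the performance of the proposed pricing itself. Since it offers only item $i$ for sale, there is no cannibalization whatsoever, and for \emph{every} correlation the buyer purchases item $i$ precisely when $v_i \ge p$; hence its revenue is $p\cdot\Pr[v_i \ge p]$, a quantity independent of the correlation and, by the definition of the median, at least $p/2$. So the guaranteed revenue of this pricing equals $p\cdot\Pr[v_i \ge p] \ge p/2$ (and in particular the optimal guaranteed revenue is also at least $p/2$).

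Next I would upper bound the optimal guaranteed revenue. By the max--min inequality it is at most $\min_\mu \OPT(\mu)$, where $\mu$ ranges over all correlations consistent with $F_1,\dots,F_n$ and $\OPT(\mu)$ is the optimal single-mechanism revenue against a \emph{known} $\mu$; taking $\mu = \com$, the comonotone coupling, it is at most $\OPT(\com)$. Because the buyer is unit-demand, no mechanism (even randomized) can charge more than the realized highest value, so $\OPT(\com) \le \mathbb{E}_{\com}\bigl[\max_j v_j\bigr]$. Under the comonotone coupling each event $\{v_j\ge t\}$ is a top-quantile event of the single underlying uniform, so these events are nested across $j$; hence $\Pr_{\com}[\max_j v_j \ge t] = \max_j \Pr[v_j \ge t]$ and
\[
\mathbb{E}_{\com}\bigl[\textstyle\max_j v_j\bigr] \;=\; \int_0^\infty \max_j \Pr[v_j \ge t]\,dt .
\]
The quantitative heart of the argument is an MHR tail bound phrased through the median: if $F$ is MHR with median $m$ and survival function $G$, then $G(t)\le 2^{-t/m}$ for all $t\ge m$. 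This holds because the MHR property makes $\log G$ concave with $\log G(0)=0$, so $t\mapsto \log G(t)/t$ is nonincreasing and $\log G(m)\le \log(1/2)$, giving $\log G(t)\le \tfrac{t}{m}\log G(m)\le -\tfrac{t}{m}\ln 2$. I would take care to verify that the discrete form of the MHR property (log-concavity of $G$ on the support) yields this bound with no extraneous constant and for all real $t\ge m$, and that the chosen notion of median indeed has $G(m)\le 1/2$ and $m\le p$. Plugging $m=m_j\le p$ in gives $\max_j\Pr[v_j\ge t]\le 2^{-t/p}$ for $t\ge p$, while trivially $\max_j\Pr[v_j\ge t]\le 1$ for $t<p$, so splitting the integral at $p$ yields
\[
\mathbb{E}_{\com}\bigl[\textstyle\max_j v_j\bigr] \;\le\; \int_0^{p} 1\,dt + \int_{p}^{\infty} 2^{-t/p}\,dt \;=\; p\Bigl(1+\tfrac{1}{2\ln 2}\Bigr).
\]

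Combining the two bounds, the optimal guaranteed revenue is at most $p\bigl(1+\tfrac{1}{2\ln 2}\bigr) = \bigl(2+\tfrac{1}{\ln 2}\bigr)\cdot\tfrac{p}{2} \le 3.5\cdot\bigl(p\cdot\Pr[v_i\ge p]\bigr)$, i.e.\ at most $3.5$ times the guaranteed revenue of the proposed pricing. The main obstacle I anticipate is exactly this final margin: since $2+\tfrac{1}{\ln 2}\approx 3.443$ is close to $3.5$, the discreteness-of-the-median and MHR-tail step must be carried out carefully enough to not lose any appreciable constant; the rest — the comonotone reduction, the welfare upper bound on $\OPT(\com)$, and the elementary integral — is routine. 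The qualitative properties in the statement are then immediate from the explicit description: the pricing depends on the marginals only through their medians, and its revenue is correlation-agnostic because only a single item is ever sold.
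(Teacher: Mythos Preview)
Your proposal is correct and follows essentially the same approach as the paper: upper-bound the optimal guaranteed revenue by the comonotone welfare $\mathbb{E}_{\com}[\max_j v_j]$, then invoke the MHR tail property (the paper phrases it via the quantile function in \cref{lem:MHR-vals-above-q}, you via log-concavity of the survival function---these are equivalent) to bound this welfare by $p\bigl(1+\tfrac{1}{2\ln 2}\bigr)=(2+\tfrac{1}{\ln 2})\cdot\tfrac{p}{2}$, matching the paper's constant $3.443$ exactly. Your integration over values $t$ is the layer-cake dual of the paper's integration over quantiles $\xi$; the content is identical.
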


Unfortunately, if the MHR condition is relaxed,
we show that many hardships arise already for regular distributions (the ``one notch weaker'' standard tail condition in mechanism design), such as requiring linearly many different item prices (and hence requiring nonmonotone mechanisms), and nontrivial dependence on the marginals. This motivates the main question that we leave open:

\begin{open}
Is there a computationally efficient algorithm that given $n$ regular distributions $F_1,\ldots,F_n$ from which a unit-demand buyer's valuations of $n$ respective items are drawn, finds prices $p_1,\ldots,p_n$ that maximize up to a constant factor (even only across all possible choices of prices) the guaranteed (expected) revenue from the buyer over any correlation among the $n$ given distributions?
\end{open}

We conclude this paper by presenting some observations and examples, which may be useful toward this open question.

\subsection{Comparison with Bayesian Optimal Mechanisms under Independence}

The notion of correlation-robust mechanism design stands in contrast to the literature that assumes that buyer values are independent across items.
As such, it is worthwhile to draw comparisons across the two settings. 

Given the positive results of Carroll~\cite{Carroll17} and of Gravin and Lu~\cite{GravinL18}, one may have indeed wondered,
for a given multi-item setting (or at least for a canonical simple multi-item setting) and given marginals, whether correlation-robust mechanism design is always ``easier'' in some sense than Bayesian mechanism design with the assumption of independence of the marginals. Our results give a negative answer to this hope, in a precise formal sense.
Consider the best approximation factor to the optimal revenue (of any mechanism) that is obtainable by an efficiently computable pricing mechanism. In the Bayesian independent model, for both the additive and the unit-demand settings (two special cases of gross substitutes valuations), this factor is a constant that is strictly greater (worse) than $1$~\cite{ChawlaHMS10,ChawlaHK07,BabaioffILW14,Thanassoulis04}, and this extends 
even to subadditive settings~\cite{RubinsteinW18}, a strict superclass of gross substitutes.
In the correlation-robust model, for the additive setting this factor is simply~$1$~\cite{Carroll17} (i.e., optimal---an improvement compared to the Bayesian independent model). 
Yet, we show, in sharp contrast, that already for the unit-demand setting 
this factor is not only worse than in the independent model, but is actually unboundedly large as $n$ grows large, even when compared to the weaker benchmark of the optimal revenue of a pricing mechanism.

In this vein, it is also instructive to consider a recent paper by Bei et al.~\cite{BeiGLT19}. In their paper, they study correlation-robust pricing in a single-item setting with \emph{multiple} buyers, and give a pricing mechanism that maximizes the worst-case revenue up to a constant.\footnote{To the best of our knowledge, Bei et al.\ are the first to study approximate revenue maximization in a correlation-robust setting.} Given the well-known connections, in the Bayesian setting, between the single-item multi-buyer case and the multi-item single-buyer-with-unit-demand case~\cite{ChawlaHK07},
one may be surprised by the stark contrast between the positive result of Bei et al.\ and our negative one. This contrast in fact highlights a key difference between their analysis and ours: the mechanism used by Bei at al.\ in the single-item multi-buyer setting offers the item to different buyers for different prices, and does so by offering it first to the buyer for whom the set price is highest, then to the buyer for whom the set price is the second-highest, etc. That is, in their (single-item multi-buyer) setting the mechanism can force the item to be sold for the \emph{highest price} that is below the corresponding value, while in our (multi-item single-buyer) setting, since there is a single buyer making the decision, we have no escape from the price to be paid being the one that is farthest below the corresponding value (i.e., generating the highest \emph{utility}), allowing lower-priced items to cannibalize, as discussed above, from the sale probability of higher-priced items.
Indeed, to the best of our knowledge, our study of the unit-demand case is the first in a correlation-robust model where there is no solution that can be expressed as a composition of solutions to single-item auction setting. Approaching our research questions therefore required the development of new technical approaches to the correlation-robust model.

Taken in concert,
the above observations highlight correlation-robust revenue maximization as a framework for which intuition from the Bayesian setting may fail, and
for which completely new separate intuition
may have to be developed.

\subsection{Other Related Work} 

Gravin and Lu \cite{GravinL18} give an alternative proof to Carroll's result, and furthermore extend their study to solve the more general scenario of additive valuations with a buyer budget (the buyer's fixed budget is known to the seller, but the correlation between valuation distributions is again adversarially chosen), for which, as noted above, they show that Carroll's main message of the optimal mechanism being simple (and easy to compute) and its revenue being agnostic of the correlation still holds. Driven by similar motivation of robust revenue maximization, however in a different setting of contract (rather than auction) design, Duetting et al.~\cite{DuttingRT19} study the worst-case optimal contract in a principal-agent setting where only the expected rewards from the agent's actions to the principal are known (rather than the full reward distributions). Their identification of linear contracts as optimal in this sense once again features the same properties of simplicity, computational efficiency, and agnosticism to the hidden information.\footnote{Earlier work of \cite{Carroll15} gives a different sense in which linear contracts are max-min optimal, one that does not share this property of being agnostic to the unknown component.}

These works on robust mechanism design fit within a broader research agenda of \emph{robust optimization}, which has been studied in Operations Research tracing back to the classic paper of Scarf \cite{Scarf58}. This approach has been applied to mechanism design, among other domains \cite{BandiB14}. Within computer science, robust design ties in to ``beyond worst-case'' analysis approaches \cite{Roughgarden19}, and in particular to semi-random models \cite{BlumS95}. 
From this perspective, we study a semi-random model in which one aspect of the model (the item values) is randomly
drawn, whereas another (the correlation among values) is adversarially chosen.
Hybrid models such as these are gaining traction in recent years, in part due to their power to explain why certain algorithmic methods work better than expected. 
 
Most of the work on mechanism design for a unit-demand buyer has been in the standard Bayesian model.
If there is only a single item for sale, Myerson \cite{Myerson} characterizes the revenue-optimal mechanism, which for a single buyer (see also \cite{RileyZeckhauser}) amounts to simply setting the \emph{monopoly price} for the item (the price that maximizes the expected revenue given the item's value distribution). 

For the case of a single unit-demand buyer with a product distribution over item values, 
\citet{ChenDPSY18} show that computing a revenue-optimal pricing is NP-hard, even for identical distributions with support size $3$ (but can be solved in polynomial time for distributions of support size $2$).
\citet{ChawlaHK07,ChawlaHMS10} give a constant factor approximation for the optimal pricing, which applies also with respect to the optimal randomized mechanism (i.e., pricing lotteries) by the observation that pricing lotteries cannot increase revenue by more than a factor of $4$ in the case of product distributions \cite{ChawlaMS15}.
\citet{CaiD11,CaiD15} give an additive PTAS for the case of bounded distributions, and also derive structural properties of the optimal solution for special cases.
Among other properties, they show that if the buyer's values are independently distributed according to MHR distributions, then constant approximation can be obtained with a single price (which can be efficiently computed). Moreover, if values are also identically distributed, then a single price yields near-optimal revenue. 

For the case of correlated distributions, \citet{BriestK11} show that the optimal pricing problem does not even admit a polynomial-time constant approximation. 
It has been also shown that, unlike product distributions, pricing lotteries over items can increase the revenue (beyond item pricing) by a factor of $\log(n)$ \cite{Thanassoulis04,BriestCKW10}. 

Finally, another popular fairly recent line of research that builds upon the Bayesian setting by having the seller only have partial information about the underlying distribution (but keeping the optimal auction for the underlying distribution as the benchmark) is that of revenue maximization from samples, where the seller is shown samples from the underlying distribution rather than the whole distribution~\cite{ColeRoughgarden14}. Within this context, for pricing for a unit-demand buyer see \cite{MorgensternR16}, and for revenue maximization for a unit-demand buyer see \cite{GonczarowskiW18}.

\section{Preliminaries}
\label{sec:prelim}
\subsection{Model}
\label{sub:prelim-model}

The main player in our model is a \emph{seller}, who has $n\geq 2$ items for sale to a single unit-demand buyer. The buyer has a \emph{valuation profile} $v=(v_1,\dots,v_n)$ where $v_j\geq 0$ denotes her value for item~$j$. The seller can set a \emph{pricing} $p$, i.e., a vector of item prices $(p_1,\dots,p_n)$. 
Given a pricing $p$, the buyer purchases a single item $j$ that maximizes her (quasi-linear) utility $v_j-p_j$,\footnote{Tie-breaking is discussed below.} 
or nothing if her utility from purchasing any item would be negative.

\paragraph{Problem instance.}
An instance of our model consists of $n$ marginal distributions $F_1,\dots,F_n$ for the $n$ item values $v_1,\dots,v_n$.
The distributions are over supports $V_1,\dots,V_n$ known as \emph{value spaces}.
If~$V_j$ is bounded, then we denote its maximum value by $v_j^{\max}$, and we denote its minimum value by~$v_j^{\min}$.  
Importantly, item values can be \emph{correlated}, as long as for every item $j$ the buyer's value $v_j$ is \emph{marginally} distributed according to $F_j$.\footnote{Being marginally distributed according to $F_j$ means the following: let $\Pr_{v\sim F}[v_j\leq x]$ be the probability that if we sample a valuation profile $v$, the value for item $j$ is at most $x$; then $F_j(x)=\Pr_{v\sim F}[v_j\leq x]$ for every $x\in V_j$.}	
Notice that the correlation is \emph{not} part of the instance but rather will be adversarially chosen. 

Given a value $v_j$, its quantile $q_j(v_j)$ is $F_j(v_j)=\Pr_{\nu \sim F_j}[\nu\le v_j]$ (so low quantiles correspond to low values and vice versa).
For every quantile $q_j\in[0,1]$ we define its value $v_j(q_j)$ to be $\min\{v\mid F_j(v)\ge q_j\}$ (notice that if $F_j$ is strictly increasing then it is \emph{invertible}, the inverse function $F_j^{-1}(\cdot)$ is well defined, and $v_j(q_j)$ is equal to the inverse $F_j^{-1}(q_j)$).

\paragraph{Compatible distributions.}
For a given instance with marginals $F_1,\dots,F_n$, a \emph{compatible distribution} $F$ is a joint distribution over valuation profiles $v$ such that the  marginals of $F$ for the individual item values coincide with $F_1,\dots,F_n$. 
A natural class of compatible distributions is that of \emph{perfect couplings}. 
In a perfect coupling, the value of any one of the items determines the values of all others. 
More formally, a distribution $F$ is a perfect coupling if for every item $j\in[n]$ there exists a \emph{coupling function}, a measure-preserving bijection $C_j:[0,1]\to[0,1]$, and a valuation profile is drawn from $F$ by randomly drawing $q\sim U[0,1]$, and taking the value of each item $j$ to be $v_j(q_j)$ where $q_j=C_j(q)$. 

One particular perfect coupling that plays a role in our results is the following. 

\begin{definition}
	The \emph{comonotonic distribution} $\com$ is the perfect coupling defined by $C_j(q)=q$ for every $j\in[n]$.
\end{definition}

The comonotonic distribution appears in the work of \cite{Carroll17} on correlation-robust pricing for an \emph{additive} buyer.\footnote{In that setting, as is shown there, when the marginals are regular distributions, the worst correlation from the seller's perspective (in a sense formalized below) is the comonotonic distribution.} Intuitively, this distribution is the compatible distribution in which values are ``as positively correlated as possible.'' For example, observe that if all the marginals are identical, the in every valuation profile drawn from $F$ all the values are identical. 

\subsection{The Max-Min Pricing Problem}
\label{sub:prelim-problem-def}

\paragraph{Objective.} 

For a given instance with marginals $F_1,\dots, F_n$, denote by $R(p,F)$ the seller's expected revenue from setting a pricing $p$ if the valuation profile $v$ is drawn from a compatible distribution~$F$: 
\begin{equation*}
R(p,F)=\mathbb{E}_{v\sim F}[p_{j^*(v,p)}],
\end{equation*}
where $j^*(v,p)$ is the item purchased by the buyer.
We note that the buyer has to break ties between items that give her the same utility.
We assume that tie-breaking between any two items depends only on the identities of the two items that give the same utility, the value of that utility, and the prices of the two items. The tie-breaking rule must also be consistent (no cycles). For example, breaking ties in favor of higher-priced items and then by index, or in favor of lower-priced items and then by index, are both allowed.

In the \emph{max-min pricing problem}, the goal is to find a pricing $p$ that maximizes the minimum, over all compatible distributions $F$, of the expected revenue $R(p, F)$. 

We introduce the following notation to make this formal: Let $R(p)$ be the \emph{robust revenue guarantee} of $p$, i.e., the worst expected revenue from $p$ over all compatible distributions:
\begin{equation*}
R(p)=\inf_{\text{compatible }F}R(p,F),
\end{equation*}
and let $R^*$ be the optimal robust revenue guarantee over all pricings:
\begin{equation*}
R^*=\sup_{p}R(p)=\sup_{p}\inf_{\text{compatible }F}R(p,F).
\end{equation*}

A pricing $p$ is \emph{$\alpha$-max-min optimal}, for $\alpha\ge 1$, if $R(p)\ge \frac{1}{\alpha}R^*$; if $\alpha=1$, then we say that $p$ is a max-min optimal pricing.
The max-min pricing problem is to find, for a given problem instance, an $\alpha$-max-min optimal pricing $p$ with $\alpha$ as close as possible to 1.

\paragraph{Zero-sum game and the Adversary's perspective.}

In the max-min pricing problem, the seller can be viewed as a player in a zero-sum game corresponding to the problem instance, in which the \emph{Adversary}'s strategy space is the space of all compatible distributions. The seller's payoff for ``playing'' a pricing $p$ against a compatible distribution $F$ is $R(p,F)$. 
The Adversary's goal is to choose $F$ that minimizes $R(p,F)$. We refer to a distribution achieving $\inf_{\text{compatible }F} R(p,F)$ (if it exists) as a \emph{best response of the Adversary} to the pricing $p$. (More generally one can consider a $\beta$-best response, which is a distribution $F$ achieving $R(p,F)\le \beta\cdot\inf_{\text{compatible }F} R(p,F)$.)
 
\paragraph{Simple solution classes.}

Arguably the two simplest possible pricing classes are the following.

\begin{definition}
	A pricing $p$ is a \emph{single price} if all prices but one are $\infty$.
\end{definition}

\begin{definition}
	A pricing $p$ is \emph{uniform} if all prices are equal.
\end{definition}

A single price $p$ has a particularly nice robustness property: it is \emph{correlation agnostic}. That is, its expected revenue is the same against \emph{any} compatible distribution. I.e., $R(p,F)=R(p,F')$ for every compatible $F,F'$. Such robustness is a recurring theme in the literature on robust mechanism design \cite{Carroll19}; in particular it makes the task of showing that $p$ is ($\alpha$-)max-min optimal much simpler.\footnote{E.g., it is sufficient to show a compatible distribution for which $p$ is the seller's $\alpha$-best response, similar to \cite{Carroll17}, or to show for every other pricing $p'$ a compatible distribution $F'$ such that $R(p,F')\ge \alpha R(p',F')$, similar to \cite{DuttingRT19}.\label{ftn:sufficient}} In comparison, uniform pricings do not enjoy the robustness property, however they ``dominate'' single prices in the following sense: one can naturally turn any single price $p$ into a uniform pricing $p'$ (by setting the prices of all items to be the same as that single price) such that $R(p,F)\le R(p',F)$ for every $F$.

\section{The Adversary's Best Response, and\texorpdfstring{\\}{ }Robust Revenue Guarantee Calculation}
\label{sec:adv-BR}
In this section, we will first present an algorithm that finds the exact  best response of the Adversary for any given pricing (and along with the best response, also the robust revenue guarantee of this pricing) when every marginal is a uniform distribution over a finite multi-sets of values, and all such multi-sets are of the same size. This algorithm runs in polynomial time in the size of the input (for explicitly given such multi-sets).
We will then show that for arbitrary finite distributions (even if the probabilities are not even rational numbers), we can still output the best response of the Adversary (and the robust revenue guarantee) using a slightly generalized version of this algorithm, and do so in polynomial time. Finally, we explain how to further use our algorithm to get arbitrarily close to the best response of the Adversary in the general case of arbitrary (not necessarily discrete) marginal distributions, where as we will show a precise best response may not exist.
Proofs omitted from this section appear in \cref{appx:proofs-alg}. 

\subsection{Perfect Couplings of Uniform Distributions over Multisets of Identical Sizes}\label{uniform-over-multisets}
In this section we assume that every marginal is a uniform distribution over a finite multi-set of values, where the multi-sets corresponding to the various marginals are all of the same size $d$.
(In \cref{sec:adv-BR-general} we will show that the algorithm that we will present in the current section can be tweaked to to work for any discrete distribution,
and still run in polynomial time.)

Given a pricing, we wish to find the worst distribution for those prices that is compatible with the marginals. We will show that there is a perfect coupling that minimizes the revenue over all compatible distributions.
 
To handle the possibility of the buyer not buying any item, we assume, without loss of generality, that one of the items is a special ``null item'' that always has value $0$ and will be priced at $0$, such that the buyer buying this null item corresponds to the buyer not buying any item. Other than that we will treat the null item as any other item (assume it has price 0 and that its value is distributed uniformly over a multi-set of all-$0$ values, and thus the corresponding utilities are also all $0$). 

Assume without loss of generality that items are ordered such that prices are nondecreasing, $p_1 \leq p_2\leq \cdots \leq p_n$.
Let $v_i$ be the vector of values of item $i$ sorted in nonincreasing order; i.e., $v_i^1 \geq \cdots \geq v_i^d$ (recall that by assumption, the value of item $i$ is drawn uniformly from the $d$ values $v_i^1,\ldots,v_i^d$).
Given the prices we can transform this vector to a vector of utilities that all have the same probability (each has probability $\nicefrac{1}{d}$) with $u_i^j=v_i^j-p_i$ for every item $i$ and index~$j\in \{1,\ldots,d\}$.
We thus obtain a vector 
$u_i$ of utilities of items  $i$ sorted in nonincreasing order; i.e., $u_i^1 \geq u_i^2 \geq \cdots \geq u_i^d$.

We note that the Adversary's best response may depend on the tie-breaking rule used by the buyer to choose among items that yield the same utility.
Our algorithm for the Adversary's best response will break ties in the same way as the buyer does. 
We say that $u_i^t$ is dominated by $u_j^r$, and denote this by $u_i^t \prec u_j^r$, if either $u_i^t < u_j^r$, or $u_i^t = u_j^r$ and the buyer when facing the choice between buying item $i$ and buying item $j$, either at utility $u=u_i^t = u_j^r$, breaks this tie in favor of item $j$. 

A \emph{perfect coupling} (or simply a \emph{coupling}) in this setting corresponds to a bijection for each $i$, from indices $\{1,2,\ldots,d\}$ to the multi-set of utilities (equivalently, to the the multi-set of values) of item $i$,
where to draw the utilities for the $n$ items, an index is drawn uniformly at random, and then the utility for each item is determined according to the bijection of that item, applied to this index.
We will think about the image of each index under all $n$ bijections as a \emph{chain}, coupling together $n$ elements in the multi-sets of utilities, one element for each item. (So to draw the utilities for the items, one of these chains is simply drawn uniformly at random.)
Hence, a (perfect) coupling can be described using $d$ chains that form a partition of all utilities, with each chain containing exactly one utility for every item (and every such $d$ chains describe a perfect coupling).
Given a chain $t$ in the coupling, we denote the utility of item $i$ in $t$ by $u_i(t)$.
A chain $t$ in which $u_j(t)\prec u_i(t)$ for all $j\neq i$ is said to be dominated by item $i$---this is the item that will be bought if chain $t$ is drawn. The expected revenue from a given perfect coupling is simply the weighted average of the prices of the items, each weighted proportionally to the number of chains in this coupling that it dominates.

\subsubsection{The Adversary's Algorithm}

The algorithm for the Adversary's best response is given in \cref{alg:adv-BR}. It receives as input item prices $p_1 \leq p_2 \leq \cdots \leq p_n$ and utility vectors  $u_1, \ldots, u_n$ for the $n$ items, with each $u_i$ sorted in nonincreasing order, and returns a coupling $c$. When the algorithm is run, $c$ is initially empty, and is augmented by chains (and has its chains modified at times) throughout the course of the algorithm. We abuse notation and write $u_i^k \in c$ if $u_i^k \in t$ for some chain $t \in c$ (and write~$u_i^k \not\in c$ otherwise).
\begin{algorithm}[ht]
	\caption{The Adversary's best-response algorithm; Input: utilities $u_1,\ldots,u_n$; prices $p_1 \leq \cdots \leq p_n$.}
	\label{alg:adv-BR}
	\begin{algorithmic}[1]
		\STATE $c \gets \emptyset$ 
		\FOR{$i=1, \ldots, n$}
		\FOR{$k = 1, \ldots, d$}\label[line]{maximize-begin}
		\IF{for every item $j>i$ there exists $u_j^{\ell} \notin c$ s.t. $u_j^{\ell} \prec u_i^k$}
		\STATE For every $j>i$, let $\ell_j =  \arg\max_{\ell}\{u_j^{\ell} \mid u_j^{\ell}\not\in c$ and $u_j^{\ell}\prec u_i^k\}$ 
		\COMMENT{break ties towards a lower index}
		\STATE For every $j<i$, let $\ell_j$ be an arbitrary index s.t. $u_j^{\ell_j} \not\in c$
		\STATE $c \gets \bigl(u_i^k,\{u_j^{\ell_j}\}_{j\neq i}\bigr)$ \label[line]{adv-BR-high}
		\ENDIF
		\ENDFOR\label[line]{maximize-end}
		\STATE Let $m=|c|$ and let $t_1, \ldots, t_m$ be the chains in $c$ sorted such that $u_{i+1}(t_1) \leq \cdots \leq u_{i+1}(t_m)$ 
		\FOR{$r=1,\ldots,m$}\label[line]{recouple-begin}
		\STATE $t_r \gets t_r \setminus \bigl\{u_{i+1}(t_r)\bigr\} \cup \{u_{i+1}^{d-r+1}\}$ \COMMENT{i.e., in chain $t_r$, replace $u_{i+1}(t_r)$ with the smallest $u_{i+1}^\ell\notin\cup_{j< r}t_j$}\label[line]{adv-BR-low}
		\ENDFOR\label[line]{recouple-end}
		\ENDFOR
		\RETURN $c$
	\end{algorithmic}
\end{algorithm}
The algorithm first attempts to build as many chains as possible that are dominated by item $1$ (Lines \labelcref{maximize-begin}--\labelcref{maximize-end} when $i=1$), then as many chains as possible that are dominated by item $2$ (Lines \labelcref{maximize-begin}--\labelcref{maximize-end} when $i=2$), etc. When building a chain dominated by a certain utility for item $1$, the algorithm attempts to use the highest possible utility for each higher-priced item $j$ that would still be dominated by that utility for item~$1$, in order to leave lower utilities for item $j$ to possibly be dominated in future chains by lower utilities for item $1$ or by utilities for some $i<j$. Just before turning to build chains dominated by this item $j$, though, the algorithm has a transition stage (Lines \labelcref{recouple-begin}--\labelcref{recouple-end}) that recouples all of the chains built so far to use the lowest, rather than highest, utilities for item $j$, since from that moment onward, a high utility for item $j$ is no longer a liability that we attempt to dominate by utilities for lower-priced items (to have those items cannibalize from item $j$), but rather an asset with which to attempt to dominate utilities for even higher-priced items (to have item $j$ cannibalize from those items).
\cref{fig:alg-illustration} illustrates an execution of the algorithm.
\begin{figure}[ht]
	\centering
	\includegraphics[scale=0.49]{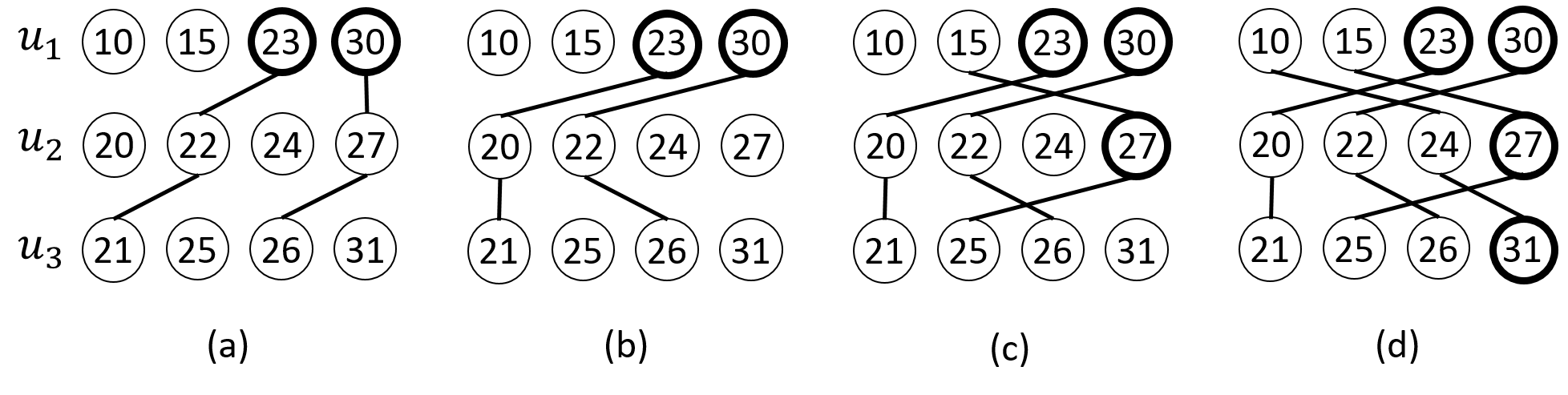}\vspace{-1em}
	\caption{An illustration of \cref{alg:adv-BR}. (a) Following Iteration $i=1$, just before the transition stage (i.e., following \cref{maximize-end}); (b) following the transition between Iteration $i=1$ and Iteration $i=2$ (i.e., following \cref{recouple-end}); (c) following Iteration $i=2$; (d) following Iteration $i=3$. Thick cycles signify the corresponding dominant utility in every chain. Note that without the transition stage between Iterations $i=1$ and $i=2$, we would have ended up with two chains dominated by item 1 and two chains dominated by item 3, and hence would have failed to minimizing the revenue.}
	\label{fig:alg-illustration}
\end{figure}

\subsubsection{Optimality of Algorithm~\ref{alg:adv-BR}}

\cref{alg:adv-BR} returns a (perfect) coupling that defines a distribution $F$ that is compatible with the marginals.  
{In this section we will show that no other perfect coupling defines a distribution that generates worse revenue. (In \cref{sec:adv-BR-general} we will show that no other distribution that is compatible with the marginals, whether or not induced by a perfect coupling, can generate any worse revenue.) Before we formally state this claim as \cref{cor:alg}, let us introduce some notation.}

Let $K_i$ be the maximum number of chains dominated by item $i$ in any coupling.
Let $K_{[i]}$ be the maximum number of chains dominated by one of items $1, \ldots, i$ in any coupling.
Given a coupling $c$, let $k_i(c)$ (respectively, $k_{[i]}(c)$) be the number of chains in $c$ dominated by item $i$ (resp., by one of items $1, \ldots, i$).
A coupling $c$ such that $k_{[i]}(c) = K_{[i]}$ is said to \emph{realize} $K_{[i]}$.
As standard, we denote the set $\{1, \ldots, n\}$ by $[n]$.
Our main result for this section is:

\begin{theorem}
	\label{cor:alg}
	The coupling output by \cref{alg:adv-BR} simultaneously realizes $K_{[i]}$ for every $i \in [n]$.
\end{theorem}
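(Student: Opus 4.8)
The inequality $k_{[i]}(c)\le K_{[i]}$ for the output coupling $c$ is immediate from the definition of $K_{[i]}$, so the whole content of the theorem is the reverse inequality, and that it holds for all $i$ at once. The plan is an induction on the outer-loop index $i$ that tracks the partial coupling $c^{(i)}$ present at the \emph{end} of iteration $i$ (i.e., right after the transition stage \crefrange{recouple-begin}{recouple-end} for that $i$), maintaining the invariant: (I1) every chain of $c^{(i)}$ is dominated by some item in $\{1,\dots,i\}$; (I2) for $i<n$, the item-$(i+1)$ utilities occurring in $c^{(i)}$ are exactly its $|c^{(i)}|$ smallest, $u_{i+1}^{d-|c^{(i)}|+1},\dots,u_{i+1}^{d}$; and (I3) $|c^{(i)}|=K_{[i]}$.

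Invariants (I1) and (I2) are mostly bookkeeping. (I2) is immediate from \cref{adv-BR-low}. For (I1): a chain created on \cref{adv-BR-high} during an iteration $i'\le i$ has item-$j$ utility $\prec$ its item-$i'$ utility for every $j>i'$ (the test on the preceding line), so its dominant lies in $\{1,\dots,i'\}\subseteq\{1,\dots,i\}$ — it may in fact be some $j<i'$, since the indices $\ell_j$ for $j<i'$ are chosen arbitrarily, but that only affects which individual count $k_j$ it contributes to, never its membership in the prefix. One then checks that (I1) survives all later modifications: subsequent transition stages only replace the item-$(i''+1)$ utility of an already-built chain ($i''\ge i$) by a weakly smaller value, and lowering one entry of a chain cannot dethrone its dominant item — here one must verify the relevant comparison respects the tie-breaking order $\prec$, not merely $\le$. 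Hence, once (I3) is proved, the $K_{[i]}$ chains of $c^{(i)}$ persist, still prefix-$\{1,\dots,i\}$-dominated, into the final $c$, yielding $k_{[i]}(c)\ge K_{[i]}$; the case $i=n$ also shows $c$ is a full coupling, since $K_{[n]}=d$.

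The crux is the inductive step for (I3): from $|c^{(i-1)}|=K_{[i-1]}$ deduce $|c^{(i)}|=K_{[i]}$. The bound $\le K_{[i]}$ follows from the trivial bound applied to any completion of $c^{(i)}$ to a full coupling (using (I1)). For $\ge K_{[i]}$, I would start from a coupling $c^{*}$ with $k_{[i]}(c^{*})=K_{[i]}$ and apply a sequence of exchanges that do not increase revenue (equivalently, do not decrease any prefix count) to bring $c^{*}$ into a canonical form in which (a) $K_{[i-1]}$ of its $\{1,\dots,i\}$-dominated chains coincide with the chains of $c^{(i-1)}$ — invariant (I2) at level $i-1$ is what makes this reconciliation possible, since it pins the item-$i$ utilities of those chains to the $K_{[i-1]}$ smallest, a configuration one can also impose on $c^{*}$ without loss of generality — and (b) the remaining $K_{[i]}-K_{[i-1]}$ good chains are dominated by the item-$i$ slot, using item $i$'s largest utilities, with each higher-priced item $j>i$ contributing in that chain a utility $\prec$ that slot. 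Form (b) certifies that the test guarding \cref{adv-BR-high} is passable $K_{[i]}-K_{[i-1]}$ times during iteration $i$; a final exchange argument — whose point is that the greedy's choice on \cref{adv-BR-high} of the \emph{largest} feasible unused utility for each $j>i$, followed by recoupling the cheaper items to their smallest utilities, is exactly the choice that never forecloses a later chain — shows the algorithm actually builds at least that many, so $|c^{(i)}|\ge K_{[i-1]}+(K_{[i]}-K_{[i-1]})=K_{[i]}$. The base case $i=1$ is the instance of (b) with no prior chains.

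I expect the exchange argument in the inductive step to be the main obstacle. Two features complicate what is otherwise a standard ``greedy-is-optimal'' pattern: first, the buyer's arbitrary-but-consistent tie-breaking forces every monotonicity/rearrangement comparison to be carried out in the order $\prec$ and re-verified under each exchange; second, a chain built while processing item $i$ need not end up dominated by item $i$, so the invariant — and the comparison with $c^{*}$ — must be phrased through the prefix counts $k_{[i]}$ rather than the individual $k_i$ (which is also why the theorem itself is stated in terms of $K_{[i]}$). Establishing that an arbitrary prefix-optimal coupling can be reshaped, with no revenue loss, to contain the algorithm's committed prefix $c^{(i-1)}$ verbatim and to be extremal on item $i$ is the technical heart, and is where a careful matroid-exchange / Hall-type feasibility analysis is needed.
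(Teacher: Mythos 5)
Your proposal follows essentially the same skeleton as the paper's: you identify (i) that the greedy algorithm's chain-by-chain choices are locally revenue-optimal, and (ii) that one needs an exchange argument to show the greedy's early commitments can always be extended to a globally (prefix-)optimal coupling, and you correctly flag (ii) as the technical heart, just as the paper relegates it to the sift\&lift argument (\cref{alg:sift-lift}, \cref{lem:add-1}). The difference is one of packaging, and it matters. The paper factors the proof into two independent statements that interact only through \emph{counts}: \cref{lem:residual-optimality} says the algorithm maximizes $k_i$ among all couplings with the same counts $k_j$, $j<i$, and \cref{prop:sim-max} says \emph{some} coupling (built with no reference to the algorithm) simultaneously realizes all $K_{[j]}$; the theorem then falls out by a two-line contradiction at the smallest failing index, because a coupling realizing every $K_{[j]}$ is automatically a valid comparison point for the residual-optimality lemma. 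Your inductive step instead asks for a stronger, entangled claim — that a prefix-optimal $c^{*}$ can be reshaped, with no loss, to contain the algorithm's committed state $c^{(i-1)}$ \emph{verbatim}, including the specific item-$j$ utilities for $j>i$ chosen greedily in earlier iterations. That is more than you need: the residual-optimality lemma shows it suffices for $c^{*}$ to match the algorithm on the counts $k_1,\dots,k_{i-1}$, not on the identities of coupled utilities; your verbatim-matching claim is plausibly true but forces you to re-derive the ``largest-feasible choice never forecloses later chains'' property inside the reshaping argument, which is exactly the content of residual-optimality, so you end up proving the same two facts but in a tangle rather than modularly. Your invariants (I1) and (I2) are correct (and your check that the transition stage only weakly lowers item-$(i''{+}1)$ entries and hence cannot dethrone a chain's dominant is the right observation, and is indeed $\prec$-safe because a strictly smaller value is strictly $\prec$-smaller), so the bookkeeping is sound; the only substantive divergence from the paper is the choice to merge the two lemmas into one harder exchange claim.
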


The proof of \cref{cor:alg} relies on \cref{lem:residual-optimality,prop:sim-max} stated below.

\begin{lemma}
	\label{lem:residual-optimality}
	The coupling $c$ output by \cref{alg:adv-BR} satisfies: (1) $k_1(c)=K_1$, (2) for every $i \geq 2$, $k_i(c) = \max\bigl\{k_i(c') ~\big|~$c'$~\text{is a coupling}\And k_j(c')=k_j(c) \quad \forall j<i\bigr\}$; i.e., $c$ maximizes the number of chains dominated by item $i$ fixing the number of chains dominated by items $j<i$. 
\end{lemma}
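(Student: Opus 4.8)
The plan is to follow \cref{alg:adv-BR} phase by phase and to compare its output $c$ against an arbitrary competitor coupling via exchange arguments. I would first record three structural invariants, proved together by induction on the outer‑loop index $i$. (I1) After iteration $i$, every chain of $c$ is dominated by exactly one item, and the chains dominated by item $i$ are precisely those created by the maximize‑loop (Lines~\labelcref{maximize-begin}--\labelcref{maximize-end}) of iteration $i$. (I2) The transition stage (Lines~\labelcref{recouple-begin}--\labelcref{recouple-end}) of iteration $i$ never changes the dominating item of any chain --- it only lowers the item‑$(i+1)$ coordinate of chains not dominated by item $i+1$, and lowering a non‑dominating coordinate cannot change who dominates --- and afterwards the chains of $c$ occupy exactly the $|c|$ smallest utilities of item $i+1$. (I3) At the start of iteration $i$ the unused utilities of item $i$ are its $d-k_{[i-1]}(c)$ largest (by (I2) applied to iteration $i-1$), and whenever the maximize‑loop of iteration $i$ assigns an ``arbitrary unused'' utility $u_j^{\ell_j}$ of a cheaper item $j<i$ to a new chain with pivot $u_i^k$, one has $u_j^{\ell_j}\prec u_i^k$, so that chain is genuinely dominated by item~$i$. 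The last part of (I3) is the load‑bearing claim: the natural auxiliary invariant is that at the start of iteration $i$ every unused utility of every cheaper item $j<i$ is dominated by every utility of item $i$ that iteration $i$ may still use as a pivot, and I expect maintaining this across the phases \emph{and} the recoupling steps to be the most delicate part of the whole argument.

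The engine of both parts of the lemma is the following single‑phase sublemma. Consider items $i<i+1<\dots<n$ (item $i$ cheapest among them), where item $i$ may use only a prescribed top‑set of its utilities and the others are unconstrained; then the greedy that scans the available utilities of item $i$ from highest to lowest and, for each, builds a chain dominated by item $i$ whenever every costlier item still has an unused utility dominated by the current pivot --- taking the highest such utility for each --- produces a collection of chains dominated by item $i$ of \emph{maximum} size. To prove it I would model such a collection as a choice of a subset $S$ of item‑$i$ utilities together with, for each $j>i$, an injection of $S$ into the utilities of item $j$ dominated by the corresponding item‑$i$ utility. Since the set of utilities of item $j$ dominated by $u_i^k$ is a suffix of item $j$'s sorted utility vector that only shrinks as $u_i^k$ decreases, Hall's theorem reduces injectability into item $j$ to an upper bound on the size of every suffix of $S$, and injectability into all $j$ simultaneously is governed by the coordinatewise minima of these bounds --- a constraint of the same form. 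Hence the feasible sets $S$ form a nested (shifted) matroid, all of whose maximal members share the size $K_i$, and a short exchange argument (``always grab the highest available dominated utility'' keeps the partial injection maximally extendable, which for nested neighborhoods is standard) shows the greedy builds a maximal feasible $S$, attaining $K_i$.

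Part~(1) is then immediate: the maximize‑loop of iteration $1$ is exactly this greedy with item~$1$ unrestricted, so it builds $K_1$ chains dominated by item~$1$, and by (I2) the transition stage leaves $k_1(c)$ unchanged; hence $k_1(c)=K_1$. For Part~(2) I induct on $i$. The inequality ``$\ge$'' is trivial (take $c'=c$), so fix a coupling $c^*$ with $k_j(c^*)=k_j(c)$ for all $j<i$ and show $k_i(c^*)\le k_i(c)$. First normalize $c^*$ without changing any $k_j(c^*)$ with $j\le i$: by a swap argument --- lowering the item‑$i$ coordinate of a chain dominated by an item $<i$ preserves all dominating items, using the nestedness of the domination relation --- one may assume the $m:=k_{[i-1]}(c)$ chains of $c^*$ dominated by items $<i$ occupy the $m$ smallest utilities of item~$i$, mirroring what the transition stages accomplish in the algorithm. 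Then the item‑$i$‑dominated chains of $c^*$ use item‑$i$ utilities lying in the top $d-m$, which by (I3) is exactly the set left available to iteration~$i$ for item~$i$; and, again by (I3) (their remaining utilities are automatically dominated, and a counting check shows there are always enough), the cheaper items $<i$ never obstruct iteration~$i$. So maximizing the number of chains dominated by item~$i$ at this point is precisely the problem solved by the sublemma, whence the maximize‑loop of iteration~$i$ builds at least as many item‑$i$‑dominated chains as $c^*$ has, i.e.\ $k_i(c)\ge k_i(c^*)$.

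The main obstacle I anticipate is exactly the cheaper‑items bookkeeping: establishing the auxiliary invariant behind (I3) --- so that the ``arbitrary unused'' choices the algorithm makes for cheap items are always dominated by the current pivot, which is what makes (I1) and the per‑phase counting correct --- and verifying that the normalization swap on $c^*$ can always be performed without disturbing the prefix counts; all of this while threading the tie‑breaking relation $\prec$ and the ``break ties toward the lower index'' convention consistently through every matching and every exchange. By comparison, the matroid/greedy sublemma and Part~(1) should be comparatively routine.
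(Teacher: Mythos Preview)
Your plan is a detailed formalization of what the paper dispatches in two sentences (``by design'' for~(1); for~(2), ``it is best if the coupled utilities of items $>i$ are largest and those of item~$i$ are smallest --- this is exactly what Algorithm~1 does''). The invariants (I1)--(I3), the Hall/matroid sublemma for the per-phase greedy, and the swap normalization of $c^*$ are all the right ingredients and go well beyond what the paper writes down.

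There is one real gap in Part~(2), sitting exactly where the paper also hand-waves. Your normalization aligns $c^*$ with the algorithm on item~$i$ (both leave the top $d-m$ utilities available as pivots), but you do nothing analogous for items $j>i$: at the start of iteration~$i$ the algorithm has already consumed, in earlier maximize-loops, some specific set of $m$ utilities of each such $j$, while $c^*$ may have consumed a \emph{different} set of $m$. Your sublemma is stated with ``the others unconstrained'', so it only furnishes a common upper bound under which both $k_i(c)$ and $k_i(c^*)$ sit; it does not by itself deliver $k_i(c)\ge k_i(c^*)$, because the algorithm's iteration-$i$ greedy runs on its own restricted target sets rather than on the full utility vectors. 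What closes the gap is precisely the content of the paper's bald assertion ``this is exactly what Algorithm~1 does'': since every earlier maximize-loop grabs, for each $j>i$, the \emph{highest} still-available dominated utility, the same nested-neighborhood exchange you already invoke inside the sublemma shows that for every threshold $t$ the algorithm has consumed at least as many item-$j$ utilities $\succeq t$ as any competitor with matching prefix counts --- equivalently, the algorithm's remaining target set for each $j>i$ is pointwise at least as easy to dominate as the competitor's, so its restricted greedy in fact attains the unconstrained optimum of your sublemma. You need this companion exchange lemma (one per $j>i$, maintained across iterations) alongside your (I2) for item~$i$; once it is in place, your chain of inequalities closes and the proof is complete.
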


\begin{proposition}
	\label{prop:sim-max}
	There exists a coupling that simultaneously realizes $K_{[i]}$ for every $i \in [n]$.
\end{proposition}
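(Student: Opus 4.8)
The plan is to reduce \cref{prop:sim-max} to a single \emph{merge} statement and then prove that statement by an exchange argument. The merge statement is: \emph{for any two couplings $c$ and $c'$ there is a coupling $c''$ with $k_{[i]}(c'')\ge\max\{k_{[i]}(c),k_{[i]}(c')\}$ for every $i\in[n]$} (equivalently: the finite set of vectors $\{(k_{[1]}(c),\dots,k_{[n]}(c)):c\text{ a coupling}\}$ is closed under coordinatewise maximum). Granting this, \cref{prop:sim-max} follows immediately. Pick, among the finitely many couplings, one $c^\star$ maximizing $\sum_{i=1}^n k_{[i]}(c)$. If $k_{[i_0]}(c^\star)<K_{[i_0]}$ for some $i_0$, let $c'$ realize $K_{[i_0]}$; merging $c^\star$ with $c'$ gives a coupling $c''$ with $k_{[i]}(c'')\ge k_{[i]}(c^\star)$ for all $i$ and $k_{[i_0]}(c'')\ge k_{[i_0]}(c')=K_{[i_0]}>k_{[i_0]}(c^\star)$, so $\sum_i k_{[i]}(c'')>\sum_i k_{[i]}(c^\star)$, contradicting the choice of $c^\star$. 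Hence $c^\star$ realizes $K_{[i]}$ for every $i$.

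To prove the merge statement I would argue by repeated local improvement, reducing it further to the following claim $(\star)$: \emph{whenever $k_{[i_0]}(c)<k_{[i_0]}(c')$, one can transform $c$ into a coupling $\tilde c$ with $k_{[i]}(\tilde c)\ge k_{[i]}(c)$ for all $i$ and $k_{[i_0]}(\tilde c)>k_{[i_0]}(c)$}. Given $(\star)$, fix $c'$ and repeatedly apply it to whatever prefix $i_0$ is currently deficient in the running coupling. The potential $\sum_i\min\{k_{[i]}(c),k_{[i]}(c')\}$ strictly increases at each step (by monotonicity of the $\min$ in $c$'s coordinates together with the strict gain at $i_0$) and is bounded above by $\sum_i k_{[i]}(c')$, so after finitely many steps we reach a coupling that coordinatewise dominates $c'$; and since every step only weakly increased each $k_{[i]}$, this coupling also coordinatewise dominates the original $c$, so it serves as the desired $c''$.

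The heart of the argument, and the step I expect to be the main obstacle, is $(\star)$: converting the \emph{global} deficit $k_{[i_0]}(c)<k_{[i_0]}(c')$ into a \emph{local} modification of $c$ that harms no prefix count. The guiding intuition is benign --- reassigning a chain's winning item to a cheaper item only moves that chain \emph{into} additional prefix counts and never out of any --- but carrying it out is delicate, since making an item $a\le i_0$ win a chain it currently loses may force a reshuffle of $a$'s (and other items') utilities across chains, cascading into a change of winner for many chains at once. I would control the cascade by first reformulating $K_{[i_0]}$ as the maximum number of pairwise-disjoint \emph{domination certificates}: a choice of a winning utility $u_j^k$ with $j\le i_0$ together with, for each other item $j'$, a losing utility $u_{j'}^{k'}\prec u_j^k$, all chosen utilities being distinct. (Any leftover utilities complete to a valid coupling by an arbitrary per-item bijection onto the leftover chains, and accidental extra domination among the completion chains only helps, so this maximum indeed equals $K_{[i_0]}$.) The crucial structural feature is that, once the winner $u_j^k$ is fixed, the admissible losers from item $j'$ form a \emph{suffix} of that item's sorted utility list, i.e.\ $\{u_{j'}^\ell:u_{j'}^\ell\prec u_j^k\}$ is downward closed. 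This staircase structure is exactly what makes an augmenting-path exchange go through: using the certificates underlying $c'$ as a witness that one more $[i_0]$-dominated chain is possible, I would locate an alternating structure relative to $c$'s certificates, reroute it to gain that chain, and then verify --- using the suffix property together with the consistency of the tie-breaking order $\prec$ --- that after the reroute every other chain is still won by an item of index no larger than before, so no $k_{[i]}$ drops. (A fallback for $(\star)$ is induction on $n$ via contracting the cheapest or the most expensive item and relating its $K_{[i]}$ values to those of the contracted instance; this works but needs comparably careful bookkeeping about how the extreme item steals or cedes a win.)

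For context, this is exactly what is needed downstream: combined with \cref{lem:residual-optimality}, \cref{prop:sim-max} yields \cref{cor:alg} by a short induction showing that the output $a$ of \cref{alg:adv-BR} satisfies $k_i(a)=k_i(c^\star)$ for a simultaneous realizer $c^\star$ --- the base case $k_1(a)=K_1=k_1(c^\star)$ is part~(1) of \cref{lem:residual-optimality}, and the inductive step applies part~(2) with $c^\star$ as a feasible comparison coupling.
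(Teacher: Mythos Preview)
Your reduction to the merge lemma and then to claim $(\star)$ is clean and correct, and this lattice-style framing is a genuinely different route from the paper's. However, the entire burden now rests on $(\star)$, and your sketch does not discharge it. The assertion that ``after the reroute every other chain is still won by an item of index no larger than before'' is precisely the nontrivial part: an augmenting path that swaps utilities between chains can, a priori, hand a chain currently won by item $a$ over to an item $b>a$, which would drop $k_{[i]}$ for every $a\le i<b$. Your suffix observation constrains which \emph{losers} are compatible with a fixed winner, but it does not by itself control what happens to the winner of a chain whose losers get swapped out from under it. You have not defined the graph in which the alternating structure lives, nor argued why reroutes along it are winner-monotone; as written, $(\star)$ is a plan rather than a proof, and the fallback you mention (induction by contracting an extreme item) has the same unfilled bookkeeping hole.

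The paper takes a more concrete, constructive route and avoids the full strength of $(\star)$. It does not start from an arbitrary $c$ and improve it; instead, for the base case it builds a canonical partial coupling from scratch --- taking the $k'_1$ top utilities of item~$1$ and the $K_{[2]}-k'_1$ top utilities of item~$2$, sorting them, and greedily pairing each with the highest still-dominated utility of every item $j\ge 3$ --- and then applies an explicit \emph{sift\&lift} procedure (\cref{alg:sift-lift}) that removes the bottom item-$2$-rooted chain and re-greedily lifts the subsequent item-$1$-rooted chains. The crux is \cref{lem:add-1}: after sift\&lift, $u_i^d$ is uncoupled and $u_i^d\prec u_1^{k'_1+1}$ for every $i\ge 3$, so one more item-$1$-rooted chain can be added without reducing $k_{[2]}$, contradicting maximality of $k'_1$. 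The general statement is then obtained by a minimal-counterexample induction, treating items $1,\dots,x$ as ``item~$1$'' and items $x+1,\dots,\ell$ as ``item~$2$''. Compared to your $(\star)$, this is narrower (it maintains only one additional prefix count per step), but the exchange is explicit and the winner-monotonicity concern is dissolved by construction: only the bottom item-$2$ chain is sacrificed, and the lifted chains remain item-$1$ chains throughout.
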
 

The combination of \cref{prop:sim-max,lem:residual-optimality} implies \cref{cor:alg}, showing that no coupling defines a distribution that generates worse revenue than that defined by the coupling output by \cref{alg:adv-BR}. We now establish the proof of \cref{prop:sim-max}. The remainder of the proofs are relegated to \cref{appx:proofs-alg}.

\begin{proof}[Proof of \cref{prop:sim-max}]
We first prove that there exists a coupling that simultaneously realizes $K_{[1]}$ and $K_{[2]}$, and then show how to generalize this to any prefix. 
Suppose toward contradiction that for every coupling $c$ that realizes $K_{[2]}$ it is the case that $k_1(c) < K_1$. 
Let $k'_1 = \max\bigl\{k_1(c) ~\big|~ c \mbox{ realizes } K_{[2]}\bigr\}$.

Let $u_{1,2}$ be the joint vector of utilities of items 1 and 2, sorted in nondecreasing order ($u_{1,2}$ is of length $2d$).
We use the term $k$ \emph{top utilities} of a utility vector to refer the $k$ highest utilities in the vector, breaking ties in favor of lower indexes. 

Let $S$ be the set of the top $k'_1$ utilities of item 1 and the top $K_{[2]}-k'_1$ utilities of item 2 in $u_{1,2}$. 
Sort the utilities in $S$ in nonincreasing order and couple them (in this order) with the highest-possible utilities of items $3, \ldots, n$ into disjoint chains.
This gives $K_{[2]}$ disjoint chains, partitioned into those \emph{rooted at item 1} and those \emph{rooted at item 2}.
(Note that every chain rooted at item 1 can be augmented with a suitable utility of item 2 such that the obtained chain is dominated by item 1, and analogously for chains rooted at item 2, since at most $d$ utilities of $u_{1,2}$ are coupled and all other utilities are smaller than the corresponding coupled ones.)

We will prove that the lowest chain rooted at item 2 can be replaced by a new chain rooted at utility $u_1^{k'_1+1}$, still realizing $K_{[2]}$, thus
contradicting the maximality of $k'_1$. 

\begin{figure}
	\centering
	\includegraphics[scale=0.54]{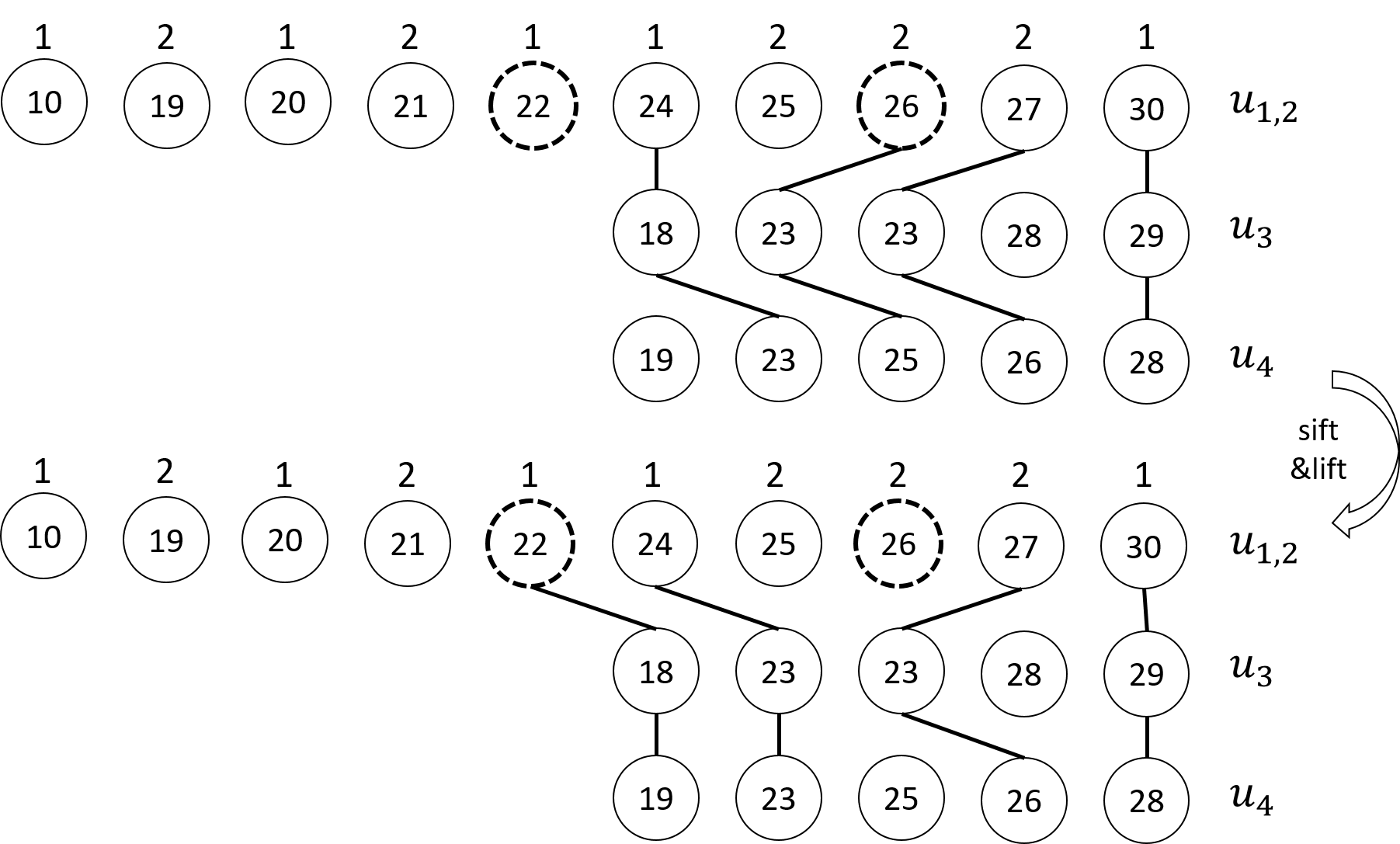}
	\caption{An illustration of the sift\&lift process. $d=5$. In the top coupling $c$, we have $k_1(c)=2$ and $k_2(c)=2$ (hence $k_{[2]}(c)=4$). In the bottom coupling $c'$ (following the sift\&lift process, and the consequent addition of a chain dominated by item 1), $k_1(c')=3$ and $k_2(c')=1$ (hence still $k_{[2]}(c')=4$). Dashed circles signify utilities whose status changed: a chain rooted at a utility of item 2 (and hence dominated by item 2) was removed, enabling (after the lifting process) the creation of an additional chain rooted at a utility of item 1 (and hence dominated by item 1).}
	\label{fig:remove-n-shift}
\end{figure}

We will now define a process that we call \emph{sift\&lift}, which removes the chain rooted at the lowest utility of item 2 ($u_2^{K_{[2]}-k'_1}$), and ``lifts" all subsequent chains rooted at item 1 (i.e., whenever possible, uses newly available higher utilities of items $3, \ldots, n$). 
The sift\&lift process is formally specified in \cref{alg:sift-lift} in \cref{appx:sift-lift}, and an illustration is given in \cref{fig:remove-n-shift}.
The following lemma, whose proof we relegate to \cref{appx:proofs-alg}, shows that after the sift\&lift process, one more chain dominated by item 1 can be added to the coupling. 

\begin{lemma}
	After removing the chain rooted at $u_2^{K_{[2]}-k'_1}$ and lifting the subsequent chains rooted at item 1, for every $i \ge 3$ it is the case that $u_i^d$ is uncoupled and $u_i^d \prec u_1^{k'_1+1}$.
\label{lem:add-1}
\end{lemma}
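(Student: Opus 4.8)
The plan is to split \cref{lem:add-1} into its two assertions: ``$u_i^d \prec u_1^{k'_1+1}$,'' which I would obtain from a Hall-type counting argument using only $k'_1 < K_1$, and ``$u_i^d$ is uncoupled (after sift\&lift),'' which I would obtain by tracking the greedy construction through the sift\&lift procedure (\cref{alg:sift-lift}) with a loop invariant.

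For the first assertion, fix $i \ge 3$. Since we are in the case $k'_1 < K_1$, there is a coupling with $K_1 \ge k'_1 + 1$ chains dominated by item $1$, and by the standard exchange argument (whenever the dominating item-$1$ utility of such a chain is not among the top $K_1$ utilities of item $1$, swap it upward) we may assume the dominating item-$1$ utilities of these chains are exactly $u_1^1, \ldots, u_1^{K_1}$. Restricting this coupling to coordinate $i$ gives an injection sending each $u_1^j$ to the item-$i$ utility of its chain, which is $\prec u_1^j$. The necessity direction of Hall's theorem applied to the root set $\{u_1^{k'_1+1}, \ldots, u_1^{K_1}\}$ --- whose common ``neighborhood'' is precisely the set of item-$i$ utilities $\prec u_1^{k'_1+1}$, since $u_1^{k'_1+1}$ is its highest element --- then yields at least $K_1 - k'_1 \ge 1$ item-$i$ utilities $\prec u_1^{k'_1+1}$. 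The predicate ``$\cdot \prec u_1^{k'_1+1}$'' holds on a suffix of the sorted list $u_i^1 \ge \cdots \ge u_i^d$ (monotonicity of $\prec$, using only that the buyer's tie-breaking rule is consistent), and a nonempty suffix contains $u_i^d$; hence $u_i^d \prec u_1^{k'_1+1}$.

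For the second assertion, I would fix $i \ge 3$ and track the set of coupled item-$i$ utilities as sift\&lift runs. Two structural facts set up the induction. (i) The construction builds its $K_{[2]}$ chains by processing roots in nonincreasing order, always assigning each later item its highest available utility; hence the chains rooted strictly above $u_2^{K_{[2]}-k'_1}$ already carry the greedy item-$i$ utilities for the prefix of roots preceding $u_2^{K_{[2]}-k'_1}$, and since sift\&lift leaves those chains untouched, deleting the chain $C^*$ rooted at $u_2^{K_{[2]}-k'_1}$ and lifting the chains rooted at the lower item-$1$ utilities is exactly a re-run of the greedy assignment on the suffix of roots with this prefix fixed. (ii) Each lifting step replaces an item-$i$ utility in a chain only by a strictly higher one, so no chain can acquire the minimum $u_i^d$ during sift\&lift. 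The invariant I would maintain down the lifted suffix is: after the $r$-th lifted chain has been handled, the item-$i$ utilities held by the first $r$ lifted chains together with all the fixed prefix chains are the highest possible compatible with their roots, and $u_i^d$ is not among them. The base case is the state right after $C^*$ is removed; the inductive step uses (i)--(ii) to show that the next lifted chain claims a utility freed by the cascade that is still strictly above $u_i^d$ --- this is where $u_i^d \prec u_1^{k'_1+1}$ and the fact that every remaining item-$1$ root is at least as high as $u_1^{k'_1+1}$ get used, so the cascade never ``bottoms out'' at $u_i^d$.

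The main obstacle is exactly this inductive step: one must identify, for each lifted chain, which item-$i$ utility it releases, verify it is always strictly above $u_i^d$, and rule out the degenerate possibility that the cascade forces some chain onto $u_i^d$ --- in particular separating the harmless case where $C^*$ itself was using $u_i^d$ from the case where $u_i^d$ sat in a chain that then gets lifted. I expect this to reduce to a careful bookkeeping induction over the lifted chains, leaning on the slack in Hall's condition afforded by $K_1 > k'_1$ (there are at least $K_1 - k'_1 \ge 1$ item-$i$ utilities below $u_1^{k'_1+1}$, i.e., one more than the single new item-$1$-rooted chain --- the one added after the lemma --- will consume) together with the monotonicity in (ii).
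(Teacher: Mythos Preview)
Your handling of $u_i^d \prec u_1^{k'_1+1}$ is correct (the paper's one-line contraposition is shorter, but yours works). The second assertion is where there is a real gap. Your observation (ii) shows that no chain can \emph{newly acquire} $u_i^d$ during sift\&lift, since re-greedying only moves a chain's item-$i$ index weakly down. But this does not address the case where the bottom chain (rooted at $u_1^{k'_1}$) already held $u_i^d$ before sift\&lift and is simply not lifted --- re-greedying it returns the same index $d$. Your two cases (``$C^*$ itself was using $u_i^d$'' and ``$u_i^d$ sat in a chain that then gets lifted'') omit exactly this third possibility, and your stated slack --- that $u_i^d \prec u_1^{k'_1+1}$ and every remaining root is $\succeq u_1^{k'_1+1}$ --- points the wrong way: it only confirms that $u_i^d$ is a \emph{legal} match for $u_1^{k'_1}$, not that a better one is available. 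Knowing there is at least one item-$i$ utility below $u_1^{k'_1+1}$ is not enough when that utility is $u_i^d$ itself.

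The paper attacks precisely this stuck case, by contradiction rather than forward induction. If $u_i^d$ remains coupled, the bottom chain was not lifted; let $u_1^r$ be the \emph{highest} non-lifted item-$1$ root below the removed chain, with item-$i$ index $m$. Because every root strictly above $u_1^r$ in that suffix \emph{was} lifted (or was the removed chain), index $m-1$ is vacant after sift\&lift, and greedy's choice of $m$ for $u_1^r$ then forces $u_i^{m-1} \not\prec u_1^r$. That single inequality caps the number of item-$i$ utilities that can be injectively dominated by $u_1^r, u_1^{r+1}, \ldots$, and together with the trivial bound for $u_1^1,\ldots,u_1^{r-1}$ gives $K_1 \le k'_1$ --- the contradiction. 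The pivot from ``bottom chain not lifted'' to ``highest non-lifted root,'' which converts the failure to lift into a Hall-type deficiency for item $1$, is the structural step your loop invariant is missing.
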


\cref{lem:add-1} implies that one more chain rooted at item 1 can be added, at the expense of the bottom chain rooted at item 2. Thus, the resulting coupling contains $k'_1+1$ chains dominated by item 1, and $K_{[2]}$ chains dominated by one of items 1,2, contradicting the maximality of $k'_1$. 
We conclude that there exists a coupling that simultaneously realizes $K_{[1]}$ and $K_{[2]}$, as desired.

We next show how to extend the proof to any $i \in [n]$. 
Suppose toward contradiction this is false. 
Let $\ell \in [n]$ be the smallest value for which this is false, and let $x \in [\ell]$ be the smallest value such that $K_{[\ell]}$ cannot be simultaneously realized with $K_{[1]}, \ldots, K_{[x]}$.

Repeat the same process as in the case of items 1,2, with items $1,\ldots,x$ in the role of item 1, and items $x+1,\ldots,\ell$ in the role of items 2.
By the same argument as before, one can increase the number of chains dominated by one of items $1,\ldots,x$ without affecting any $k_{[j]}$ for $j \in [\ell]$, contradicting the definition of $x$.
\end{proof}

\subsection{General Discrete Distributions, Computational Efficiency,\texorpdfstring{\\}{ }and Further Extensions}
\label{sec:adv-BR-general}

Recall that there are three gaps between \cref{cor:alg,thm:alg-is-opt}, which we will prove in this appendix: the first gap is that \cref{alg:adv-BR}, as stated, works only for marginals that are uniform over multisets (implying applicability also for discrete marginals whose probabilities for the various values are rational numbers, but not irrational ones), the second gap is that it only guarantees that the coupling that we find generates the worst revenue of any coupling rather the worst revenue of any correlated distribution,\footnote{This is possibly the most subtle of the three gaps. To better understand it, note that for $n=2$ items, the Birkhoff--von~Neumann theorem tells us that any distribution over pairs of utilities of the two items is a convex combination of distributions defined by a perfect couplings, and so there is a perfect coupling that generates at most as much revenue. For $n>2$ items, though, it is well known that the Birkhoff--von Neumann theorem fails to hold \cite{LinialL2014}, and so there are distributions over $n$-tuples of utilities of the $n$ items that cannot be generated by a two-step process of first drawing a perfect coupling and then drawing a chain of values from that coupling. We will therefore have to derive our result not using such a general tool, but as property of the specific problem that we are considering.} and the third gap is that the computational efficiency of the presented algorithm is polynomial in $d$, so when applied to discrete distributions with arbitrary rational probabilities is actually polynomial in the lowest common denominator of all of these probabilities.

In this section, we show how to modify \cref{alg:adv-BR} to bridge all three of these gaps in one fell swoop, by reinterpreting \cref{alg:adv-BR} as a water-filling algorithm that arbitrarily discretizes (for ease of presentation and analysis, but for no real constraint) its step size to $\nicefrac{1}{d}$.
The idea underlying our modification of \cref{alg:adv-BR} to bridge these three gaps is quite simple: instead of splitting every probability mass \emph{in advance} of running the algorithm, we will split probability masses \emph{on demand}. Formally, we will allow the algorithm to split each probability mass into several probability mass ``nodes,'' however unlike in \cref{uniform-over-multisets}, these nodes may have different masses. The algorithm will output a coupling of such nodes, where $n$ nodes may be coupled only if they all have the same mass. This will allow us to handle irrational probabilities, to compare to arbitrary correlated distribution (as any correlated distribution can be represented as a perfect coupling of such nodes of unequal masses), and will additionally maintain computational efficiency that is polynomial in the size of the support of the given marginals. The pseudo-code of the algorithm is in fact virtually unchanged from that of \cref{alg:adv-BR}, however to interpret it for this scenario we introduce a few local semantic changes to our interpretation of this pseudo-code:
\begin{enumerate}
	\item
	We will continue to denote the vector of utilities from item price $p_i$ by $u_i$, however we will now have the utilities in each $u_i$ be distinct, and we will say that $u_i^\ell\in c$ only if the entire mass of $u_i^\ell$ is already coupled (that is, if this utility was split into smaller nodes, then all such nodes should be coupled for this to hold). The notation $u_i^\ell\notin c$ throughout \cref{alg:adv-BR} should therefore be interpreted as saying that some mass of $u_i^\ell$ is still uncoupled.
	\item
	In \cref{adv-BR-high} of \cref{alg:adv-BR}, when we write $c \gets (u_i^k,\{u_j^{\ell_j}\}_{j\neq i})$, we mean the following: let $p$ be the lowest remaining uncoupled mass of any of these utilities; split a new node of mass $p$ from each of these utilities, couple all of these $n$ nodes together, and add the resulting chain to $c$. (The reinterpreted algorithm can therefore be thought of as a water-filling algorithm of sorts.) We note that each execution of \cref{adv-BR-high} causes at least one utility $u_j^\ell$ of one item in $1,\ldots,n$ to change from satisfying $u_j^\ell\notin c$ to satisfying $u_j^\ell\in c$, so the computational complexity does not explode.
	\item
	In \cref{adv-BR-low} of \cref{alg:adv-BR}, the line captioned ``i.e., in chain $t_r$, replace $u_{i+1}(t_r)$ with the smallest $u_{i+1}^\ell\notin\cup_{j< r}t_j$,'' we will be perform the following, which still implements the exact same comment: let $p_r$ be the mass of all nodes in $t_r$. First, remove $u_{i+1}(t_r)$ from $t_r$, making that $p_r$ mass of $u_{i+1}(t_r)$ uncoupled again (merging it back with any other uncoupled mass of $u_{i+1}(t_r)$). Then, take the $p_r$ lowest-utility uncoupled mass of item $i+1$ and add it to $t_r$ instead of the mass that we just removed from that chain (as in \cref{uniform-over-multisets}, there may be some mass that we removed and then added back, and this is fine). If the $p_r$ lowest-utility uncoupled mass of item $i+1$, which we wish to add to $t_r$, happens to span more than one utility---say that it spans $d'$ utilities---then we split the chain $t_r$ into $d'$ chains with appropriate masses such that each of these utilities of $i+1$ could be coupled with a different one of these chains. (Once again, this is consistent with thinking of the reinterpreted algorithm as a water-filling algorithm of sorts.) If this results in any chain that is over exact same utilities as an already existing chain, then we merge such chains.
\end{enumerate}

To analyze the newly-interpreted \cref{alg:adv-BR}, we redefine $K_i$ to be the maximum probability of item $i$ being sold in any correlated distribution compatible with the marginals, and $K_{[i]}$ to be the maximum probability of one of items $1, \ldots, i$ being sold in any correlated distribution compatible with the marginals.
Given a correlated distribution $F$, let $k_i(F)$ (respectively, $k_{[i]}(F)$) be the probability of item $i$ (resp., one of items $1, \ldots, i$) being sold in $F$ .
A correlated distribution $F$ such that $k_{[i]}(F) = K_{[i]}$ will be said to \emph{realize} $K_{[i]}$. Then, completely analogous arguments to those in the proof of \cref{cor:alg} give that the newly-interpreted \cref{alg:adv-BR} simultaneously realizes $K_{[i]}$ for every $i \in [n]$, which we equivalently restate as the main result of this \lcnamecref{sec:adv-BR}:

\begin{theorem}
	\label{thm:alg-is-opt}
	Given any prices $p_1 \leq p_2 \leq \cdots \leq p_n$, and discrete marginals $F_1,\ldots,F_n$, 
	the correlated distribution $F$ generated by the \cref{alg:adv-BR} with its semantics interpreted as in this section,
	attains the lowest expected revenue over all correlated distributions that are compatible with these marginals. Furthermore, if each marginal $F_i$ is given explicitly as a list of value-probability pairs $\bigl((v^\ell_i,p^\ell_i)\bigr)_{\ell}$, 
	then this algorithm runs in time polynomial in its input size.
\end{theorem}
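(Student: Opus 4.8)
The plan is to prove the two assertions of \cref{thm:alg-is-opt} — that the reinterpreted \cref{alg:adv-BR} outputs a compatible correlated distribution of minimum expected revenue, and that it runs in polynomial time — by (i) lifting the combinatorial analysis behind \cref{cor:alg} from equal-mass chains to the ``unequal-mass node''/water-filling setting, (ii) observing that in this setting the benchmark quantities range over \emph{all} compatible correlated distributions rather than only couplings, and (iii) converting ``the output realizes every $K_{[i]}$'' into ``the output minimizes revenue'' via a summation-by-parts identity. Throughout I take the null item to be included, so $k_{[n]}(F)=1$ for every compatible $F$.

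First I would record the structural facts about the reinterpreted algorithm: its output is a finite collection of chains (each chain a tuple of one node per item, all $n$ nodes of equal mass) whose per-item aggregate masses coincide with $F_1,\dots,F_n$, i.e.\ a compatible correlated distribution; each execution of \cref{adv-BR-high} splits off and fully couples at least one previously-uncoupled utility mass of some item; and the recoupling stage (\cref{recouple-begin}--\cref{recouple-end}) only reshuffles already-built chains, possibly splitting and merging them. The conceptual crux — and the reason the water-filling reinterpretation is needed at all — is the observation that closes the ``second gap'': an arbitrary discrete correlated distribution $F$ compatible with the marginals \emph{is} a perfect coupling of unequal-mass nodes, its atoms being the chains, with the marginal constraint being exactly the compatibility constraint. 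Hence, redefining $K_i$ and $K_{[i]}$ as the suprema of $k_i(\cdot)$ and $k_{[i]}(\cdot)$ over all compatible correlated distributions, the quantities optimized by the reinterpreted algorithm are taken over the entire feasible set — sidestepping the failure of Birkhoff--von~Neumann for $n>2$ flagged in the footnote.

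Next I would carry out the ``completely analogous arguments'' promised just before the theorem. Concretely, I would re-prove, with probability mass in place of chain counts, the analog of \cref{lem:residual-optimality} (the output $F$ has $k_1(F)=K_1$, and for each $i\ge 2$ it maximizes $k_i(\cdot)$ among all compatible distributions agreeing with $F$ on $k_1,\dots,k_{i-1}$) and the analog of \cref{prop:sim-max} (some compatible distribution simultaneously realizes $K_{[i]}$ for all $i$), the latter via the mass-version of the sift\&lift process of \cref{alg:sift-lift} together with the mass-version of \cref{lem:add-1}. Combining these exactly as in the proof of \cref{cor:alg} — induction on prefix length, using the analog of \cref{prop:sim-max} to produce, for each $i$, a compatible distribution agreeing with $F$ on $k_1,\dots,k_{i-1}$ and realizing $K_{[i]}$, then invoking the maximality from the analog of \cref{lem:residual-optimality} — yields $k_{[i]}(F)=K_{[i]}$ for every $i$. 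Finally, ordering prices nondecreasingly and summing by parts gives, for every compatible $F'$, the identity $R(p,F')=p_n-\sum_{i=1}^{n-1}(p_{i+1}-p_i)\,k_{[i]}(F')$; since every coefficient $p_{i+1}-p_i\ge 0$ and $k_{[i]}(F')\le K_{[i]}=k_{[i]}(F)$, we conclude $R(p,F')\ge R(p,F)$, i.e.\ the algorithm's output attains the minimum revenue over all compatible correlated distributions.

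For the running-time claim I would bound, per outer iteration $i$, the number of invocations of \cref{adv-BR-high} by $\sum_j|V_j|$ (within a single iteration the inner loop only couples mass, and each invocation fully couples at least one utility), and bound the recoupling stage similarly, charging each chain split in \cref{adv-BR-low} to a distinct utility of item $i+1$; tracking the node bookkeeping (splits and merges) then gives total running time polynomial in $\sum_j|V_j|$, hence in the input size. I expect the main obstacle to be the analog of \cref{prop:sim-max}: with Birkhoff--von~Neumann unavailable, the sift\&lift argument must be redone directly for mass-valued couplings, carefully tracking which portions of which utilities become (un)available as a chain rooted at a high-priced item is removed and lower-priced roots are ``lifted,'' and checking that the conclusion of \cref{lem:add-1} survives the passage to unequal masses (now read as ``a positive mass of the lowest utility of each item $i\ge 3$ is uncoupled and strictly dominated by $u_1^{k'_1+1}$''). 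A secondary nuisance is verifying that the reinterpreted \cref{alg:adv-BR} is well-defined under on-demand splitting — that the tie-breaking in \cref{adv-BR-high} and the splitting/merging in \cref{adv-BR-low} preserve the invariants used by the two lemmas.
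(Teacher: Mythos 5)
Your proposal follows essentially the same route as the paper, which likewise reinterprets \cref{alg:adv-BR} as a water-filling algorithm on unequal-mass nodes, observes that every discrete compatible distribution is itself such a mass-coupling, redefines $K_{[i]}$ accordingly, and then appeals to ``completely analogous arguments'' to \cref{lem:residual-optimality}, \cref{prop:sim-max}, and \cref{cor:alg}. The summation-by-parts identity $R(p,F')=p_n-\sum_{i=1}^{n-1}(p_{i+1}-p_i)k_{[i]}(F')$ that you use to pass from ``realizes every $K_{[i]}$'' to ``minimizes revenue'' is left implicit in the paper (it is stated as a consequence right after \cref{prop:sim-max}), so spelling it out is a useful addition rather than a departure.
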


Interestingly, since for the case of marginals that are uniform over multi-sets of $d$ values both interpretations of \cref{alg:adv-BR}---from \cref{uniform-over-multisets} and from this section---coincide, \cref{thm:alg-is-opt} implies that for such marginals even the interpretation from \cref{uniform-over-multisets} in fact generates the worst revenue among all correlated distributions compatible with the marginals, and not only among those defined by perfect couplings.

In \cref{appx:alg-extensions} we discuss extensions of the techniques of this section to nondiscrete distributions.

\section{Hardness of Approximation}
\label{sec:lower-bound}
In this section we will show a hardness of approximation for the max-min pricing problem.  Our hardness result will apply even when the support of every marginal is finite and the probability of sampling each value is a rational number.  Note that this allows the distributions to be provided explicitly as input to a max-min pricing algorithm, rather than through an oracle.  We show that even under this direct access model, it is NP-hard to compute prices that achieve an $o(n^{1/2-\epsilon})$-approximation, for any $\epsilon > 0$.  This is true regardless of the way in which ties are broken in case of buyer indifference.

\begin{theorem}
\label{thm:hardness}
For any $\epsilon > 0$, it is NP-hard to obtain an $O(n^{1/2 - \epsilon})$-approximation to the max-min pricing problem.
\end{theorem}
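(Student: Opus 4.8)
The plan is to reduce from \textsc{Maximum Independent Set}, which is NP-hard to approximate on $N$-vertex graphs to within a factor of $N^{1-\epsilon}$ for every $\epsilon>0$. Given a graph $G=(V,E)$ with $|V|=N$, I would construct a max-min pricing instance $\mathcal{I}_G$ with $n=\mathrm{poly}(N)$ items and discrete, rationally-supported marginals (so that the instance is given explicitly), together with two routines: a \emph{completeness} map turning an independent set of size $k$ into a pricing whose robust revenue is $\Omega(k)$ on a fixed monetary scale, and a \emph{soundness} map turning \emph{any} pricing $p$ into an efficiently computable independent set of size $\Omega(R(p))$. These two maps pin $R^*(\mathcal{I}_G)=\Theta(\alpha(G))$. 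Consequently, a purported $\beta$-approximation algorithm for max-min pricing, run on $\mathcal{I}_G$, returns $p$ with $R(p)\ge R^*/\beta=\Omega(\alpha(G))/\beta$, from which the soundness map extracts an independent set of size $\Omega(\alpha(G)/\beta)$ --- an $O(\beta)$-approximation to \textsc{Maximum Independent Set}; so $\beta=o(N^{1-\epsilon})$ is NP-hard. Calibrating the reduction so that $N=\Omega(n^{1/2})$ --- which one gets either by encoding the edges as separate ``edge items'', making $n=\Theta(N+|E|)=O(N^2)$, or by padding the graph with a disjoint clique on $\sim N^2$ vertices to dilute the item count without changing $\alpha$ --- converts the $N^{1-\epsilon}$ inapproximability into the claimed $n^{1/2-\epsilon}$ bound. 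Throughout, I would arrange every relevant price to lie strictly below the relevant values, so that the buyer is never indifferent and the argument is insensitive to the tie-breaking rule.

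The core of the construction is a \emph{cannibalization gadget} implementing the slogan: two vertex items corresponding to \emph{adjacent} vertices can be forced by the Adversary to cannibalize one another, while non-adjacent vertex items cannot be pitted against each other at all. To this end I would give the vertex item for $v$ a marginal supported on $0$, on a ``private'' value at a scale $2^{s(v)}$ occurring with probability $\approx 2^{-s(v)}$ (so that, when priced just below $2^{s(v)}$, it contributes $\Theta(1)$ revenue ``in its own scale'' --- exactly the geometric scale-separation mechanism by which multi-item pricing can beat any single/agnostic price, as noted in the introduction), and, in addition, a distinct ``collision value'' $c_e$ per incident edge $e$. For $e=(u,v)$ with $s(u)<s(v)$, the value $c_e$ sits just above $2^{s(v)}$ and lies in the supports of \emph{both} $u$ and $v$, with masses tuned so the events $\{v_v\approx 2^{s(v)}\}$ and $\{v_u=c_e\}$ are alignable: $c_e$ is harmless from $v$'s side (priced near $2^{s(v)}$, it yields only a sliver of utility) but, if $u$ is also finitely priced near its much lower scale $2^{s(u)}$, it yields enormous utility for $u$, so the Adversary can align these two events and have the cheap item $u$ steal every sale that would have earned $v$ its price. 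Two vertex items with no common edge share no collision value and hence no mechanism for one to steal the other's sales --- their geometrically-separated contributions simply accumulate. Thus pricing an independent set of size $k$ at its $k$ distinct private scales yields robust revenue $\Theta(k)$, whereas pricing a set containing an edge lets the Adversary delete a full scale's worth of revenue.

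For the soundness direction I would lean on \cref{thm:alg-is-opt}: given an arbitrary pricing $p$, run the Adversary's best-response algorithm to compute $R(p)$ and the worst-case correlation exactly, and then show that the set of items that this correlation sells with non-negligible (price $\times$ probability) contribution --- after rounding each finite price to its nearest scale --- is, up to a constant factor in size, an independent set of $G$. The point to establish is that if this ``effective support'' contained an edge $e=(u,v)$, then the structure of the Adversary's best response --- the greedy ``cheaper items dominate as many chains as possible'' behaviour formalized through the quantities $K_{[i]}$ in \cref{cor:alg} --- would force the Adversary to route $v$'s would-be revenue onto $u$ via $c_e$, so that $R(p)$ could not have counted both; iterating over a maximal collision-free subfamily shows $R(p)=O(|S|)$ for an independent set $S$ that the routine outputs. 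The delicate part here is handling \emph{genuinely arbitrary} real prices (the seller need not use the intended scales, and may try to exploit collision values as revenue sources in their own right), reasoning about the fully general --- not merely perfect-coupling --- worst-case correlation, and bounding the interaction among the $\mathrm{poly}(N)$ distinct collision values and scales.

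The step I expect to be the main obstacle is making the cannibalization gadget behave \emph{exactly} as the slogan demands against a fully adversarial correlation and a fully adversarial choice of the remaining prices, \emph{simultaneously for all edges at once}: one must ensure (a) that the only leverage the Adversary ever obtains is along genuine edges --- no accidental cross-cannibalization between non-adjacent items, and no way for the Adversary to exploit the many collision values sitting in a high-degree vertex's support beyond one edge's worth; and (b) that along a genuine edge the Adversary really can annihilate a whole scale's worth of revenue, while keeping the per-scale contributions separated enough that $R^*(\mathcal{I}_G)$ lands on $\Theta(\alpha(G))$ rather than merely $\Omega(1)$ or $O(n)$. Getting all of this to hold at once, robustly to tie-breaking, and with the parameter count calibrated so that the $N^{1-\epsilon}$ inapproximability of independent set comes out as exactly $n^{1/2-\epsilon}$ for the pricing problem, is where the real work lies --- and where \cref{cor:alg} and \cref{thm:alg-is-opt}, used as a structural handle on the worst-case correlation rather than as a black box, have to be brought to bear.
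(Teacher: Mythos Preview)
Your high-level plan---reduce from \textsc{Maximum Independent Set}, build one item per vertex at geometrically separated value scales, and plant shared ``collision'' values so the Adversary can cannibalize along edges---matches the paper's. The execution, however, differs in a way that lets the paper sidestep exactly the obstacle you flag as ``the main obstacle.''

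You aim for the tight relation $R^*(\mathcal I_G)=\Theta(\alpha(G))$: a \emph{single} priced edge must fully annihilate one endpoint's scale, and then soundness requires turning an arbitrary pricing $p$ into an independent set of size $\Omega(R(p))$. You correctly identify this as the crux and do not prove it; the interaction among the many collision values in a high-degree vertex's support, and the possibility that the seller uses collision values themselves as revenue sources at foreign scales, is genuinely delicate under your parameters.

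The paper avoids this by \emph{weakening the gadget}. Vertex $i$ carries mass $2^{-jn}/\sqrt{n}$ (not $2^{-jn}$) on the collision value $2^{jn}$ for each higher-indexed neighbor $j$, so one priced lower neighbor steals only a $1/\sqrt{n}$ fraction of $j$'s sale probability; it takes $\sqrt{n}$ of them to kill $j$. Soundness then becomes a short combinatorial argument: for any pricing, partition items into underpriced~$L$, overpriced~$H$, and reasonably priced~$S$; an \emph{explicitly described} adversarial correlation (no structural use of \cref{alg:adv-BR}; \cref{thm:alg-is-opt} is invoked only as a black box to evaluate $R(p)$ in the final decision procedure) wipes out every $j\in S$ with $\ge\sqrt{n}$ reasonably priced lower neighbors, leaving a set $T$ with $R(p)\le |T|+o(1)$, and greedily peeling $T$ from the top yields an independent set of size $\ge |T|/\sqrt{n}$. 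This gives only $R^*\le (\alpha(G)+2)\sqrt{n}+o(1)$ rather than $R^*=O(\alpha(G))$, but that suffices: the paper invokes the \emph{gap} form of MIS hardness (distinguishing $\alpha(G)<n^{\delta}$ from $\alpha(G)>n^{1-\delta}$), so a $\sqrt{n}$ slack between the completeness bound $R^*\ge\Omega(\alpha(G))$ and soundness is harmless. In particular there is no item-count blowup to $\Theta(N^2)$ and no padding clique; the $\sqrt{n}$ is built into the probabilities, the instance has exactly $n=|V|$ items, and the $n^{1/2-\epsilon}$ factor falls out directly. Your route may be completable, but the paper's detour around your obstacle is what makes its proof short.
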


We will prove \cref{thm:hardness} by reducing from the maximum independent set (MIS) problem.  Recall that in the MIS problem, the input is an unweighted graph $G = (V,E)$ and the goal is to find an independent set $S \subseteq V$ of maximum size.\footnote{Set $S \subseteq V$ is an independent set if no two nodes in $S$ are adjacent.}  It is known to be NP-hard to achieve an $O(n^{1-\epsilon})$ approximation to the MIS problem, for any fixed $\epsilon > 0$~\cite{Zuckerman07}.

\subsection{Reduction Construction and Proof Outline}

Given an MIS instance $G = (V,E)$ on $n$ vertices, we will construct an instance of the max-min pricing problem as follows.  First, order the vertices of $V$ by labeling them $1, \dotsc, n$, arbitrarily.  We will define $n$ marginal distributions.  For each $i \in [n]$, distribution $F_i$ takes on value $2^{i n}$ with probability $2^{-i  n}$.  Moreover, for each vertex $j > i$ such that $\{i,j\} \in E$, distribution $F_i$ will take on value $2^{j  n}$ with probability $2^{-j  n} / \sqrt{n}$.  With all remaining probability $F_i$ will take on value $0$.\footnote{Note that we can restrict attention to graphs $G$ for which $n$ is a sufficiently large perfect square, so that all of these values and probabilities are rational numbers.}

The idea behind the above construction is as follows.  First, to get nonnegligible (as a function of $n$) revenue from any item $i$, even before taking into account any cannibalization,
the price of item $i$ must be set close to $2^{in}$.  Indeed,
such a price would yield revenue $1$ if that were the only item sold. A much lower price for that item (say, less than $2^{(i-1)n}$) would not increase the sale probability of that item and would therefore yield negligible revenue from it, while any higher price for that item (say, greater than $2^{in}$ but at most $2^{jn}$ for some $j>i$) would only sell item $i$ with probability around $2^{-jn}/\sqrt{n}$, resulting again in negligible revenue from it.  So let us say that an item $i$ priced between $2^{(i-1)n}$ and $2^{in}$ is \emph{reasonably priced}. 

If we reasonably price two items $i<j$ s.t.\ $\{i,j\}\in E$
then, under an appropriate correlation between $F_i$ and $F_j$ (which sets value $2^{jn}$ for item $i$ only when item $j$ has this value as well), item $i$ would cannibalize a $\nicefrac{1}{\sqrt{n}}$ fraction of the sale probability of item $j$. Taking this one step further, if we reasonably price an item $j$ and not one but $\sqrt{n}$ lower-index neighbors of $j$ then, under an appropriate correlation of the marginals, the revenue from item $j$ would be completely cannibalized.\footnote{So the factor of $\nicefrac{1}{\sqrt{n}}$ by which some probabilities in the above construction are multiplied was chosen to be small enough that the revenue from an ``unreasonably highly priced'' item would be negligible, but large enough that a few lower-priced neighbors of an item could together effectively cannibalize the revenue from that item.}  As it turns out, these are essentially the only meaningful cannibalizations possible under any correlation structure.

Reasonably pricing only items in an independent set, while pricing all other items at $+\infty$, would therefore obtain revenue at the order of the size of that independent set (see \cref{lem.IS.pricing.lb} below).  Thus, if $G$ has a large independent set, then a high revenue guarantee can be obtained. On the other hand, 
if $G$ only has very small independent sets, then reasonably pricing many items 
would inevitably mean that most items have at least $\sqrt{n}$ reasonably-priced neighbors, which makes it possible to cannibalize the revenue of most items (simultaneously) with an appropriate correlation structure (see \cref{lem:MIS.pricing.ub} below). 
This paves the way to differentiating between instances $G$ with large and small independent sets based on their optimal robust revenue guarantees.  In the remainder of this section we will make this argument precise.

\subsection{Proof}

To make the dependence on $G$ clear, we will write $F_i(G)$ for the $i$th marginal distribution for the instance constructed above.  We will also write $R(p; G)$ for the robust revenue guarantee of pricing $p$ for the corresponding problem instance, and similarly for $R(p,F; G)$ and $R^{*}(G)$.

We will now establish upper and lower bounds on $R^*(G)$ as a function of the size of the maximum independent set in $G$.  We begin with a lower bound on $R^*(G)$, which applies to any independent set (not just the maximum independent set).

\begin{lemma}
\label{lem.IS.pricing.lb}
If $G$ has an independent set $S$, then $R^*(G) \geq \tfrac{1}{2}\bigl(1-\frac{\sqrt{n}}{2^{n-1}}\bigr)|S|$.  
\end{lemma}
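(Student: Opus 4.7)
The plan is to exhibit a specific pricing $p$ and lower-bound $R(p, F)$ for every compatible distribution $F$.  I will set $p_i = 2^{in}/2$ for each $i \in S$ and $p_j = +\infty$ for each $j \notin S$.  Under this pricing, items outside $S$ are never purchased, and the utility $v_i - p_i = 2^{in}/2$ at the ``atom'' realization $v_i = 2^{in}$ is strictly increasing in $i \in S$ by a factor of $2^n$, so the tie-breaking rule is irrelevant among atom realizations.  Moreover, $p_i \cdot \Pr[v_i = 2^{in}] = 1/2$ for each $i \in S$, giving a ``notional'' aggregate revenue of $|S|/2$ that the analysis must approach.

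I would then fix any compatible $F$ and separate realizations by whether some item $i \in S$ takes a ``boost'' value $2^{kn}$ for some neighbor $k > i$ of $i$ in $G$.  Let $W_F = \{i \in S : v_i = 2^{in}\}$ and $B_F = \{i \in S : v_i > 2^{in}\}$.  When $B_F = \emptyset$ the winning item is $\max W_F$ (its atom utility dominates all others), while when $B_F \neq \emptyset$ the winner is one of the boost items (whose utility $\approx 2^{kn}$ dominates atoms of indices $\leq k$).  In either case revenue is non-negative, so
\[ R(p,F) \;\geq\; E_F[p_{\max W_F}] \;-\; E_F\bigl[p_{\max W_F}\cdot\mathbf{1}[B_F\ne\emptyset]\bigr]. \]

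For the first term I would apply Abel summation with $S$ enumerated as $s_1 < \cdots < s_m$ and $p_{s_0} = 0$,
\[ E_F[p_{\max W_F}] \;=\; \sum_{k=1}^{m}\bigl(p_{s_k} - p_{s_{k-1}}\bigr)\,\Pr_F[\max W_F \geq s_k], \]
combined with the Fr\'echet-style marginal lower bound $\Pr_F[\max W_F \geq s_k] \geq \Pr_F[v_{s_k}=2^{s_k n}] = 2^{-s_k n}$.  The sum telescopes to $\tfrac{1}{2}|S|(1 - 2^{-n})$.  For the second (``boost-loss'') term I would use that any boost event $\{v_i = 2^{kn}\}$ has marginal probability $2^{-kn}/\sqrt{n}$, and that the total boost mass $\Pr_F[B_F \neq \emptyset]$ is at most $O(2^{-2n}/\sqrt{n})$.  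The crucial structural point is that when a boost to level $k$ wins, the displaced atom index satisfies $\max W_F \leq k$, so $p_{\max W_F} \leq p_k$; combined with the independence of $S$ in $G$ (which prevents any boost value of $i \in S$ from coinciding with the atom of another $j \in S$ via a shared edge), a per-edge charging argument bounds the total expected loss by at most $|S|\sqrt{n}/2^{n}$.

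Combining these estimates yields $R(p,F) \geq \tfrac{1}{2}|S|(1-2^{-n}) - |S|\sqrt{n}/2^n \geq \tfrac{1}{2}\bigl(1 - \tfrac{\sqrt{n}}{2^{n-1}}\bigr)|S|$.  Since $F$ was arbitrary, $R^\ast(G)$ inherits this lower bound.  The main obstacle I anticipate is the tight accounting in the boost-loss step: naive bounds such as $p_{s_m}\cdot\Pr_F[B_F\ne\emptyset]$ or a union bound $\sum_{(i,k)} p_k\Pr_F[v_i=2^{kn}]$ are too loose because prices $p_k$ can be exponentially large in $n$.  The tight bound requires exploiting (i) the per-realization constraint $\max W_F \leq k$ for the winning boost level $k$, which forces $p_{\max W_F} \leq p_k$ rather than the much larger $p_{s_m}$, and (ii) the structural independence of $S$ in $G$, which prevents the most damaging couplings between atoms and boosts of items in $S$.
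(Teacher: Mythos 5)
Your proposed pricing is the same as the paper's ($p_i = 2^{in-1}$ for $i\in S$, $+\infty$ otherwise), and the overall strategy of lower-bounding $R(p,F)$ pointwise over all compatible $F$ is also the same. The problem is the decomposition
\[
R(p,F) \;\geq\; E_F[p_{\max W_F}] - E_F\bigl[p_{\max W_F}\cdot\mathbf{1}[B_F\neq\emptyset]\bigr].
\]
This inequality is valid but too lossy, and the claimed bound $E_F\bigl[p_{\max W_F}\cdot\mathbf{1}[B_F\neq\emptyset]\bigr]\le |S|\sqrt{n}/2^{n}$ is false for adversarial $F$. The adversary can couple a \emph{cheap} boost (e.g., if $s_1=1$ and $\{1,2\}\in E$, the event $v_1=2^{2n}$ of probability $2^{-2n}/\sqrt{n}$) with the \emph{rare} atom of the highest-index item $s_m\in S$ (probability $2^{-s_m n}\ll 2^{-2n}/\sqrt{n}$), so that whenever $v_{s_m}=2^{s_m n}$ we also have $B_F\neq\emptyset$. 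Then $\Pr[\max W_F=s_m,\,B_F\neq\emptyset]=2^{-s_m n}$, and that single term contributes $p_{s_m}\cdot 2^{-s_m n}=\tfrac12$ to the boost-loss, which for large $n$ dwarfs $|S|\sqrt{n}/2^n\le n^{3/2}/2^n$. Note that in this coupling the boost does \emph{not} cannibalize (its level $2 < s_m$, so item $s_m$ still wins and pays $p_{s_m}$); your decomposition throws that revenue away because the indicator fires whenever \emph{any} boost occurs, winning or not. So your claim ``when $B_F\neq\emptyset$ the winner is one of the boost items'' is false, and the estimate collapses.

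Your own ``crucial structural point'' (a winning boost to level $k$ forces $\max W_F < k$) is exactly the right idea, but it isn't captured by the decomposition as written. Replacing $\mathbf{1}[B_F\neq\emptyset]$ with the indicator of the event that some $j\in S$ takes a boost value $2^{kn}$ with $k>\max W_F$ fixes the proof: that event has probability $O(\sqrt{n}\,2^{-(\max W_F+1)n})$, so the per-atom loss is $O(\sqrt{n}/2^n)$ and everything goes through. The paper avoids the issue entirely by arguing item-by-item: for each $i\in S$ it bounds $\Pr[v_i=2^{in}\text{ and item }i\text{ is not sold}]$, which implicitly conditions on the cannibalization \emph{succeeding} (some $j\in S\setminus\{i\}$ taking a value $2^{kn}$ with $k>i$). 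That framing is cleaner than the Abel-summation route and makes the union bound elementary.
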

\begin{proof}
Given $S$, we will construct a pricing $p$ such that $R(p; G) \geq \tfrac{1}{2}\bigl(1-\frac{\sqrt{n}}{2^{n-1}}\bigr)|S|$.
For each item $i \in S$, choose price $p_i = 2^{i  n-1}$.   For each item $i \not\in S$, choose $p_i = +\infty$.  Let $F$ be any distribution of values compatible with the marginals $F_1, \dotsc, F_n$, and consider the revenue obtained under $F$.

Choose some item $i \in S$.  With probability $2^{-i  n}$ we will have $v_i = 2^{i  n}$, in which case the utility player $i$ obtains from buying item $i$ is $2^{i  n-1}$.  
Since $S$ is an independent set, there is no $j \in S$ such that $i$ and $j$ are adjacent in $G$, and therefore we cannot have $v_j = 2^{i  n}$ for any $j \neq i$.  Therefore, if item $i$ is not purchased, then it must be that there is some item $j \in S \backslash \{i\}$ with value $v_j = 2^{k  n}$ where $k > i$.  Taking a union bound over items, the probability of this event is at most $n \sum_{k > i} 2^{-kn}/\sqrt{n} < \sqrt{n} 2^{-(i+1) n+1}$.  Thus the probability that $v_i = 2^{i n}$ and item $i$ is purchased is at least $2^{-i n} - \sqrt{n} 2^{-(i+1) n+1} = 2^{-i n} \cdot (1 - \frac{\sqrt{n}}{2^{n-1}})$.

Taking a sum over all items in $S$, the total revenue obtained is therefore at least 
\[ \sum_{i \in S} p_i  2^{-i n} \cdot \left(1 - \tfrac{\sqrt{n}}{2^{n-1}}\right) = \left(1 - \tfrac{\sqrt{n}}{2^{n-1}}\right) \sum_{i \in S} \tfrac{1}{2} = \tfrac{1}{2}\left(1 - \tfrac{\sqrt{n}}{2^{n-1}}\right)|S|. \qedhere\]
\end{proof}

We next prove an upper bound on the max-min revenue $R^*(G)$ as a function of the size of the maximum independent set in $G$.  This direction is more subtle, as we must argue that \emph{no} pricing achieves more than the claimed revenue bound.

\begin{lemma}
\label{lem:MIS.pricing.ub}
If $M$ is a maximum independent set of $G$, then $R^*(G) \leq \bigl(|M|+2\bigr)\sqrt{n} + 3n\cdot 2^{-n}$.
\end{lemma}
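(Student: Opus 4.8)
The plan is to fix an arbitrary pricing $p$ and exhibit a \emph{single} compatible distribution $F=F(p)$ with $R(p,F;G)\le(|M|+2)\sqrt n+3n\cdot 2^{-n}$; since $R^*(G)=\sup_p\inf_F R(p,F;G)$, this is enough. Writing $R(p,F;G)=\sum_{i=1}^n p_i\Pr_F[\text{item }i\text{ bought}]$, I would first classify items by how $p_i$ compares with item $i$'s ``reasonable'' window: call $i$ \emph{underpriced} if $p_i\le 2^{(i-1)n}$, \emph{overpriced} if $p_i>2^{in}$ (this includes $p_i=\infty$), and \emph{reasonably priced} otherwise, and let $P$ be the set of reasonably priced items. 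The point of this split is that the contributions of under- and overpriced items are negligible for \emph{every} compatible $F$, since the support of each $F_i$ lies in $\{0,2^{in}\}\cup\{2^{jn}:j>i,\{i,j\}\in E\}$. An underpriced item is bought with probability at most $\Pr[v_i>0]=2^{-in}+\sum_{j>i,\,j\sim i}2^{-jn}/\sqrt n\le 2\cdot 2^{-in}$, hence contributes $p_i\cdot 2\cdot 2^{-in}\le 2\cdot 2^{-n}$. An overpriced item can be bought only at a value $2^{jn}\ge p_i$ with $j>i$, so with $j_0:=\min\{j:j>i,\,j\sim i,\,2^{jn}\ge p_i\}$ it contributes at most $p_i\cdot\tfrac1{\sqrt n}\sum_{j\ge j_0}2^{-jn}\le 2^{j_0n}\cdot\tfrac{2\cdot 2^{-j_0n}}{\sqrt n}=\tfrac{2}{\sqrt n}$. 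Summing over at most $n$ items of each type accounts for the $2\sqrt n$ term and part of the $3n\cdot2^{-n}$ term, so all the real work concerns $P$.

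The combinatorial heart is a bound on how many reasonably priced items can survive cannibalization. The key structural observation is that the windows $(2^{(i-1)n},2^{in}]$ for distinct $i$ are pairwise disjoint, so \emph{within $P$ the index order coincides with the price order}: if $i',i\in P$ with $i'<i$ then $p_{i'}\le 2^{i'n}\le 2^{(i-1)n}<p_i$. Consequently, if $i\in P$ realizes its top value $2^{in}$ and some lower-index neighbor $i'\in P$ simultaneously realizes $2^{in}$ (which $F_{i'}$ can do, with probability $2^{-in}/\sqrt n$, precisely because $i>i'$ and $\{i,i'\}\in E$), then item $i'$ gives strictly higher utility $2^{in}-p_{i'}>2^{in}-p_i$ than item $i$ --- regardless of the tie-breaking rule --- so item $i$ is not bought. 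Define $Q:=\{i\in P: i\text{ has fewer than }\sqrt n\text{ neighbors in }P\text{ of smaller index}\}$. I claim $|Q|\le|M|\sqrt n$: by definition every vertex of the induced subgraph $G[Q]$ has fewer than $\sqrt n$, i.e.\ at most $\sqrt n-1$ (using that $n$ is a perfect square), neighbors earlier in the index order, so $G[Q]$ is $(\sqrt n-1)$-degenerate, hence $\sqrt n$-colorable, hence a union of at most $\sqrt n$ independent sets of $G$, each of size at most $|M|$.

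It remains to build a compatible $F$ under which every $i\in P\setminus Q$ is fully cannibalized. For each such $i$, fix a set $S_i$ of exactly $\sqrt n$ neighbors in $P$ of smaller index (possible since $i\notin Q$), split the probability-$2^{-in}$ event $\{v_i=2^{in}\}$ into $\sqrt n$ equal sub-events, and couple the $k$-th sub-event with the event $\{v_{i'}=2^{in}\}$ for the $k$-th element $i'\in S_i$ (which has probability exactly $2^{-in}/\sqrt n$); couple all remaining mass of every marginal independently. The reason this is a \emph{consistent} coupling is that, for a fixed item $i'$, the events it is linked through --- the atoms $\{v_{i'}=2^{jn}\}$ for various larger neighbors $j$, together with $\{v_{i'}=2^{i'n}\}$ and $\{v_{i'}=0\}$ --- are pairwise disjoint in the support of $F_{i'}$, so no atom of any marginal is over-committed; one makes this rigorous by writing out the explicit (small, rational-probability) joint support --- the all-zero profile, the profiles with exactly one item at its top value, and the profiles with exactly two items (a cannibalized $i\in P\setminus Q$ and a cannibalizer $i'\in S_i$) at the common value $2^{in}$ --- and checking directly that its marginals are $F_1,\dots,F_n$. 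Under this $F$, an item $i\in P\setminus Q$ is at its top value only in profiles where some $i'\in S_i$ cannibalizes it, so it is bought only when $v_i=2^{jn}$ for a larger neighbor $j$, which contributes $p_i\cdot\sum_{j>i,\,j\sim i}2^{-jn}/\sqrt n=O(2^{-n}/\sqrt n)$; an item $i\in Q$ is bought at its top value with probability at most $2^{-in}$ (all other items are then at $0$) and elsewhere only at a larger-neighbor value, contributing at most $p_i 2^{-in}+O(2^{-n}/\sqrt n)\le 1+O(2^{-n}/\sqrt n)$. Summing the four groups gives at most $|Q|+2\sqrt n+O(n\cdot2^{-n})\le|M|\sqrt n+2\sqrt n+O(n\cdot2^{-n})$, and a careful tally of the lower-order terms (all of the form $O(2^{-n})$ or $O(2^{-n}/\sqrt n)$ over at most $n$ items, using that $n$ is a sufficiently large perfect square) brings the error below $3n\cdot2^{-n}$.

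The main obstacle I expect is the combinatorial claim $|Q|\le|M|\sqrt n$ hand-in-hand with showing that \emph{all} of $P\setminus Q$ can be cannibalized simultaneously by one correlation: the former hinges on noticing that within $P$ the price and index orders agree (so the ``low-index-neighbor'' relation is $(\sqrt n{-}1)$-degenerate on $Q$ and thus $\sqrt n$-colorable into genuine independent sets), and the latter on the disjointness-of-cannibal-events observation that lets the $\sqrt n$-way splits for different items coexist. By contrast, the bounds on under/overpriced items and the final arithmetic are routine, if somewhat tedious, book-keeping. Note that this upper bound is exactly complementary to \cref{lem.IS.pricing.lb}, and the gap $\tfrac12|S|$ versus $(|M|+2)\sqrt n+3n\cdot2^{-n}$ is what drives the $n^{1/2-\varepsilon}$ inapproximability via a reduction from hardness of approximating MIS.
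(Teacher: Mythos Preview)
Your proposal is correct and follows essentially the same architecture as the paper's proof: the same three-way partition into under/over/reasonably priced items with the same per-item bounds, the same set $Q$ (the paper calls it $T$) of reasonably priced items with fewer than $\sqrt{n}$ lower-index reasonably priced neighbors, the same cannibalization coupling for items in $P\setminus Q$, and the same combinatorial bound $|Q|\le |M|\sqrt{n}$.

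The one genuine methodological difference is in the proof of $|Q|\le|M|\sqrt n$. The paper argues greedily: repeatedly take the highest-indexed remaining vertex of $T$, add it to an independent set $M'$, and delete it together with its (necessarily lower-index, hence fewer than $\sqrt n$) neighbors in $T$; this yields $|M'|\ge|T|/\sqrt n$ and hence $|T|\le|M|\sqrt n$. You instead observe that the index order witnesses $(\sqrt n-1)$-degeneracy of $G[Q]$, hence $\sqrt n$-colorability, hence a partition into at most $\sqrt n$ independent sets each of size at most $|M|$. Both arguments exploit exactly the same structural fact (bounded back-degree in the index order); yours is arguably cleaner and makes the role of $|M|$ more transparent, while the paper's is more self-contained. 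As for the coupling, the paper first makes the events $\{v_i=2^{in}\}$ mutually exclusive and then, for $i\in S\setminus T$, picks a cannibalizer uniformly from $N(i)\cap S$; your version fixes $\sqrt n$ cannibalizers per item and splits deterministically. These are equivalent for the purpose of the bound. One small presentational point: your phrase ``couple all remaining mass of every marginal independently'' is in tension with your later claim that when $i\in Q$ has its top value ``all other items are then at $0$''---you should drop ``independently'' and instead say (as your joint-support description already implies, and as the paper does by making the top-value events disjoint) that the remaining mass is placed so that the top-value events for distinct items are mutually exclusive.
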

\begin{proof}
Choose any pricing $p = (p_1, \dotsc, p_n)$.  We will partition the $n$ items into three sets, based on their price: set $L$ contains all items $i$ for which $p_i \leq 2^{(i-1)n}$.  Set $H$ contains all items $i$ for which $p_i > 2^{i n}$.  Set $S$ contains all items $i$ for which $p_i \in (2^{(i-1)n}, 2^{in}]$.  We think of $L$ as the items whose prices are much lower than their (single-item) revenue-maximizing price, and of $H$ as the items whose prices are higher than their revenue-maximizing price.  We will bound the maximum revenue obtainable from each of these three sets.

For each $i \in L$, the probability that $v_i > 0$ is at most $\sum_{j \geq i} 2^{-jn} < 2 \cdot 2^{-in}$.  So since $p_i \leq 2^{(i-1)n}$, the total revenue generated through sales of $i$ is at most $2^{(i-1)n}\cdot 2 \cdot 2^{-in} = 2 \cdot 2^{-n}$.  Since $|L| \leq n$, the total revenue generated from items in $L$ (in any distribution compatible with the marginals) is at most $2n \cdot 2^{-n}$.

Next consider $H$.  Choose some $i \in H$, and suppose that $p_i \in (2^{(j-1)n}, 2^{jn}]$ where $j > i$.  Then the probability that $v_i > p_i$ is at most $\sum_{k \geq j} 2^{-kn} / \sqrt{n} \leq 2 \cdot 2^{-jn} / \sqrt{n}$.  So since $p_i \leq 2^{jn}$, the total revenue generated through sales of $i$ is at most $2^{jn} \cdot 2 \cdot 2^{-jn} / \sqrt{n} = 2 / \sqrt{n}$.  Since $|H| \leq n$, the total revenue generated from items in $H$ (in any distribution compatible with the marginals) is at most $2n / \sqrt{n} = 2 \sqrt{n}$.

Finally we consider the set $S$ of all items $i$ for which $p_i \in (2^{(i-1)n}, 2^{i n}]$.  This is the most interesting case.  Note that if $i, j \in S$ with $i < j$, we must have $p_i < p_j$ (from the definition of $S$).  For each $i$, write $N(i) = \bigl\{ j < i \colon \{i,j\} \in E \bigr\}$.  That is, $N(i)$ is the set of neighbors of $i$ with lower index.  Write $T \subseteq S$ for the subset of nodes $T = \bigl\{ i \in S \colon |N(i) \cap S| < \sqrt{n} \bigr\}$.  That is, $T$ contains all nodes of $S$ that have fewer than $\sqrt{n}$ neighbors with lower index within $S$.

We claim that there is a distribution $F$ compatible with the marginals for which the revenue generated from the items in $S$ is at most $\bigl(1+o(1)\bigr) \cdot |T|$.  We define this (correlated) distribution by describing a process for sampling from the distribution.  First, choose at most one item $i \in [n]$ to have value $2^{in}$, consistent with the marginals (i.e., with probability $2^{-in}$).  If $i \in S \backslash T$, choose some $j \in N(i) \cap S$ uniformly at random and set $v_j = 2^{in}$, and set \emph{all} other values (including the values of items not in $S$) to $0$.  Note that since $i \in S \backslash T$ implies $\bigl|N(i) \cap S\bigr| \geq \sqrt{n}$, this process sets $v_j = 2^{in}$ with probability at most $2^{-in} / \sqrt{n}$, which is consistent with the marginal $F_j$.  In the event that $i \not\in S \backslash T$ or if no item $i$ has value $2^{in}$, values can be correlated arbitrarily.

Under this distribution $F$, when item $i \in S \backslash T$ has value $v_i = 2^{in}$, there is exactly one other item $j \in S$ with $j < i$ such that $v_j = 2^{in}$, and all other items have value $0$.  Since $j \in S$ we have $p_j < p_i$, so item $j$ will be sold.  Since we know that $p_j \leq 2^{(i-1)n}$, we conclude that the total revenue that can be generated from the event that $v_i = 2^{in}$ is at most $2^{(i-1)n} \cdot 2^{-in} = 2^{-n}$.  Further, the total probability that $v_i > 2^{in}$ is at most $\sum_{k > i} 2^{-kn}/\sqrt{n} \leq 2^{-(i+1)n+1}/\sqrt{n} \leq 2^{-(i+1)n}$, so the total revenue generated from sales of item $i \in S \backslash T$ due to events where $v_i > 2^{in}$ is at most $2^{in}\cdot 2^{-(i+1)n} = 2^{-n}$.

On the other hand, for any item $i \in T$, we know that $p_i \in (2^{(i-1)n}, 2^{in}]$ and the total probability that $v_i > 2^{(i-1)n}$ is at most $2^{-in} + \sum_{k > i}2^{-kn}/\sqrt{n} < 2^{-in} + 2^{-(i+1)n}$.  Thus the total revenue generated by selling item $i$ is at most $2^{in} \cdot (2^{-in} + 2^{-(i+1)n}) = 1 + 2^{-n}$.  Since $|S| \leq n$, we conclude that the total revenue obtained from sales of items in $S$ is at most $|T| + |T|2^{-n} + |S \backslash T|2^{-n} \leq |T| + n\cdot 2^{-n}$ as claimed.

Finally, we claim that $|T| \leq |M|\sqrt{n}$, where recall that $M$ is a maximum independent set in~$G$.  We will prove the claim by using $T$ to construct an independent set $M'$.  Start with $M' = \emptyset$.  Starting with the highest-indexed node from $T$, say $i$, add $i$ to $M'$ and remove $i$ and all of $i$'s neighbors from $T$.  From the definition of $T$, this removes at most $\sqrt{n}$ nodes from $T$.  Then take the highest-indexed node still in $T$, add it to $M'$, and remove it and its neighbors from $T$.  Repeat this process until $T$ is empty.  As we removed at most $\sqrt{n}$ nodes from $T$ on each step, we have $|M'| \geq |T|/\sqrt{n}$.  And by construction $M'$ is an independent set.  By maximality of $M$, we must therefore have $|M'| \leq |M|$, and hence $|M| \geq |T|/\sqrt{n}$.  Rearranging yields $|T| \leq |M|\sqrt{n}$ as claimed.

The total revenue obtained from sales of items in $S$ is therefore at most $|T| + n\cdot 2^{-n} \leq |M|\sqrt{n} + n\cdot 2^{-n}$.  Adding in the revenue contribution from $L$ and $H$ (and recalling that our bounds for $L$ and $H$ hold for any distribution and therefore for $F$), the total revenue generated under distribution $F$ is at most $|M|\sqrt{n} + n\cdot 2^{-n} + 2n\cdot2^{-n} + 2\sqrt{n} = (|M|+2)\sqrt{n} + 3n\cdot 2^{-n}$.
\end{proof}

We are now ready to complete the proof of \cref{thm:hardness}.  

\begin{proof}[Proof of \cref{thm:hardness}]
The hardness result for MIS directly implies the following: there is a class of graphs $\mathcal{G}$ in which the maximum independent set size is either less than $n^\delta$ or greater than $n^{1-\delta}$ and it is NP-hard to decide whether a given graph $G \in \mathcal{G}$ falls into the former category or the latter.

Given an algorithm $A$ for the max-min pricing problem, consider the following procedure for the MIS decision problem described above.  Given a graph $G \in \mathcal{G}$, let $p$ be the pricing returned by algorithm $A$ on input instance $(F_1(G), \dotsc, F_n(G))$.  Given $p$, compute the Adversary's best response distribution $F$ (using the algorithm from \cref{sec:adv-BR}) and then use this to compute $R(p, F; G) = R(p; G)$.  If $R(p; G) \geq 2n^{1/2 + \delta}$ our procedure declares that $G$ has an independent set of size at least $n^{1-\delta}$, otherwise it declares that its maximum independent set size is less than $n^\delta$.

We now claim that if the pricing algorithm $A$ has approximation factor at most $n^{1/2 - 3\delta}$, then the procedure above classifies graph instances correctly.  Suppose $G$ has an independent set $M$ of size at least $n^{1-\delta}$.  Then by \cref{lem.IS.pricing.lb}, $R^*(G) \geq \tfrac{1}{2}\bigl(1-\frac{\sqrt{n}}{2^{n-1}}\bigr)n^{1-\delta}$.  By the supposed approximation factor of the pricing algorithm, this means that $R(p;G) \geq \tfrac{1}{2}\bigr(1-\frac{\sqrt{n}}{2^{n-1}}\bigr)n^{1/2 + 2\delta} > 2n^{1/2 + \delta}$ for sufficiently large $n$.  The procedure therefore classifies such graphs $G$ correctly.

On the other hand, suppose that the maximum independent set $M$ of $G$ has size at most $n^{\delta}$.  Then by \cref{lem:MIS.pricing.ub}, $R^*(G) \leq (n^{\delta}+2)\sqrt{n} + 3n\cdot 2^{-n} < 2n^{1/2 + \delta}$ for sufficiently large $n$, and hence $R(p; G) < 2n^{1/2 + \delta}$ as well.  The procedure therefore classifies this class of graphs correctly as well, and is therefore correct in all cases.

We conclude that it is NP-hard to achieve an approximation factor better than $n^{1/2 - 3\delta}$ for any $\delta > 0$.  Setting $\epsilon = 3\delta$ completes the proof.
\end{proof}

\section{Uniform Pricing}
\label{sec:uniform}
In this section we study max-min pricing instances where setting a uniform pricing (all items have the same price) or even a single price (a single item is offered at less than $\infty$) is $\alpha$-max-min optimal for constant $\alpha$. 
This kind of pricing is desirable due to its great simplicity, and in the case of a single price, also its agnosticism of the correlation (see \cref{sub:prelim-problem-def}). Since uniform pricings dominate single prices, we will mostly prove our positive guarantees for single prices and our lower bounds for uniform pricings. This only serves to strengthen our results, although in practice the seller may prefer uniform pricing as this can only increase revenue.\footnote{For the same reason, and since tie-breaking only arises for uniform pricings and not for single prices, in this section we can assume essentially without loss of generality that the buyer breaks ties in favor of higher-priced items.}

\paragraph{Warm-up: Identical marginal distributions.} It is not hard to see that simple uniform pricing and in fact single prices arise naturally as the optimal max-min pricing in the case of symmetric marginals: When marginals are identical, the adversary can choose $\com$ as the correlation, such that all items have the same value at every value profile in the support. The seller can do no better than to use a single price in this case, since the buyer will always purchase a lowest-priced item. The max-min pricing is thus the Myerson monopoly price of the shared marginal distribution, set as a single price for an arbitrary item.
To summarize:
\begin{observation}
	\label{obs:id-case}
	For every setting with identical marginals there exists a single price that is max-min optimal. 
\end{observation}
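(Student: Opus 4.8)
The plan is to pin down $R^*$ exactly: I claim it equals the Myerson monopoly revenue of the common marginal $F := F_1 = \cdots = F_n$, and that this value is attained by a single price. Write $p^*$ for a monopoly price of $F$, i.e. a maximizer of $t \mapsto t \cdot \Pr_{v \sim F}[v \ge t]$, and let $\Mye(F) := p^* \cdot \Pr_{v \sim F}[v \ge p^*]$ denote the resulting revenue. The proof then consists of matching lower and upper bounds $R(p_{\mathrm{single}}) = \Mye(F) = R^*$ for a suitable single price $p_{\mathrm{single}}$.

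For the lower bound, I would take $p_{\mathrm{single}}$ to be the pricing with $p_1 = p^*$ and $p_j = \infty$ for all $j \ne 1$. As recorded in \cref{sub:prelim-problem-def}, a single price is correlation-agnostic, so its revenue is the same against every compatible distribution; and against any such distribution the buyer is effectively offered only item $1$ at price $p^*$, so she buys precisely when $v_1 \ge p^*$, an event of probability $\Pr_{v \sim F}[v \ge p^*]$ since $v_1 \sim F$. Hence $R(p_{\mathrm{single}}) = \Mye(F)$, and in particular $R^* \ge \Mye(F)$.

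For the upper bound, I would fix an arbitrary pricing $p$ and let the Adversary play the comonotonic distribution $\com$. Because all marginals are identical, under $\com$ every valuation profile in the support has all coordinates equal to a common value $v \sim F$ (the coupling functions are the identity and $v_j(q)$ is the same function of $q$ for all $j$). A unit-demand buyer facing a single common value $v$ and prices $p_1, \dots, p_n$ maximizes utility by purchasing a cheapest item, and does so exactly when $v \ge \min_j p_j$; no matter how ties among equally-cheapest items are resolved, the price paid is $\min_j p_j$. Therefore $R(p, \com) = \min_j p_j \cdot \Pr_{v \sim F}[v \ge \min_j p_j] \le \Mye(F)$, since $\min_j p_j$ is itself an admissible single-item price (and trivially so if all prices are $\infty$). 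Thus $R(p) \le R(p, \com) \le \Mye(F)$ for every $p$, giving $R^* \le \Mye(F)$. Combining the two bounds yields $R^* = \Mye(F) = R(p_{\mathrm{single}})$, so $p_{\mathrm{single}}$ is max-min optimal.

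The argument is short, and I do not expect a serious obstacle; the only points I would be careful about are (i) the one-line verification that $\com$ collapses identical marginals to a single common value, straight from the definition of $\com$; (ii) the bookkeeping around ties and around the zero-utility boundary of the purchase rule, neither of which affects the revenue expressions above; and (iii) the existence of a monopoly price $p^*$ attaining the supremum, which holds whenever $F$ has finite support (and more generally is the standard assumption) — when the supremum is not attained, the same computation shows $R^* = \sup_t t\cdot\Pr_{v\sim F}[v\ge t]$ and that for every $\varepsilon>0$ some single price is $(1+\varepsilon)$-max-min optimal, which is all that is needed for the intended reading of the statement.
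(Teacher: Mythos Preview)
Your proposal is correct and follows essentially the same approach as the paper: the paper's argument (given in the warm-up paragraph immediately preceding the observation) also plays $\com$ against an arbitrary pricing, observes that identical marginals collapse all coordinates to a common value so the buyer always buys a lowest-priced item, and concludes that the Myerson monopoly price as a single price is max-min optimal. Your write-up is simply a more detailed and carefully quantified version of the same idea, including the explicit identity $R(p,\com)=\min_j p_j\cdot\Pr_{v\sim F}[v\ge\min_j p_j]$ and the careful handling of ties and of the case where the monopoly price is only a supremum.
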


We emphasize that the takeaway message from the identical marginals case is \emph{not} that the seller should price items at their individual monopoly prices---in fact this pricing strategy could have dire robust revenue guarantees even in extremely simple cases due to cannibalization.\footnote{For example, consider $m=2$ items with marginals $F_1=U[1,2]$ and $F_2=U[H,H+\frac{1}{2}]$ where $H$ can be arbitrarily large. Let $p=(1,H)$ be the pricing based on individual monopoly prices, then $R(p,\com)=1$ since the buyer always prefers item~1. However, for $p'=(\infty,H)$ the robust revenue guarantee is $R(p')=H$.}
This makes the unit-demand setting very different from the additive one. We expand upon this point in \cref{sub:uniform-regular}. 

\paragraph{Section overview.} 
In this section we explore the extent to which we can extend the approach of using just one price to get max-min (near-)optimality.
In \cref{sub:uniform-MHR} we show that if the marginals all have a \emph{monotone hazard rate (MHR)}, then a single price achieves a $\sim3.5$-approximation (roughly $30\%$ of the max-min optimum). 
The class of MHR marginals includes many well-studied distributions such as uniform, normal, exponential, and log-concave distributions. In \cref{sub:uniform-regular} we show the limitations of uniform pricing, by constructing an instance with \emph{regular} marginals for which such a pricing is no better than a $\Theta(n)$-max-min approximation. Our example highlights aspects in which (approximate) max-min optimal pricing can be complex.

\paragraph{Section preliminaries.}
In this section it will be convenient to assume that marginals are continuous and differentiable distributions with density functions denoted by $f_i$ for distribution $F_i$.\footnote{The results hold for discretized versions as well, with appropriate definitions of discrete MHR and discrete regularity as have already been developed in the literature (see, e.g., \cite{Elkind07}).}
We further assume that $F_i$ is strictly increasing for every $i$. The inverse function $F_i^{-1}(\cdot)$ is thus well-defined, and for every $q\in[0,1]$ the value $\mu^q_i=v_i(q_i)$ at quantile $q$ is also well-defined (namely, $\mu^q_i=F^{-1}_i(q)$).
In \cref{sub:uniform-regular} we also consider \emph{truncated} versions of such distributions, in which we allow a nonzero probability mass $\rho$ on $v_i^{\max}$.
The inverse function $F_i^{-1}(\cdot)$ of a truncated distribution $F_i$ can still be easily defined: if $q\in[1-\rho,1]$ then let $F_i^{-1}(q):=v_i^{\max}$. 

\subsection{Constant-Factor Approximation for MHR Marginals}
\label{sub:uniform-MHR}

A distribution $F$ with density $f$ is \emph{MHR} if its hazard rate  $\frac{f(v)}{1-F(v)}$ is (weakly) increasing as a function of the value $v$.
Our starting point is the following observation, which shows we cannot hope for better than a constant factor approximation:

\begin{observation}
	\label{obs:uniform-not-opt}
	There is an instance with $n=2$ items and MHR marginals (in fact, uniform distributions) such that for some constant $\alpha> 1$, no uniform pricing is $\alpha$-max-min optimal.
\end{observation}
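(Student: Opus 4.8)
The plan is to exhibit a single two-item instance with uniform (hence MHR) marginals on which some \emph{asymmetric} pricing strictly beats every \emph{uniform} pricing by a fixed constant factor. A clean choice is $F_1 = U[1,2]$ and $F_2 = U[0,4]$; both are uniform distributions, and uniform distributions are MHR.

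First I would upper-bound the robust revenue of an arbitrary uniform pricing $(t,t)$. Under a uniform pricing the buyer always purchases a highest-value item (equal prices, hence equal utilities up to tie-breaking) and buys something iff $\max(v_1,v_2)\ge t$, so $R\bigl((t,t),F\bigr) = t\cdot\Pr_F[\max(v_1,v_2)\ge t]$. Bounding the infimum over compatible $F$ from above by the comonotonic coupling $\com$ (under which $\max(v_1,v_2)\ge t$ exactly when the common quantile is at least $\min\{F_1(t),F_2(t)\}$) gives $R\bigl((t,t)\bigr) \le R\bigl((t,t),\com\bigr) = t\cdot\max\{1-F_1(t),\,1-F_2(t)\} = \max\{t(1-F_1(t)),\,t(1-F_2(t))\} \le \max\{\Mye(F_1),\Mye(F_2)\}$, where $\Mye(F) := \sup_{s\ge 0} s\cdot\Pr_{v\sim F}[v\ge s]$ is the single-item (Myerson) monopoly revenue. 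A one-line computation gives $\Mye(U[1,2]) = 1$ (monopoly price $1$) and $\Mye(U[0,4]) = 1$ (monopoly price $2$), so every uniform pricing has robust revenue guarantee at most $1$; in particular $\sup_{\text{uniform }p}R(p)\le 1$.

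Next I would lower-bound $R^*$ through the asymmetric pricing $p=(1,2)$. Since $v_1\ge 1 = p_1$ always, under every compatible $F$ the buyer always buys an item, so $R(p,F) = 1\cdot\Pr_F[\text{buy }1] + 2\cdot\Pr_F[\text{buy }2] = 1 + \Pr_F[\text{buy }2]$. Whenever $v_2>3$, item $2$'s utility $v_2-2 > 1 \ge v_1-1$ is strictly the largest and is positive (using $v_1\le 2$), so item $2$ is purchased; hence $\Pr_F[\text{buy }2]\ge\Pr_F[v_2>3] = \tfrac14$ for every compatible $F$, giving $R(p)\ge\tfrac54$ and therefore $R^*\ge\tfrac54$. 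Combining the two bounds, $R\bigl((t,t)\bigr)\le 1 < \tfrac{10}{9}\le\tfrac89 R^*$ for every uniform $(t,t)$, so taking $\alpha = \tfrac98 > 1$, no uniform pricing $p$ satisfies $R(p)\ge\tfrac1\alpha R^*$, i.e., no uniform pricing is $\alpha$-max-min optimal.

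The only delicate step is the uniform-pricing upper bound: one must be confident that the adversary's worst correlation for a single common price essentially ``hides'' high values comonotonically, collapsing the guarantee to the better of the two single-item monopoly revenues. Given that, everything else is elementary arithmetic; note in particular that on the asymmetric side only the inequality $\Pr_F[\text{buy }2]\ge\frac14$ is needed, not an explicit adversary best response. Alternatively one could invoke the optimality of the adversary's best-response algorithm (\cref{cor:alg}) in place of the direct comonotonic argument, but the latter is shorter here. The conceptual content is that, unlike in the additive setting, two distinct item prices can strictly outperform every common price already for two MHR items.
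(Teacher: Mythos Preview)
Your proof is correct. Both you and the paper exhibit a concrete two-item instance with uniform marginals, upper-bound the robust revenue of every uniform price, and lower-bound $R^*$ via an explicit asymmetric pricing. The overall structure is therefore the same.

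The differences are in the choice of instance and in the upper-bound step. The paper takes $F_1=U[\tfrac14,\tfrac14+\epsilon]$ and $F_2=U[0,1]$, bounds uniform pricing by $\tfrac14+\epsilon$ via a two-case argument (price $\le\tfrac14+\epsilon$ trivially caps revenue; price $>\tfrac14+\epsilon$ effectively leaves only item~2, whose Myerson revenue is $\tfrac14$), and then shows that $p=(\tfrac14,\tfrac58-\epsilon)$ achieves revenue approaching $\tfrac{25}{64}$, giving a gap of $\tfrac{25}{16}$. You instead pick $F_1=U[1,2]$, $F_2=U[0,4]$ and bound every uniform price in one stroke by $R\bigl((t,t)\bigr)\le R\bigl((t,t),\com\bigr)=t\cdot\max_i\{1-F_i(t)\}\le\max_i\Mye(F_i)=1$, then obtain $R^*\ge\tfrac54$ from the asymmetric pricing $(1,2)$. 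Your comonotonic-plus-Myerson argument is cleaner (no $\epsilon$, no case split) and in fact works verbatim for any pair of marginals; the paper's instance, with its near-degenerate first marginal, achieves a larger numerical separation but at the cost of a slightly messier calculation. Either route suffices for the observation.
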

\noindent
The proof is in \cref{appx:MHR}, by considering uniform marginals $U[\frac{1}{4},\frac{1}{4}+\epsilon]$ and $U[0,1]$.

Our main result in this section is that for MHR marginals, setting the maximum of the medians of the marginals as a single price (for the item with the corresponding marginal) is approximately-max-min optimal up to a small constant factor (recall that in practice one may prefer to use this price for all items uniformly):

\begin{theorem}[Max median as a single price]
	\label{mhr}
	Consider a setting with MHR marginals and let $\mu_{\max}$ be the maximum of the medians. Then a single price $p$ of $\mu_{\max}$ for an item with median $\mu_{\max}$ is $3.443$-max-min optimal, and achieves a robust revenue guarantee $R(p) \ge \frac{1}{3.443}\OPT$ against any joint distribution, where $\OPT$ is the expected \emph{welfare} with joint distribution $\com$.
\end{theorem}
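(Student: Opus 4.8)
The plan is to combine a lower bound on $R(p)$ for the proposed single price with an upper bound on the relevant benchmark given by the welfare of $\com$, controlling the gap between them via the MHR assumption. First I would fix the pricing: let $i^*$ be an item whose median equals $\mu_{\max}$, set $p_{i^*}=\mu_{\max}$ and $p_j=\infty$ for every $j\ne i^*$. Since a single price is correlation-agnostic (see \cref{sub:prelim-problem-def}), $R(p)=R(p,F)$ for every compatible $F$; and because every other item is unavailable, the buyer purchases item $i^*$ exactly when $v_{i^*}\ge\mu_{\max}$, so $R(p)=\mu_{\max}\cdot\Pr[v_{i^*}\ge\mu_{\max}]\ge\mu_{\max}/2$ by the definition of the median. (The tie event $v_{i^*}=\mu_{\max}$ is null under the section's continuity assumption, and in any case ties are broken toward the only finitely-priced item.)

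Next I would bound the benchmark. For any individually-rational mechanism $M$ and any compatible $F$, the expected revenue is at most the expected welfare $\mathbb{E}_{v\sim F}[\max_j v_j]$; specializing to $F=\com$ shows the max-min revenue of $M$ is at most $\mathbb{E}_{v\sim\com}[\max_j v_j]=\OPT$, and taking the supremum over $M$ gives that $\OPT$ upper-bounds the optimal max-min revenue over \emph{all} mechanisms, and a fortiori $R^*$. Hence it suffices to prove $\OPT\le 3.443\,R(p)$. Since $\com$ is the comonotonic coupling and each $F_j$ is strictly increasing, $\OPT=\int_0^1\max_j F_j^{-1}(q)\,dq$, so the target reduces to bounding this integral by $\bigl(2+\tfrac{1}{\ln 2}\bigr)\tfrac{\mu_{\max}}{2}$.

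The heart of the argument, and the step I expect to be the main obstacle, is bounding each quantile function $F_j^{-1}$ by its median using MHR. Writing $H_j(v)=-\ln(1-F_j(v))$ for the integrated hazard rate, one has $F_j^{-1}(q)=H_j^{-1}\bigl(-\ln(1-q)\bigr)$ and $H_j^{-1}(\ln 2)=m_j$, the median of $F_j$. MHR makes $H_j$ convex on its support, and $H_j$ vanishes at the left endpoint of a support contained in $[0,\infty)$, so $H_j^{-1}$ is concave with $H_j^{-1}(0)\ge 0$; consequently $x\mapsto H_j^{-1}(x)/x$ is nonincreasing. Anchoring this at $x=\ln 2$ and using $m_j\le\mu_{\max}$ yields $F_j^{-1}(q)\le\frac{\mu_{\max}}{\ln 2}\bigl(-\ln(1-q)\bigr)$ for $q\ge\tfrac12$, while $F_j^{-1}(q)\le m_j\le\mu_{\max}$ for $q\le\tfrac12$ by monotonicity. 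Taking the maximum over $j$, plugging into the integral, and using $\int_{1/2}^1\bigl(-\ln(1-q)\bigr)\,dq=\tfrac{1+\ln 2}{2}$,
\[
\OPT\;\le\;\frac{\mu_{\max}}{2}+\frac{\mu_{\max}}{\ln 2}\cdot\frac{1+\ln 2}{2}\;=\;\Bigl(1+\frac{1}{2\ln 2}\Bigr)\mu_{\max}\;=\;\Bigl(2+\frac{1}{\ln 2}\Bigr)\frac{\mu_{\max}}{2}\;\le\;3.443\,R(p),
\]
which together with the benchmark bound completes the proof. The remaining pieces — the elementary integral, the arithmetic $2+1/\ln 2\approx 3.4427<3.443$, and the observation $R^*\le\OPT$ — are mechanical; the one genuinely delicate point is verifying the convexity and left-endpoint-zero property of $H_j$ that drives the quantile bound.
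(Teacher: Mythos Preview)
Your proposal is correct and follows essentially the same route as the paper: use correlation-agnosticism of a single price to reduce to $\com$, bound $\OPT=\int_0^1\max_j F_j^{-1}(q)\,dq$ by splitting at $q=\tfrac12$ and controlling the upper half via an MHR quantile bound anchored at the median, and then evaluate the resulting exponential-type integral to get the constant $2+\tfrac{1}{\ln 2}$. The only cosmetic difference is that the paper packages the quantile bound as a separate lemma (its \cref{lem:MHR-vals-above-q}, proved by lower-bounding the hazard rate and integrating), whereas you obtain the same inequality more directly from concavity of $H_j^{-1}$; the two derivations are equivalent.
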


In \cref{sub:MHR-intuition} we provide intuition for pricing using the maximum median.
We remark that an advantage of such pricing is that even if the marginals for individual items are not fully known, the price can be easily estimated from a small number of samples \cite{AzarDMW13} (this can be done with separate samples for individual items, as recall that there is no access to the joint distribution).  
Another immediate implication of \cref{mhr} is that, since the approximation guarantee for MHR marginals is achieved with respect to expected \emph{welfare}, 
the suggested pricing yields a constant approximation not only to the robust revenue of the best \emph{pricing}, but also to the robust revenue of the best \emph{mechanism} (which could be complex and/or randomized).

\subsubsection{Proof of Theorem~\ref{mhr}}
\label{sub:MHR-approx}

We begin by stating a property of MHR distributions that will be useful in the proof. 
This property reflects the fact that the exponential distribution is the ``extreme'' MHR distribution in the sense of lowest hazard rates and thus heaviest tail. Given an upper bound of~$x$ on the value at quantile $q$ of an MHR distribution $F$, the next lemma provides an upper-bound on its value at quantile $q'>q$, using the value at $q'$ of an exponential distribution with parameter $\lambda=-\frac{\ln(1-q)}{x}$ (i.e., the exponential distribution that at quantile $q$ has value $x$).

\begin{lemma}[MHR property]
	\label{lem:MHR-vals-above-q}
	Let $F$ be an MHR distribution and let $q\in[0,1]$ be a quantile. 
	If $F^{-1}(q)\le x$ then
	$
	\forall q'>q : F^{-1}(q') \le \frac{x\ln (1-q')}{\ln(1-q)}.
	$
\end{lemma}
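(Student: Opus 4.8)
The plan is to work with the hazard rate directly and exploit the fact that, among all MHR distributions with a given value at quantile $q$, the exponential distribution (whose hazard rate is constant and minimal subject to the constraint) has the heaviest tail. Write $h(v)=\frac{f(v)}{1-F(v)}$ for the hazard rate of $F$, and recall the standard identity $1-F(v)=\exp\bigl(-\int_0^v h(t)\,dt\bigr)$, so that if $F^{-1}(r)=v$ then $-\ln(1-r)=\int_0^v h(t)\,dt$. First I would set $x_0 = F^{-1}(q)$, so that $x_0 \le x$ by hypothesis, and let $\lambda = h(x_0)$ be the hazard rate at that point. Since $F$ is MHR, $h(t)\ge\lambda$ for all $t\ge x_0$. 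Hence for any $q'>q$, writing $v'=F^{-1}(q')\ge x_0$, we get
\[
-\ln(1-q') \;=\; \int_0^{v'} h(t)\,dt \;=\; -\ln(1-q) + \int_{x_0}^{v'} h(t)\,dt \;\ge\; -\ln(1-q) + \lambda\,(v'-x_0),
\]
which rearranges to $v' \le x_0 + \frac{1}{\lambda}\bigl(\ln(1-q) - \ln(1-q')\bigr) = x_0 + \frac{1}{\lambda}\ln\frac{1-q}{1-q'}$.

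Next I would bound $\lambda$ from below. Again using the hazard-rate identity together with MHR applied on $[0,x_0]$ (where $h(t)\le\lambda$), we have $-\ln(1-q) = \int_0^{x_0} h(t)\,dt \le \lambda\,x_0$, so $\lambda \ge \frac{-\ln(1-q)}{x_0}$, i.e. $\frac{1}{\lambda}\le\frac{x_0}{-\ln(1-q)}$. Substituting into the previous display and using $x_0\le x$ and the fact that $\ln\frac{1-q}{1-q'}>0$ for $q'>q$,
\[
v' \;\le\; x_0 + \frac{x_0}{-\ln(1-q)}\ln\frac{1-q}{1-q'} \;=\; \frac{x_0}{-\ln(1-q)}\Bigl(-\ln(1-q) + \ln(1-q) - \ln(1-q')\Bigr) \;=\; \frac{x_0\,\ln(1-q')}{\ln(1-q)},
\]
and since $\frac{\ln(1-q')}{\ln(1-q)}>0$ is increasing in $x_0$'s coefficient we may replace $x_0$ by its upper bound $x$, giving exactly $F^{-1}(q')\le \frac{x\ln(1-q')}{\ln(1-q)}$ as claimed.

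The one point that needs care — and is the main obstacle — is handling edge cases and regularity of $F$ so that the hazard-rate integral identity is valid: the boundary quantiles $q=0$ (where $\ln(1-q)=0$ and the bound is vacuous/infinite, which is fine) and $q'=1$, the possibility that $F$ has an atom at its supremum (the truncated case mentioned in the section preliminaries, where $h$ blows up and one should interpret $\int h$ as $+\infty$ past $v^{\max}$), and the possibility that $F^{-1}$ is only the generalized inverse rather than a true inverse. Since the section's preliminaries already assume $F_i$ is continuous, differentiable, and strictly increasing, the identity $1-F(v)=\exp(-\int_0^v h)$ holds on the support and all the above steps go through; I would simply remark that for the discretized versions one uses the discrete MHR definition and the analogous telescoping inequality. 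A secondary subtlety is that the argument implicitly needs $x_0>0$ (so that $\lambda$ and the division make sense); if $q>0$ then $x_0=F^{-1}(q)>0$ because $F$ is strictly increasing from $F(0)=0$, so this is automatic.
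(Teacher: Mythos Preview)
Your proof is correct and follows essentially the same approach as the paper: both use the hazard-rate integral identity $-\ln(1-F(v))=\int_0^v h$, apply MHR to lower-bound the hazard rate by a constant, integrate over $[F^{-1}(q),F^{-1}(q')]$, and rearrange. The only organizational difference is that you work first at $x_0=F^{-1}(q)$ (your Step~5 is exactly the paper's auxiliary claim that $h(x_0)\ge -\ln(1-q)/x_0$) and then replace $x_0$ by $x$ at the end, whereas the paper establishes the hazard-rate lower bound directly at $x$ and integrates from there; the substance is identical.
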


Similar properties to the one in \cref{lem:MHR-vals-above-q} appear in the literature (see, e.g., \cite{cr17});
we include a proof for completeness in \cref{sub:MHR-vals-above-q}.
\cref{mhr} follows directly from the next lemma:

\begin{lemma}
	\label{mhr-comon}
	For every $q\in[0,1]$, consider MHR marginals with values $\mu^q_1,\dots,\mu^q_n$ at quantile~$q$, and let $\mu=\mu^q_{\max}=\max_{i}\{\mu^q_i\}$ be the highest such value. 
	When the joint distribution is $\com$, setting $\mu$ as a single price for an item $i$ with $\mu_i^q=\mu$ achieves expected revenue that is a $(\frac{1}{1-q} -\frac{1}{\ln(1-q)})$-approximation to the expected welfare.
\end{lemma}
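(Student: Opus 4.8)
The plan is to express both sides of the claimed inequality explicitly in terms of $\mu$. Fix $q\in(0,1)$ — for $q\in\{0,1\}$ the factor $\tfrac{1}{1-q}-\tfrac{1}{\ln(1-q)}$ is infinite, so there is nothing to prove — let $i^*$ be an item with $\mu^q_{i^*}=\mu$, and consider the single price $p$ with $p_{i^*}=\mu$ and $p_j=\infty$ otherwise. First I would pin down the revenue. Under $\com$ a valuation profile is obtained by drawing $Q\sim U[0,1]$ and setting $v_j=\mu^Q_j=F_j^{-1}(Q)$ for every $j$; since $i^*$ is the only item with a finite price, the buyer buys it exactly when $v_{i^*}\ge\mu$, and because $F_{i^*}$ is continuous and strictly increasing with $F_{i^*}^{-1}(q)=\mu$ we have $F_{i^*}(\mu)=q$, so this happens with probability $1-q$. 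Hence $R(p,\com)=\mu(1-q)$ (the boundary event $v_{i^*}=\mu$ has probability zero, so tie-breaking is immaterial).

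Next I would upper-bound the expected welfare under $\com$. As the buyer is unit-demand, the welfare realized at draw $Q=q'$ is $\max_j v_j=\max_j F_j^{-1}(q')$, so the expected welfare equals $\int_0^1 \max_j F_j^{-1}(q')\,dq'$. The one substantive idea is that the maximality of $\mu$ means $F_j^{-1}(q)=\mu^q_j\le\mu$ for \emph{every} $j$, so I can apply the MHR tail bound \cref{lem:MHR-vals-above-q} with $x=\mu$ to each marginal $F_j$ separately: for all $q'>q$ and all $j$, $F_j^{-1}(q')\le \tfrac{\mu\ln(1-q')}{\ln(1-q)}$, and this pointwise bound is preserved by the maximum over $j$; for $q'\le q$ I would instead use monotonicity, $\max_j F_j^{-1}(q')\le\max_j F_j^{-1}(q)=\mu$. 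Splitting $\int_0^1$ at $q$, plugging in these bounds, and evaluating the elementary integral $\int_q^1\ln(1-q')\,dq' = \int_0^{1-q}\ln t\,dt = (1-q)\bigl(\ln(1-q)-1\bigr)$ (substitute $t=1-q'$; the integrand is integrable at $t=0$), the welfare bound collapses to $\mu\bigl(1-\tfrac{1-q}{\ln(1-q)}\bigr)$. Dividing by the revenue $\mu(1-q)$ gives exactly $\tfrac{1}{1-q}-\tfrac{1}{\ln(1-q)}$, as desired. (I use throughout this section's standing assumption that the marginals are continuous, strictly increasing and differentiable, so that $F_j^{-1}$ and $\mu^q_j$ are well defined.)

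I do not anticipate a real obstacle. The conceptual content lies entirely in observing that \cref{lem:MHR-vals-above-q}, a statement about a single MHR distribution, can be invoked simultaneously for all $n$ marginals precisely because $\mu$ dominates each of their quantile-$q$ values; once the welfare under $\com$ is written as $\int_0^1\max_j F_j^{-1}$, the rest is the evaluation of one logarithmic integral. The only points warranting a sentence of care are the integrability of $\ln(1-q')$ near $q'=1$ (so that the welfare is finite) and the degenerate boundary quantiles $q\in\{0,1\}$.
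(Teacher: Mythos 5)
Your proof is correct and follows essentially the same approach as the paper: compute the revenue as $(1-q)\mu$, write the welfare under $\com$ as $\int_0^1\max_j F_j^{-1}$, split at $q$, and invoke \cref{lem:MHR-vals-above-q} on the tail exactly as the paper does (the paper phrases the tail integral via the memorylessness of $\Exp[\lambda]$, while you evaluate it directly by substitution, but the calculation is the same). The extra care you take with the boundary cases $q\in\{0,1\}$ and the measure-zero tie event is harmless and slightly more explicit than the paper's treatment.
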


\begin{proof}[Proof of \cref{mhr}]
	For every single price $p$, we have that $R(p,F)=R(p,\com)$ for every compatible~$F$, and so $R(p)=R(p,\com)$.
	Let $\OPT$ be the expected welfare from allocating to the unit-demand buyer when the joint distribution is $\com$; then	for every pricing $p'$ we have that $R(p',\com)\le \OPT$, and we conclude that $R^*\le \OPT$. 
	By \cref{mhr-comon} with $q$ set to $\frac{1}{2}$, pricing an item with highest median by this median is guaranteed to yield revenue of $(2 +\frac{1}{\ln2})\cdot\OPT$ against $\com$. We denote the resulting single price by $p^*$. So
	$
	R(p^*)=R(p^*,\com)\ge (2 +\frac{1}{\ln2})\cdot\OPT\ge (2 +\frac{1}{\ln2})\cdot R^*.
	$
	The \lcnamecref{mhr} follows.
\end{proof}

\begin{proof}[Proof of \cref{mhr-comon}]
	Assume throughout that the joint distribution is $\com$. 
	Denote by $\ALG$ the expected revenue extracted by setting $\mu$ as a single price. Observe that
	\begin{equation}
	\ALG = (1-q)\mu.\label{eq:ALG}
	\end{equation}
	Denote the expected welfare by $\OPT$, and observe
	\begin{equation}
	\OPT=\mathbb{E}_{\xi\sim U[0,1]} \bigl[\max_i\{\mu^{\xi}_i\}\bigr] = \int_0^q \max_i\{\mu^\xi_i\} d\xi + \int_q^1 \max_i\{\mu^\xi_i\} d\xi \le q\mu + \int_q^1 \max_i\{\mu^\xi_i\} d\xi.\label{eq:OPT}
	\end{equation}

	Our goal is thus to upper-bound $\int_q^1 \max_i\{\mu^\xi_i\} d\xi$. Using that $F_i^{-1}(q)\le \mu$ for every $i$ and applying \cref{lem:MHR-vals-above-q}, we get that
	for every quantile $\xi>q$ the following holds: 
	\begin{equation} 
	\forall i : \mu^{\xi}_i= F_i^{-1}(\xi) \le \tfrac{\mu\ln (1-\xi)}{\ln(1-q)}=\mu^\xi_{\Exp[\lambda]}~,\label{eq:UB-mu-xi}
	\end{equation} 
	where $\mu^\xi_{\Exp[\lambda]}$ is the value at quantile $\xi$ of the exponential distribution $\Exp[\lambda]$ with parameter $\lambda=-\frac{\ln(1-q)}{\mu}$, i.e., 
	$
	\mu^\xi_{\Exp[\lambda]} =	-\frac{\ln(1-\xi)}{\lambda}$.

	From \cref{eq:UB-mu-xi} it follows that $\int_q^1 \max_i\{\mu^\xi_i\} d\xi$ is upper bounded by $\int_q^1 \mu^\xi_{\Exp[\lambda]} d\xi$. Notice the latter is the contribution to the expectation of $\Exp[\lambda]$ from values above $\mu^q_{\Exp[\lambda]}$. By a standard calculation (due to the memorylessness of the exponential distribution) this is $\frac{(1-q)(1-\ln(1-q))}{\lambda}=(1-q)\mu-\frac{1-q}{\ln(1-q)}\mu$. Plugging this into \cref{eq:OPT} and using \cref{eq:ALG} we get
	$$
	\OPT \le q\mu + (1-q)\mu-\tfrac{1-q}{\ln(1-q)}\mu =\mu -\tfrac{\ALG}{\ln(1-q)}=\left(\tfrac{1}{1-q} -\tfrac{1}{\ln(1-q)}\right)\cdot\ALG.
	$$ 
	This completes the proof of the lemma.
\end{proof}

\paragraph{Discussion of the tightness of \cref{mhr}.}
The robust revenue guarantee of $3.443$ is tight for pricing at the maximum of the medians~$\mu$; to see this, take as marginals $U[\mu-\epsilon,\mu+\epsilon]$ and $\Exp[\frac{\ln 2}{\mu}]$. Similarly, the guarantee in \cref{mhr-comon} is tight for any $q\in[0,1]$. Thus, \cref{mhr-comon} shows that pricing at the maximum of quantiles $\mu_1^q,\dots,\mu_n^q$ for any $q$ cannot do much better than using $q=\frac{1}{2}$.\footnote{
The guarantee of $\frac{1}{1-q} -\frac{1}{\ln(1-q)}$ is minimized at $q^*=1-e^{-2W(1/2)}$ where $W$ is the Lambert (product log) function. Setting $q=\frac{1}{2}$ achieves a ratio of $3.443$, which is only slightly higher than the optimal one achieved by $q^*$.}
We leave as an open question whether there is some other single price or uniform pricing rule that improves upon $3.443$ more significantly.

\subsection{Lower Bound Beyond MHR}
\label{sub:uniform-regular}

Our positive result for MHR marginals in the previous section begs the question of whether it can be extended beyond MHR, in particular to regular marginals. 
A distribution $F$ with density $f$ is \emph{regular} if its \emph{virtual value} $\varphi(v)= v-\frac{1-F(v)}{f(v)}$ (the value minus inverse the hazard rate) is (weakly) increasing as a function of $v$.
In this section we give a strong negative answer.

\begin{proposition}[Linearly-many prices required]
	\label{pro:single-nonopt-cont}
	There exists an instance of the max-min pricing problem with $n$ regular marginals such that for every $k\in[n]$, no pricing with $\le k$ different prices is $o(\frac{n}{k})$-max-min-optimal, and there is a pricing with $k$ different prices that is $\Theta(\frac{n}{k})$-max-min-optimal.
\end{proposition}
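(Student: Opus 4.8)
The plan is to exhibit one family of \emph{regular} (but non‑MHR) marginals whose max‑min optimal revenue is $\Theta(n)$, spread over $n$ exponentially separated ``scales'' so that any pricing with few distinct prices can reach only a few of them. Fix a large parameter $M=M(n)$ (say $M=n^{3}$), set $V_i:=M^{i}$, and let $F_i$ be the equal‑revenue distribution truncated at $V_i$: $F_i(v)=1-1/v$ on $[1,V_i)$ with a truncation atom of mass $1/V_i$ at $V_i$ (a truncation atom of exactly the kind permitted in \cref{sub:uniform-regular}). Each $F_i$ is regular but not MHR --- equal‑revenue distributions being the textbook separator of these classes --- so \cref{mhr} does not apply; and the monopoly (take‑it‑or‑leave‑it) revenue of $F_i$ equals $1$, attained at every price in $[1,V_i]$. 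The one elementary fact used throughout is that item $i$ is sold only when $v_i\ge p_i$, so for \emph{any} pricing $p$ and \emph{any} compatible $F$ the revenue from item $i$ is at most $p_i\cdot\Pr[v_i\ge p_i]\le 1$, whence $R(p,F)\le n$ and $R^{*}\le n$.

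Step two: show $R^{*}=\Theta(n)$ by analyzing the pricing $p^{*}$ with $p^{*}_i=V_i/2$ (it uses $n$ distinct prices). Fix a compatible $F$ and let $a_j=\Pr[\text{the buyer purchases item }j]$. The crucial point is that there is no cannibalization ``from below'': whenever $v_i=V_i$ (probability $1/V_i$), item $i$ yields utility $V_i/2$, strictly more than the largest utility $V_{i-1}/2$ any cheaper item can ever yield, so the buyer purchases item $i$ or a more expensive one; hence $\sum_{j\ge i}a_j\ge 1/V_i$. Abel summation (with $V_0:=0$, and using $V_j-V_{j-1}\ge V_j(1-1/M)$) gives
\[
R(p^{*},F)=\tfrac12\sum_{j}V_j a_j=\tfrac12\sum_{j}(V_j-V_{j-1})\sum_{\ell\ge j}a_\ell\ \ge\ \tfrac12\sum_{j}(V_j-V_{j-1})\tfrac{1}{V_j}\ \ge\ \tfrac{n}{2}\Bigl(1-\tfrac1M\Bigr).
\]
Since $F$ was arbitrary, $R(p^{*})\ge\tfrac{n}{2}(1-\tfrac1M)$, and combined with $R^{*}\le n$ this gives $R^{*}=\Theta(n)$.

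Step three: show any pricing $p$ with at most $k$ distinct finite prices has $R(p)=O(k)$. Items priced below $1$ are always available but are never sold for more than $1$, so they contribute at most $1$ to the revenue in total; items priced above their maximum value contribute $0$; and for the remaining items --- those priced within their ``reasonable'' range $[1,V_i]$ --- note that $\Pr[v_i\ge r]=1/r$ for every item $i$ and every price $r\in[1,V_i]$, so the Adversary can \emph{nest the activation events}: for the distinct reasonable prices $r_1<\dots<r_{m'}$ ($m'\le k$) pick nested events $E_1\supseteq\dots\supseteq E_{m'}$ with $\Pr[E_s]=1/r_s$ and couple values (e.g.\ driving all items from one source of randomness) so that each such item is available exactly on the $E_s$ corresponding to its price. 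On the region where $E_s$ holds but $E_{s+1}$ does not, no available item is priced above $r_s$, so the price paid is at most $r_s$; summing, $R(p)\le\sum_s r_s\bigl(\tfrac1{r_s}-\tfrac1{r_{s+1}}\bigr)+1=\sum_s\bigl(1-\tfrac{r_s}{r_{s+1}}\bigr)+1\le m'+1\le k+1$ (with $r_{m'+1}:=\infty$). For the matching positive direction, price items $1,\dots,k$ at $V_1/2,\dots,V_k/2$ and the rest at $\infty$: the Abel argument of step two, restricted to items $1,\dots,k$ (the $\infty$‑priced items are never sold and never cannibalize), yields $R(p)\ge\tfrac{k}{2}(1-\tfrac1M)$, while the bound just proved yields $R(p)\le k+1$; hence $R(p)=\Theta(k)$ and $p$ is $\Theta(n/k)$‑max‑min‑optimal. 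Since conversely any $\le k$‑distinct‑price pricing has approximation factor $\ge R^{*}/(k+1)=\Omega(n/k)$, none is $o(n/k)$‑max‑min‑optimal, which is \cref{pro:single-nonopt-cont}.

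The main obstacle --- the only step with genuine content --- is step two, the lower bound $R^{*}=\Theta(n)$: one must argue that \emph{no} correlation can erase more than a constant fraction of the aggregate monopoly revenue of the near‑monopoly pricing. The engine is the ``no cannibalization from below'' observation together with Abel summation over the geometrically growing scales $V_i$, which forces even the revenue‑minimizing (nested) correlation to leave $\Theta(n)$ behind. Secondary, routine points are: verifying that the marginals are regular yet non‑MHR (immediate for truncated equal‑revenue distributions); spelling out the nesting correlation in step three and checking it is compatible with the marginals; and, should one insist on strictly increasing continuous marginals, checking that mild smoothing of the truncation/bottom behavior changes nothing.
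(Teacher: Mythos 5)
Your construction and analysis match the paper's in essence: both instances consist of truncated equal‑revenue marginals with geometrically growing truncation points (you take ratio $M=n^{3}$, the paper's \cref{ex:single-nonopt-cont} takes ratio $2$); both price at half the truncation point and bound $R^{*}$ from above by $\sum_i \Mye(F_i)=n$; and both lower‑bound the robust revenue of that pricing via the ``no cannibalization from below'' observation that the rare event of a high value for item~$i$ forces a sale at price $\ge p_i$ --- your Abel summation over $A_j=\sum_{\ell\ge j}a_\ell$ is exactly the telescoping in the paper's \cref{lem:half-thresh} (the paper uses a threshold $\tau_j=\tfrac32\cdot 2^j$ rather than just the truncation atom, which makes its version robust to the smoothing it later invokes; with your larger ratio the atom alone is already enough, and your remark about smoothing covers this). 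The one place you take a somewhat different route is the $O(k)$ upper bound for $\le k$ distinct prices: you bound $R(p,\com)$ directly by decomposing over the nested quantile events $E_s\supseteq E_{s+1}$, getting $\sum_s r_s(1/r_s-1/r_{s+1})+1\le k+1$, whereas the paper first shows that under $\com$ the pricing is revenue‑dominated by one with $\le k$ \emph{finite} prices (keeping, per distinct price, only the item with the largest truncation point) and then applies $R(p',\com)\le\sum_{i\in S}\Mye(F_i)=|S|$ (\cref{cor:simple-UB-subset}); the two arguments are morally equivalent and yours is arguably a touch more self‑contained, though it would benefit from stating explicitly that the ``nested'' coupling is just $\com$ (draw a single quantile $q\sim U[0,1]$ and set $v_i=F_i^{-1}(q)$), which makes compatibility with the marginals immediate.
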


An immediate corollary is the following:

\begin{corollary}[Uniform pricing far from max-min optimal]
	There exists an instance of the max-min pricing problem with $n$ regular marginals for which no uniform pricing is $o(n)$-max-min optimal. 
\end{corollary}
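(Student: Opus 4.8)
The plan is to read this off immediately from \cref{pro:single-nonopt-cont} by taking $k=1$. First I would fix the $n$-item family of instances with regular marginals furnished by \cref{pro:single-nonopt-cont}, and invoke its first conclusion at $k=1$: on that instance, no pricing that uses at most one distinct price value is $o(n/1)=o(n)$-max-min optimal. The only remaining step is the trivial observation that a uniform pricing belongs to this class: by definition a uniform pricing satisfies $p_1=\cdots=p_n$, so it uses exactly one distinct price value (the common price). Hence no uniform pricing on this instance is $o(n)$-max-min optimal, which is exactly the statement of the corollary.

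Two small points are worth making explicit. First, the degenerate uniform pricing that sets $p_i=\infty$ for every $i$ is still captured: it uses the single ``price'' $\infty$, so it too is a pricing with at most one distinct price value and is therefore covered by \cref{pro:single-nonopt-cont} directly --- no separate argument is needed for it, and the conclusion is robust to whether or not one chooses to count $\infty$ among the distinct prices. (Implicit here is that $R^*>0$ on the instance, which is guaranteed by the second half of \cref{pro:single-nonopt-cont}, exhibiting already at $k=1$ a pricing with positive robust revenue guarantee.) Second, as a bonus, applying \cref{pro:single-nonopt-cont} at $k=2$ shows that even a single price --- one item priced below $\infty$ and the rest at $\infty$, using the two distinct values ``finite price'' and $\infty$ --- is no better than $o(n/2)=o(n)$-max-min optimal on the same instance, so the negative conclusion is not special to uniform pricings.

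I do not expect any genuine obstacle in the corollary itself: the entire content is contained in \cref{pro:single-nonopt-cont}, i.e.\ in the construction of the $n$ regular marginals and the accompanying worst-case correlation structures that force linearly many distinct prices, and that proposition is assumed here. Thus the proof is the one-line specialization $k=1$ together with the observation that a uniform pricing uses a single distinct price.
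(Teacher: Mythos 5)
Your proof is correct and is exactly the paper's intended argument: the paper states the corollary as "an immediate corollary" of \cref{pro:single-nonopt-cont}, and your one-line specialization to $k=1$, together with the observation that a uniform pricing uses a single distinct price, is precisely that immediate deduction.
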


\cref{pro:single-nonopt-cont} and its corollary show that unlike in the MHR case, in the regular case there can be a linear gap between that revenue that is achievable with uniform versus general pricing, and in fact the robust revenue guarantee may increase linearly as more distinct prices are used. 

Our construction of an instance that proves \cref{pro:single-nonopt-cont} appears in \cref{ex:single-nonopt-cont}. 
The construction uses marginals with an exponentially wide range of values and with a Myerson revenue of~$1$ across all items---we explain in \cref{appx:beyond-MHR} why this can be expected. The construction makes use of the \emph{equal revenue} distribution $\eqrev$, defined as $\eqrev(v)=1-\frac{1}{v}$ for $v\in[1,\infty)$, after truncating it at different points. A salient property of this distribution (including its truncated versions) is that in a single-item setting, taking any value $v$ in the support as the item's price yields the same expected revenue of~$1$ for the seller (since the revenue is $v\cdot\bigl(1-F(v)\bigr)=1$ for every $v$). 

\begin{example}[Truncated equal-revenue marginals]
	\label{ex:single-nonopt-cont}
	Let $t$ be a vector of $n$ truncation points $(t_1,\dots,t_n)$ where $t_j=2^{j+1}$ for every $j\in[n]$.   
	Let the $n$ marginals be as follows: For every item $j\in[n]$, let $F_j$ be $\eqrev$ truncated at $t_j$, that is, $F_j(v)=1-\frac{1}{v}$ for $v\in[1,t_j)$ and $F_j(t_j)=1$.
\end{example}

While as defined the marginals in \cref{ex:single-nonopt-cont} are not continuous, they can be smoothed to have no point masses (by ``smearing'' the mass at the truncation point uniformly over a vanishingly small interval), resulting in regular distributions. An alternative to smoothing is extending the definition of regular distributions to those with truncation points, by simply defining the virtual value at a truncation point $t_j$ to be $t_j$ itself, and requiring monotonicity of the virtual value function as in the standard definition of regularity.  

The main component in the analysis of \cref{ex:single-nonopt-cont} is the following lemma:

\begin{lemma}[Pricing at half the threshold]
	\label{lem:half-thresh}
	Consider \cref{ex:single-nonopt-cont} and fix a subset $S\subseteq [n]$ of items. Let $p$ be the pricing of every item $j\in S$ at price $\frac{t_j}{2}=2^j$, and
	every item in $\overline{S}$ at $\infty$. Then $R(p)=\Theta(|S|)$, and $p$ is $\Theta(\frac{n}{|S|})$-max-min optimal.
\end{lemma}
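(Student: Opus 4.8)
The plan is to reduce \cref{lem:half-thresh} to two quantitative claims and prove each: (i) for \emph{every} compatible $F$ one has $\tfrac14|S|\le R(p,F)\le|S|$, which immediately gives $R(p)=\inf_F R(p,F)=\Theta(|S|)$; and (ii) $R^*=\Theta(n)$ for this instance. Granting (i) and (ii), the pricing $p$ is $\Theta(n/|S|)$-max-min optimal essentially by definition, since $R^*/R(p)=\Theta(n)/\Theta(|S|)=\Theta(n/|S|)$ (so $p$ is $O(n/|S|)$-max-min optimal, and this is tight). Throughout, write the items of $S$ in increasing order as $j_1<\dots<j_m$ with $m=|S|$, so $p_{j_i}=2^{j_i}$ and all items outside $S$ are priced at $\infty$. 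The only facts we need about the marginals are that, since $F_j$ is $\eqrev$ truncated at $t_j=2^{j+1}$: we always have $v_j\le 2^{j+1}$; $\Pr[v_j>2^{j}]=2^{-j}$; and the unique atom of $F_j$ sits at $v_j=2^{j+1}$, with mass $2^{-(j+1)}$. (The smoothed, genuinely-regular version of \cref{ex:single-nonopt-cont} is handled identically up to negligible corrections.)

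The upper bound $R(p,F)\le|S|$ is immediate: item $j_i$ can be purchased only when $v_{j_i}\ge p_{j_i}=2^{j_i}$, an event of probability $2^{-j_i}$ because it depends only on the marginal $F_{j_i}$; hence $R(p,F)=\sum_{i}2^{j_i}\Pr_F[\text{item }j_i\text{ bought}]\le\sum_i 2^{j_i}\cdot 2^{-j_i}=m$. For $R^*\le O(n)$ I would bound revenue by welfare: for any pricing $p'$, $R(p')\le R(p',\com)\le \mathbb{E}_{q\sim U[0,1]}\bigl[\max_j v_j(q)\bigr]$, and under $\com$ one has $v_j(q)=\min\{1/(1-q),\,2^{j+1}\}$, so $\max_j v_j(q)=\min\{1/(1-q),\,2^{n+1}\}$, whose expectation is $\int_0^{1-2^{-(n+1)}}\frac{dq}{1-q}+1=(n+1)\ln 2+1=O(n)$; thus $R^*\le(n+1)\ln 2+1$. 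The matching bound $R^*=\Omega(n)$ is then free: instantiate the lower bound of (i) with $S=[n]$ to get a pricing with robust guarantee $\ge n/4$.

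The heart of the argument is the lower bound $R(p,F)\ge\tfrac14|S|$ for every compatible $F$. The structural key is that since $v_j\le 2^{j+1}$ and $p_j=2^j$, every item $j'\in S$ yields the buyer utility at most $2^{j'}$, while items outside $S$ are unavailable. Hence on the atom event $C_i:=\{v_{j_i}=2^{j_i+1}\}$ (of probability $2^{-(j_i+1)}$), item $j_i$ offers utility exactly $2^{j_i}\ge 0$, so the buyer purchases \emph{some} item $j'$ of utility at least $2^{j_i}$; this forces $j'\in S$ with $2^{j'}\ge 2^{j_i}$, i.e.\ $j'\ge j_i$, so the revenue on $C_i$ is at least $p_{j'}=2^{j'}\ge 2^{j_i}$ --- regardless of the tie-breaking rule. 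Let $M$ be the largest index $i$ with $C_i$ occurring (and $M=0$ if none occurs); then the revenue is at least $2^{j_M}\mathbf 1[M\ge1]$, so $R(p,F)\ge\sum_{i=1}^m 2^{j_i}\Pr_F[M=i]$. Since $C_i\subseteq\{M\ge i\}$ we have $\Pr_F[M\ge i]\ge 2^{-(j_i+1)}$, and summation by parts gives $\sum_i 2^{j_i}\Pr_F[M=i]=\sum_i(2^{j_i}-2^{j_{i-1}})\Pr_F[M\ge i]$ with the convention $2^{j_0}:=0$, hence $R(p,F)\ge\sum_i(2^{j_i}-2^{j_{i-1}})2^{-(j_i+1)}$. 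For $i\ge2$ each summand equals $\tfrac12\bigl(1-2^{j_{i-1}-j_i}\bigr)\ge\tfrac12\bigl(1-\tfrac12\bigr)=\tfrac14$ (using $j_{i-1}\le j_i-1$), and the $i=1$ summand equals $\tfrac12$; so the sum is at least $m/4$, as claimed. Combining this with the two upper bounds finishes (i), and with $R^*=\Theta(n)$ finishes the lemma.

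I expect the lower bound of (i) to be the only genuine obstacle. The naive approach --- a union bound over the events ``item $j_i$ is above its price'', or even over the atom events $C_i$ --- fails, since an adversarial correlation can nest these events almost entirely inside one another, so their individual contributions cannot be summed. The remedy is to (a) work only with the atom events $C_i$, which are exactly the above-price outcomes of item $j_i$ that cannot be cannibalized by any lower-priced offered item (because no such item reaches utility $2^{j_i}$), and (b) charge revenue not to each $C_i$ separately but to the running maximum index $M$; Abel summation then converts the marginal bound $\Pr[M\ge i]\ge 2^{-(j_i+1)}$ into a $\Theta(1)$ contribution per element of $S$, exploiting the geometric spacing of the prices $2^{j_i}$. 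Everything else --- the two upper bounds and $R^*=\Omega(n)$ --- is short once this core estimate is in hand.
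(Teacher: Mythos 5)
Your proof is correct and follows essentially the same route as the paper's: both establish the uniform upper bound $R(p,F)\le|S|$ from the marginals (the paper via its Myerson-revenue corollary, you directly), and both prove $R(p,F)\ge\Omega(|S|)$ by identifying a geometrically-spaced family of ``high-value'' events for the items of $S$ that cannot be cannibalized by lower-priced items, then applying summation by parts to the running maximum of those events. The only cosmetic difference is the choice of threshold --- the paper uses $\tau_j=\tfrac{3}{2}\cdot 2^j$ (giving $|S|/3$ and avoiding any reliance on the top atom, so it carries over to the smoothed regular version without the caveat you need), while you use the atom at $v_j=2^{j+1}$ (giving $|S|/4$).
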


See \cref{appx:beyond-MHR} for the proof of \cref{lem:half-thresh} as well as of \cref{pro:single-nonopt-cont}. We end this section with several takeaways from \cref{ex:single-nonopt-cont}. It is informative to compare the (approximate) max-min pricing for this example to the max-min pricing for an additive buyer, namely, pricing every item at its monopoly price ~\cite{Carroll17}. In \cref{ex:single-nonopt-cont}, any value in the item's value range is a monopoly price. But it turns out that to get approximate max-min optimality, the pricing needs to be ``spread-out'', in the sense that the probability of each consecutive item's value to exceed its price decreases by a factor of 2 (cf., \cite{BeiGLT19}). 
This kind of pricing cannot be computed separately for the different marginals, nor is it simply a function of the monopoly prices pf the marginals.
Another consequence of \cref{ex:single-nonopt-cont} that stands in contrast to the additive case is the following nonmonotonicity property of the max-min optimal pricing; in the mechanism design literature, nonmonotonicity is usually taken as a sign of complexity (e.g.,~\cite{HartReny}):

\begin{corollary}[Nonmonotonicity of max-min revenue in the marginals] 
	\label{cor:nonmonotone}
	There exist two max-min pricing instances with $n$ items each and optimal robust revenue guarantees $R^*_1$ and $R^*_2$, respectively, such that for every item $j$, the marginal $F_j^1$ of $j$ in the first setting first-order stochastically dominates the marginal $F_j^2$ of $j$ in the second setting, yet $R^*_1\ll R^*_2$. In fact, the ratio $R^*_2/R^*_1$ can be as large as $\Omega(n)$.
\end{corollary}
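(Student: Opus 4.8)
The plan is to realize the nonmonotonicity using the truncated equal-revenue marginals of \cref{ex:single-nonopt-cont}. I would take the \emph{second} instance to be exactly \cref{ex:single-nonopt-cont}: for each $j\in[n]$ let $F^2_j$ be $\eqrev$ truncated at $t_j=2^{j+1}$. Applying \cref{lem:half-thresh} with $S=[n]$, the pricing $p$ with $p_j=2^j$ for every $j$ is $\Theta(1)$-max-min optimal and has robust revenue $\Theta(n)$, so $R^*_2=\Theta(n)$. For the \emph{first} instance I would make all $n$ marginals identical, each equal to $F^1:=\eqrev$ truncated at the largest threshold $t_n=2^{n+1}$ (leaving $F^1$ untruncated would work just as well). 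Because $F^1$ is a truncation of the equal-revenue distribution, every price in its support yields single-item revenue exactly $1$, so its Myerson monopoly revenue is $1$; by \cref{obs:id-case}, with identical marginals a single price set to the monopoly price is max-min optimal, and therefore $R^*_1=1$.

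Next I would verify the stochastic-dominance comparison. For each coordinate $j$, both $F^1_j=F^1$ and $F^2_j$ arise from the same base distribution $\eqrev$ by truncation, at $t_n=2^{n+1}$ and at $t_j=2^{j+1}\le t_n$ respectively; truncating at a larger point only shifts mass upward, so $F^1_j(v)\le F^2_j(v)$ for every $v$ --- the two agree on $[1,2^{j+1})$, while on $[2^{j+1},2^{n+1}]$ we have $F^1_j(v)=1-\tfrac1v\le 1=F^2_j(v)$ (with equality throughout when $j=n$). Hence $F^1_j$ first-order stochastically dominates $F^2_j$ for every item $j$. Combining the three ingredients gives $R^*_2/R^*_1=\Theta(n)$, hence $\Omega(n)$, as claimed. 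Finally, as already observed for \cref{ex:single-nonopt-cont}, truncated equal-revenue distributions are regular in the extended sense of that discussion (or can be smoothed into continuous regular distributions without affecting any of the estimates), so both instances have regular marginals.

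The step I expect to need the most care is pinning down $R^*_1$: one must be sure that making the first instance's marginals \emph{stochastically larger but identical} genuinely caps its optimal robust revenue at a constant. This is precisely where \cref{obs:id-case} does the heavy lifting --- identical marginals let the Adversary play $\com$, in which all item values coincide, collapsing the max-min problem to a single-item monopoly-pricing problem --- combined with the defining property of the equal-revenue distribution that this monopoly revenue is $1$ no matter where one truncates. Everything else (the FOSD check, the invocation of \cref{lem:half-thresh} for the second instance, and dividing) is routine.
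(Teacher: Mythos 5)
Your proof is correct and takes essentially the same approach as the paper: the second instance is exactly \cref{ex:single-nonopt-cont} with $R^*_2=\Theta(n)$ via \cref{lem:half-thresh}, the first instance has identical equal-revenue marginals with $R^*_1=1$ via \cref{obs:id-case}, and FOSD holds coordinate-wise since truncating at a larger (or no) point only shifts mass upward. The sole cosmetic difference is that you truncate the first instance's marginals at $t_n$ while the paper leaves them as the untruncated $\eqrev$; as you note, both choices work identically.
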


\begin{proof}
	Let the first instance have identical marginals $\eqrev$ for all items, and let the second instance have truncated equal-revenue marginals as in \cref{ex:single-nonopt-cont}. Clearly, $\eqrev$ first-order stochastically dominates its truncated version for every $j\in[n]$. From \cref{obs:id-case} we know that $R^*_1=1$ (the Myerson revenue of the equal-revenue distribution), and $R^*_2=\Theta(n)$ by \cref{lem:half-thresh}. This completes the proof. 
\end{proof}

\bibliographystyle{plainnat}
\bibliography{abb,bib}

\begin{thebibliography}{38}
\providecommand{\natexlab}[1]{#1}
\providecommand{\url}[1]{\texttt{#1}}
\expandafter\ifx\csname urlstyle\endcsname\relax
  \providecommand{\doi}[1]{doi: #1}\else
  \providecommand{\doi}{doi: \begingroup \urlstyle{rm}\Url}\fi

\bibitem[Azar et~al.(2013)Azar, Daskalakis, Micali, and Weinberg]{AzarDMW13}
Pablo~Daniel Azar, Constantinos Daskalakis, Silvio Micali, and S.~Matthew
  Weinberg.
\newblock Optimal and efficient parametric auctions.
\newblock In \emph{Proceedings of SODA 2013}, pages 596--604, 2013.

\bibitem[Babaioff et~al.(2014)Babaioff, Immorlica, Lucier, and
  Weinberg]{BabaioffILW14}
Moshe Babaioff, Nicole Immorlica, Brendan Lucier, and S.~Matthew Weinberg.
\newblock A simple and approximately optimal mechanism for an additive buyer.
\newblock In \emph{Proceedings of FOCS 2015}, pages 21--30, 2014.

\bibitem[Bandi and Bertsimas(2014)]{BandiB14}
Chaithanya Bandi and Dimitris Bertsimas.
\newblock Optimal design for multi-item auctions: {A} robust optimization
  approach.
\newblock \emph{Math. Oper. Res.}, 39\penalty0 (4):\penalty0 1012--1038, 2014.

\bibitem[Bei et~al.(2019)Bei, Gravin, Lu, and Tang]{BeiGLT19}
Xiaohui Bei, Nick Gravin, Pinyan Lu, and Zhihao~Gavin Tang.
\newblock Correlation-robust analysis of single item auction.
\newblock In \emph{Proceedings of the 30th {ACM-SIAM} Symposium on Discrete
  Algorithms (SODA)}, pages 193--208, 2019.

\bibitem[Blum and Spencer(1995)]{BlumS95}
A.~Blum and J.~H. Spencer.
\newblock Coloring random and semi-random k-colorable graphs.
\newblock \emph{Journal of Algorithms}, 19\penalty0 (2):\penalty0 204--234,
  1995.

\bibitem[Briest and Krysta(2011)]{BriestK11}
Patrick Briest and Piotr Krysta.
\newblock Buying cheap is expensive: Approximability of combinatorial pricing
  problems.
\newblock \emph{{SIAM} J. Comput.}, 40\penalty0 (6):\penalty0 1554--1586, 2011.

\bibitem[Briest et~al.(2010)Briest, Chawla, Kleinberg, and
  Weinberg]{BriestCKW10}
Patrick Briest, Shuchi Chawla, Robert Kleinberg, and S.~Matthew Weinberg.
\newblock Pricing randomized allocations.
\newblock In \emph{Proceedings of the Twenty-First Annual {ACM-SIAM} Symposium
  on Discrete Algorithms (SODA)}, pages 585--597, 2010.

\bibitem[Cai and Daskalakis(2011)]{CaiD11}
Yang Cai and Constantinos Daskalakis.
\newblock Extreme-value theorems for optimal multidimensional pricing.
\newblock In \emph{{IEEE} 52nd Annual Symposium on Foundations of Computer
  Science, FOCS}, pages 522--531, 2011.

\bibitem[Cai and Daskalakis(2015)]{CaiD15}
Yang Cai and Constantinos Daskalakis.
\newblock Extreme value theorems for optimal multidimensional pricing.
\newblock \emph{Games and Economic Behavior}, 92:\penalty0 266--305, 2015.

\bibitem[Carroll(2015)]{Carroll15}
Gabriel Carroll.
\newblock Robustness and linear contracts.
\newblock \emph{American Economic Review}, 105\penalty0 (2):\penalty0 536--563,
  2015.

\bibitem[Carroll(2017)]{Carroll17}
Gabriel Carroll.
\newblock Robustness and separation in multidimensional screening.
\newblock \emph{Econometrica}, 85\penalty0 (2):\penalty0 453--488, 2017.

\bibitem[Carroll(2019)]{Carroll19}
Gabriel Carroll.
\newblock Robustness in mechanism design and contracting.
\newblock \emph{Annual Review of Economics}, 11:\penalty0 139--166, 2019.

\bibitem[Chawla et~al.(2007)Chawla, Hartline, and Kleinberg]{ChawlaHK07}
Shuchi Chawla, Jason~D. Hartline, and Robert~D. Kleinberg.
\newblock Algorithmic pricing via virtual valuations.
\newblock In \emph{Proceedings of the 8th {ACM} Conference on Electronic
  Commerce {EC}}, pages 243--251, 2007.

\bibitem[Chawla et~al.(2010)Chawla, Hartline, Malec, and Sivan]{ChawlaHMS10}
Shuchi Chawla, Jason~D. Hartline, David~L. Malec, and Balasubramanian Sivan.
\newblock Multi-parameter mechanism design and sequential posted pricing.
\newblock In \emph{Proceedings of the 42nd {ACM} Symposium on Theory of
  Computing (STOC)}, pages 311--320, 2010.

\bibitem[Chawla et~al.(2015)Chawla, Malec, and Sivan]{ChawlaMS15}
Shuchi Chawla, David~L. Malec, and Balasubramanian Sivan.
\newblock The power of randomness in bayesian optimal mechanism design.
\newblock \emph{Games and Economic Behavior}, 91:\penalty0 297--317, 2015.

\bibitem[Chen et~al.(2018)Chen, Diakonikolas, Paparas, Sun, and
  Yannakakis]{ChenDPSY18}
Xi~Chen, Ilias Diakonikolas, Dimitris Paparas, Xiaorui Sun, and Mihalis
  Yannakakis.
\newblock The complexity of optimal multidimensional pricing for a unit-demand
  buyer.
\newblock \emph{Games and Economic Behavior}, 110:\penalty0 139--164, 2018.

\bibitem[Cole and Rao(2017)]{cr17}
Richard Cole and Shravas Rao.
\newblock Applications of {\(\alpha\)}-strongly regular distributions to
  {B}ayesian auctions.
\newblock \emph{{ACM} Trans. Economics and Comput.}, 5\penalty0 (4):\penalty0
  18:1--18:29, 2017.

\bibitem[Cole and Roughgarden(2014)]{ColeRoughgarden14}
Richard Cole and Tim Roughgarden.
\newblock The sample complexity of revenue maximization.
\newblock In \emph{Proceedings of the 46th Annual ACM Symposium on Theory of
  Computing (STOC)}, pages 243--252, 2014.

\bibitem[Daskalakis et~al.(2014)Daskalakis, Deckelbaum, and
  Tzamos]{DaskalakisDT14}
Constantinos Daskalakis, Alan Deckelbaum, and Christos Tzamos.
\newblock The complexity of optimal mechanism design.
\newblock In \emph{Proceedings of the Twenty-Fifth Annual {ACM-SIAM} Symposium
  on Discrete Algorithms (SODA)}, pages 1302--1318, 2014.

\bibitem[Daskalakis et~al.(2017)Daskalakis, Deckelbaum, and
  Tzamos]{DaskalakisDT17}
Constantinos Daskalakis, Alan Deckelbaum, and Christos Tzamos.
\newblock Strong duality for a multiple-good monopolist.
\newblock \emph{Econometrica}, 85\penalty0 (3):\penalty0 735--767, 2017.

\bibitem[Dhangwatnotai et~al.(2015)Dhangwatnotai, Roughgarden, and
  Yan]{DhangwatnotaiRY15}
Peerapong Dhangwatnotai, Tim Roughgarden, and Qiqi Yan.
\newblock Revenue maximization with a single sample.
\newblock \emph{Games and Economic Behavior}, 91:\penalty0 318--333, 2015.

\bibitem[D{\"{u}}tting et~al.(2019)D{\"{u}}tting, Roughgarden, and
  Talgam{-}Cohen]{DuttingRT19}
Paul D{\"{u}}tting, Tim Roughgarden, and Inbal Talgam{-}Cohen.
\newblock Simple versus optimal contracts.
\newblock In \emph{Proceedings of the 2019 {ACM} Conference on Economics and
  Computation (EC)}, pages 369--387, 2019.

\bibitem[Elkind(2007)]{Elkind07}
Edith Elkind.
\newblock Designing and learning optimal finite support auctions.
\newblock In \emph{Proceedings of SODA 2007}, pages 736--745, 2007.

\bibitem[Gonczarowski and Weinberg(2018)]{GonczarowskiW18}
Yannai~A. Gonczarowski and S.~Matthew Weinberg.
\newblock The sample complexity of up-to-$\varepsilon$ multi-dimensional
  revenue maximization.
\newblock In \emph{Proceedings of the 59th IEEE Symposium on Foundations of
  Computer Science (FOCS)}, pages 416--426, 2018.

\bibitem[Gravin and Lu(2018)]{GravinL18}
Nick Gravin and Pinyan Lu.
\newblock Separation in correlation-robust monopolist problem with budget.
\newblock In \emph{Proceedings of the 29th {ACM-SIAM} Symposium on Discrete
  Algorithms {SODA}}, pages 2069--2080, 2018.

\bibitem[Hart and Nisan(2017)]{HartN17}
Sergiu Hart and Noam Nisan.
\newblock Approximate revenue maximization with multiple items.
\newblock \emph{Journal of Economic Theory}, 172:\penalty0 313 -- 347, 2017.

\bibitem[Hart and Nisan(2019)]{HartNisan2019Correlated}
Sergiu Hart and Noam Nisan.
\newblock Selling multiple correlated goods: Revenue maximization and menu-size
  complexity.
\newblock \emph{Journal of Economic Theory}, 183:\penalty0 991 -- 1029, 2019.

\bibitem[Hart and Reny(2015)]{HartReny}
Sergiu Hart and Philip~J. Reny.
\newblock Maximal revenue with multiple goods: Nonmonotonicity and other
  observations.
\newblock \emph{Theoretical Economics}, 10\penalty0 (3):\penalty0 893--922,
  2015.

\bibitem[K{\"o}k et~al.(2008)K{\"o}k, Fisher, and
  Vaidyanathan]{kok2008assortment}
A~G{\"u}rhan K{\"o}k, Marshall~L Fisher, and Ramnath Vaidyanathan.
\newblock Assortment planning: Review of literature and industry practice.
\newblock In \emph{Retail supply chain management}, pages 99--153. Springer,
  2008.

\bibitem[Linial and Luria(2014)]{LinialL2014}
Nathan Linial and Zur Luria.
\newblock On the vertices of the $d$-dimensional {B}irkhoff polytope.
\newblock \emph{Discrete \& Computational Geometry}, 51\penalty0 (1):\penalty0
  161--170, 2014.

\bibitem[Morgenstern and Roughgarden(2016)]{MorgensternR16}
Jamie Morgenstern and Tim Roughgarden.
\newblock Learning simple auctions.
\newblock In \emph{Proceedings of COLT 2016}, pages 1298--1318, 2016.

\bibitem[Myerson(1981)]{Myerson}
Roger~B. Myerson.
\newblock Optimal auction design.
\newblock \emph{Mathematics of Operations Research}, 6\penalty0 (1):\penalty0
  58--73, 1981.

\bibitem[Riley and Zeckhauser(1983)]{RileyZeckhauser}
John Riley and Richard Zeckhauser.
\newblock Optimal selling strategies: When to haggle, when to hold firm.
\newblock \emph{The Quarterly Journal of Economics}, 98\penalty0 (2):\penalty0
  267--289, 1983.

\bibitem[Roughgarden(2019)]{Roughgarden19}
Tim Roughgarden.
\newblock Beyond worst-case analysis.
\newblock \emph{Commun. {ACM}}, 62\penalty0 (3):\penalty0 88--96, 2019.

\bibitem[Rubinstein and Weinberg(2018)]{RubinsteinW18}
Aviad Rubinstein and S.~Matthew Weinberg.
\newblock Simple mechanisms for a subadditive buyer and applications to revenue
  monotonicity.
\newblock \emph{ACM Transactions on Economics and Computation}, 6\penalty0
  (3--4):\penalty0 \#19, 2018.

\bibitem[Scarf(1958)]{Scarf58}
H.~E. Scarf.
\newblock A min-max solution of an inventory problem.
\newblock In K.~J. Arrow, editor, \emph{Studies in The Mathematical Theory of
  Inventory and Production}, pages 201--220. 1958.

\bibitem[Thanassoulis(2004)]{Thanassoulis04}
John Thanassoulis.
\newblock Haggling over substitutes.
\newblock \emph{Journal of Economic Theory}, 117:\penalty0 217--245, 02 2004.

\bibitem[Zuckerman(2007)]{Zuckerman07}
David Zuckerman.
\newblock Linear degree extractors and the inapproximability of max clique and
  chromatic number.
\newblock \emph{Theory of Computing}, 3:\penalty0 103--128, 2007.

\end{thebibliography}

\appendix

\section{Proofs Omitted from Section \ref{sec:adv-BR}}
\label{appx:proofs-alg}
\begin{proof}[Proof of \cref{cor:alg}]
	Suppose towards contradiction that this is false, and let $i$ be the smallest index contradicting this. That is, the coupling $c$ obtained by \cref{alg:adv-BR} simultaneously realizes $K_{[j]}$ for every $j<i$ (i.e., $k_j(c)=K_{[j]}-K_{[j-1]}$), but $k_i(c)<K_{[i]}-K_{[i-1]}$.
	However, by \cref{lem:residual-optimality}, 
	$k_i(c) = \max\bigl\{k_i(c') ~\big|~ \text{$c'$ is a coupling} \And k_j(c')=k_j(c)=K_{[j]}-K_{[j-1]} \ \forall j<i\bigr\}$, and by \cref{prop:sim-max}, there exists a coupling $c'$ such that $k_j(c')=K_{[j]}-K_{[j-1]}$ for every $j<i$ and $k_i(c') = K_{[i]}-K_{[i-1]}$; a contradiction.
\end{proof}

\begin{proof}[Proof of \cref{lem:residual-optimality}]
	One can easily verify Property (1). Indeed, by design, the \cref{alg:adv-BR} finds the maximum possible number of chains dominated by item 1. 
	To prove Property (2), note that in order to satisfy $k_j(c')=k_j(c)$ for every $j<i$, it must be that $\sum_{j<i}k_j(c)$ utilities of each item $i, \ldots, n$ are already coupled (in iterations $j<i$). To maximize the number of chains dominated by item $i$, it is best if the coupled utilities of items $>i$ are largest and those of item $i$ are smallest. This is exactly what \cref{alg:adv-BR} does.
\end{proof}

\begin{proof}[Proof of \cref{lem:add-1}]
	First, observe that by the fact that $k'_1<K_1$, it holds that $u_i^d \prec u_1^{k'_1+1}$ for every $i \geq 2$ (else the number of chains dominated by item 1 in any coupling is at most $k'_1$).
	We distinguish between two cases, based on the coupling before the sift\&lift process. 
	Case 1: the lowest chain is rooted at item 2. Then, its removal vacates utility $u_i^d$ for every $i\geq 3$ and the lemma follows. 
	Case 2: the lowest chain is rooted at item 1. Suppose by way of contradiction that after the sift\&lift process there exists an item $i\geq 3$ such that $u_i^d$ is coupled. The chain rooted at $u_2^{K_{[2]}-k'_1}$ vacated some utility $u_i^{\ell}$ of item $i$. Consider the (consecutive) utilities of item 1 following $u_2^{K_{[2]}-k'_1}$ in the set $S$, $u_1^q \geq \cdots \geq u_1^{k'_1}$. Some (possibly 0) of the chains rooted at these utilities were lifted to utilize a higher utility of item $i$ as a result of the removal of the bottom chain rooted at 2, but the bottom chain rooted at 1 was not (else, $u_i^d$ would be available). Let $u_1^r$ be the highest utility of item 1 among these utilities ($k'_1 \geq r \geq q$) that was not lifted, and let $u_i^{m}$ be the utility of item $i$ in the chain rooted at $u_1^r$. By the choice of $u_1^r$, we have that $u_i^{m-1}$ must be available. Thus, the fact that the chain rooted at $u_1^r$ was not lifted means that $u_i^{m-1}$ is not dominated by $u_1^r$. This means 
	that there are at most $k_1'-r+1$ utilities of item $i$ that are respectively (in some order, without repetition) dominated by utilities out of $u_1^r, \ldots, u_1^{k_1'}$. This, in turn, means that there are at most $k'_1$ utilities of item $i$ that are respectively dominated by utilities of item 1, contradicting $k'_1 < K_1$.
\end{proof}

\section{The Sift\&Lift Process (Algorithm \ref{alg:sift-lift})}
\label{appx:sift-lift}
In this section we present the sift\&lift process.

\begin{algorithm}
	\caption{The sift\&lift process; Input: utilities $u_1,\ldots,u_n$; (partial) coupling $c$.}
	\label{alg:sift-lift}
	\begin{algorithmic}[1]
		\STATE Let $t$ be the chain in $c$ rooted at $u_2^{K_{[2]}-k'_1}$
		\STATE $c \gets c \setminus \{t\}$ \qquad \COMMENT{i.e., decouple all utilities in chain t}
		\STATE Let $q$ be the index of the utility of item 1 following $u_2^{K_{[2]}-k'_1}$ in $u_{1,2}$
		\IF{$u_i^q \in c$}
		\FOR{$j=q, \ldots, k_1'$} 
		\STATE decouple all utilities in the chain rooted at $u_1^j$
		\FOR{$i=3, \ldots, n$} 
		\STATE let $\ell_i =  \arg\max_{\ell}\{u_i^{\ell} \mid u_i^{\ell}\not\in c \mbox{ and } u_i^{\ell}\prec u_1^j\}$ 
		\COMMENT{break ties towards a lower index}
		\ENDFOR 
		\STATE $c \gets \bigl(u_1^j,\{u_i^{\ell_i}\}_{i \geq 3}\bigr)$ \COMMENT{add lowest uncoupled utility of item 2}
		\ENDFOR
		\ENDIF
		\RETURN $c$
	\end{algorithmic}
\end{algorithm}

\section{The Adversary's Best Response: Nondiscrete Distributions}
\label{appx:alg-extensions}
In this \lcnamecref{appx:alg-extensions} we consider the adversary's best response when the distributions are not discrete. In this case the mechanics of the adversary's best response are essentially the same, however as is often the case with nondiscrete distributions, we run into problems of certain suprema and infima not necessarily being attainable. To see this, consider two items, one whose marginal distribution is uniform in $[1,2]$ and one whose marginal distribution is uniform in $[4,5]$. Consider pricing the first item at $1$ (so its utility is uniform in $[0,1]$) and the second item at $4$ (so its utility is also uniform in $[0,1]$).
To analyze the adversary's behavior, let us also assume that tie-breaking among the same utilities is in favor of the second (higher-priced) item.\footnote{This is the most commonly assumed tie-breaking rule, since as the seller is the mechanism designer and any tie-breaking chosen by the seller is still incentive-compatible for the buyer, it is generally assumed that the seller opts for the tie-breaking rule that maximizes her revenue.}

Let us first analyze the infimum of the revenues attainable by the adversary. We note that for every $\varepsilon>0$, the adversary can make the sale probability of the first item as high as $1-\varepsilon$ and the same probability of the second item as low as $\varepsilon$, by coupling each value $v\in[1+\varepsilon,2]$ of item $1$ with value $v-\varepsilon$ of item $2$ (and then, say, coupling each remaining value $v[1,1+\varepsilon$] of item $1$ with value $3+v+1-\varepsilon$ of item $2$), hence coupling each utility $u\in[\varepsilon,1]$ from item $1$ with utility $u-\varepsilon$ of item~$2$. This results in a  revenue of $1+4\cdot\varepsilon$, and so the infimum of revenues achievable by the adversary is $1$. (It is impossible for the adversary to achieve even lower revenue since for these prices with probability $1$ some item would be sold, and hence for any correlation the expected revenue must be at least the lowest price: $1$.)

We will now observe that there is nonetheless no correlated distribution with the given marginals that gives a revenue of $1$, as giving such a revenue would mean selling item $1$ with probability $1$, which means that the two utility distributions, which are both uniform on $[0,1]$, would have to be coupled such that the utility from item $1$ is with probability $1$ \emph{strictly} (due to the tie-breaking rule) higher than the utility from item $2$, which is impossible to achieve since the marginal distributions of both utilities are the same.\footnote{To see this, assume for contradiction that such a correlated distribution $F$ exists. Then by linearity of expectation and definition of the marginals, $\mathbb{E}_F[u_1-u_2]=\mathbb{E}_F[u_1]-\mathbb{E}_F[u_2]=\mathbb{E}_{U[0,1]}[u_1]-\mathbb{E}_{U[0,1]}[u_2]=0$. But remember that $u_1-u_2$ is an almost-surely-nonnegative random variable by assumption. Since its expectation is zero, we have that for every $m\in\mathbb{N}$ the probability that $u_1-u_2\ge\nicefrac{1}{m}$ is zero. Therefore, by $\sigma$-additivity, we have that the probability that $u_1-u_2>0$ is zero---a contradiction.}
Note that indeed the limit of the above couplings, as $\varepsilon$ tends to $0$, is the identity coupling, which sells item $2$ with probability $1$, and hence gives revenue $4$ (a discontinuity).

The above example shows that without going into model details that we have abstracted away, such as the tie-breaking rule (and specifically, by our example, without disallowing the standard tie-breaking in favor of higher-priced items), it is hopeless to expect the existence of a precise best-response for the adversary rather than a only sequence of responses whose revenues converge to the infimum of the revenues that the adversary can attain. Once such a sequence is the most we can hope for, we will be content with noting that one can obtain a response of the adversary that gets arbitrarily close to the infimum by simply discretizing the distribution to a sufficiently fine grid, and then using our algorithm for the discrete case. We will not dive deeper into this direction as it bears no new conceptual messages, and anyway the major building block that we need for our hardness approximation from the next section is \cref{thm:alg-is-opt}: the correctness of the algorithm for discrete marginals, as well as its polynomial computational complexity in the size of its input.

\section{Proofs Omitted from Section \ref{sec:uniform}}
\label{appx:uniform}
\subsection{MHR Marginals}
\label{appx:MHR}

\begin{proof}[Proof of \cref{obs:uniform-not-opt}]
	Consider uniform marginals $U[\frac{1}{4},\frac{1}{4}+\epsilon]$ and $U[0,1]$ for items 1 and~2, respectively, where $\epsilon$ is sufficiently small. The robust revenue guarantee of any uniform pricing is $\le \frac{1}{4}+\epsilon$, since the seller either sets both prices to $\le \frac{1}{4}+\epsilon$, or effectively sets a single price for item 2, and the Myerson expected revenue of item 2 is $\frac{1}{4}$. 
	Consider now nonuniform prices $p_1=\frac{1}{4}$ and $p_2=\frac{5}{8}-\epsilon$.
	For any compatible distribution $F$, this pricing $p=(p_1,p_2)$ guarantees revenue of at least $\frac{1}{4}$ for every valuation profile in the support, and for every profile such that $v_2>p_2+\epsilon$ it guarantees $\frac{5}{8}-\epsilon$ (since for such profiles the utility from buying item 2 exceeds the utility from buying item~1 even if item~1's value is $v_1^{\max}$). 
	The robust revenue guarantee of $p$ is thus $\ge \Pr_{v\sim F}[v_2>\frac{5}{8}]\cdot(\frac{5}{8}-\epsilon) + (1-\Pr_{v\sim F}[v_2>\frac{5}{8}])\cdot\frac{1}{4}= \frac{3}{8}(\frac{5}{8}-\epsilon)+\frac{5}{8}\cdot \frac{1}{4}\to \frac{25}{64}$ as $\epsilon\to 0$. Comparing this to the upper bound of $\frac{1}{4}+\epsilon$ for uniform pricing completes the proof.
\end{proof}

\subsubsection{Intuition for pricing using the max median}
\label{sub:MHR-intuition}

\paragraph{``Standard'' uniform distributions.}

We build intuition by considering ``standard'' uniform marginals of the form $F_i=U[0,b_i]$, where without loss of generality $b_1\le \dots\le b_n$. Such marginals have the property that if the joint distribution is $\com$ then the highest value of any valuation profile in the support is~$v_n$. Thus the highest value has an MHR distribution (namely, $F_n=U[0,b_n]$), and moreover the expected welfare from allocating to the unit-demand buyer given $\com$ is its expectation $\mathbb{E}_{F_n}[v_n]$.
The expected welfare is clearly an upper-bound on the expected revenue $R(p,\com)$ for every pricing $p$, so $\forall p : R(p)\le \mathbb{E}_{F_n}[v_n]$ and we conclude that 
\begin{equation}
R^*\le \mathbb{E}_{F_n}[v_n].\label{eq:UB_on_Rstar}
\end{equation}

A salient property of MHR distributions is that in a single-item setting, the seller can extract a constant fraction of the expected welfare as revenue by setting either the Myerson monopoly price, or a price based on one of the quantiles, in particular the median \cite{DhangwatnotaiRY15}. 
So by using the median of~$F_n$ as a single price for item $n$, the seller is able to extract a constant fraction $\alpha$ of $\mathbb{E}_{F_n}[v_n]$, and thus by Eq.~\eqref{eq:UB_on_Rstar} also of $R^*$, when the joint distribution is $\com$. 
Due to the correlation agnosticism property of single prices, this $\alpha$-approximation guarantee holds for any joint distribution, establishing that using the highest median as a single price is $\alpha$-max-min optimal in this case.\footnote{In fact, for this particular case of ``standard'' uniform marginals, a single price is precisely max-min optimal---see \cref{pro:standard-uniforms} in \cref{sub:simple-case}.}

We remark that the exact same argument holds for \emph{exponential} marginals, since the marginal with the lowest rate-parameter $\lambda$ dominates the others. 

\paragraph{``Non-standard'' uniform distributions.}
The above argument cannot be extended to general uniform marginals of the form $F_i=U[a_i,b_i]$ where $a_i$ can be strictly positive, because the maximum value given the joint distribution $\com$ is not necessarily distributed according to an MHR distribution.\footnote{For example, consider $m=2$ items with marginals $F_1=U[\frac{1}{2}-\epsilon,\frac{1}{2}+\epsilon]$ and $F_2=U[0,1]$. The highest value given $\com$ is uniform over $[\frac{1}{2}-\epsilon,\frac{1}{2}]$ with probability $\frac{1}{2}$, and otherwise uniform over $[\frac{1}{2},1]$. So its CDF $F_{\max}(v)$ is $\frac{v-(1/2)+\epsilon}{2\epsilon}$ below $\frac{1}{2}$ and $v$ above it. This is not an MHR distribution since the hazard rate at $v<\frac{1}{2}$ is $\frac{1/2\epsilon}{1/2 + ((1/2)-v)/2\epsilon}\ge\frac{1}{2\epsilon}$, and above $\frac{1}{2}$ but below $\frac{3}{4}$ it is $\frac{1}{1-v}\le 4$.}
However, we argue that using the highest median $\mu_{\max}$ as a single price still guarantees a constant-factor approximation to $R^*$: The probability of the buyer purchasing the item with the highest median at this price is $\frac{1}{2}$, yielding expected revenue of $\frac{\mu_{\max}}{2}$ given any joint distribution. By definition $\mu_{\max}$ is at least the median $\mu_{i^*}$ of distribution $F_{i^*}$ where $i^*=\arg\max_i\{b_i\}$ (i.e., the distribution with highest $b_i$), and this median in turn is at least $b_{i^*}/2$. On the other hand, since $b_{i^*}=\max_i\{v_i^{\max}\}$, it is a clear upper bound on the expected welfare from allocating to the unit-demand buyer given any joint distribution. Putting everything together, if $p$ is the single price $\mu_{\max}$ for the item with the highest median, then
$$
R(p)\ge \frac{\mu_{\max}}{2}\ge \frac{\mu_{i^*}}{2} \ge \frac{b_{i^*}}{4} \ge \frac{R^*}{4}.
$$

\subsubsection{Proof of Lemma~\ref{lem:MHR-vals-above-q}} 
\label{sub:MHR-vals-above-q}

\paragraph{Hazard rate.}

Towards proving \cref{lem:MHR-vals-above-q}, we denote by $h(x)$ the hazard rate of a distribution~$F$ with density $f$ at value $x$, i.e., $h(x)=\frac{f(x)}{1-F(x)}$. 
Note that for the exponential distribution with rate-parameter $\lambda$, $h(x)=\lambda$.
We shall make use of the following general connection between hazard rate and CDF:
\begin{equation}
\int_{0}^{x} h(v)dv=-\ln\bigl(1-F(x)\bigr).\label{eq:HR-to-CDF}
\end{equation} 

The following claim lower-bounds the hazard rate of a distribution by the hazard rate $-\frac{\ln(1-q)}{\lambda}$ of an exponential distribution with the same (or higher) value at a given quantile $q$. 

\begin{claim}
	\label{cla:LB-on-HR}
	Let $F$ be an MHR distribution with density $f$, and let $q\in[0,1]$ be a quantile. 
	If $F^{-1}(q)\le x$ then the hazard rate of $F$ at $x$ is $h(x)\ge -\frac{\ln(1-q)}{x}$.
\end{claim}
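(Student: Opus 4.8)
\textbf{Proof plan for \cref{cla:LB-on-HR}.}
The plan is to combine the integral identity \eqref{eq:HR-to-CDF} with the monotonicity of the hazard rate. First I would unpack the hypothesis $F^{-1}(q)\le x$: by definition of the (generalized) inverse this means $F(x)\ge q$, hence $1-F(x)\le 1-q$ and therefore $-\ln\bigl(1-F(x)\bigr)\ge -\ln(1-q)$ (using that $-\ln$ is decreasing on $(0,1]$; the case $q=1$ is degenerate and can be excluded or treated separately). Plugging this into \eqref{eq:HR-to-CDF} gives the lower bound
\begin{equation*}
\int_0^x h(v)\,dv \;=\; -\ln\bigl(1-F(x)\bigr)\;\ge\; -\ln(1-q).
\end{equation*}

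Next I would use that $F$ is MHR, i.e.\ $h$ is weakly increasing, so $h(v)\le h(x)$ for every $v\in[0,x]$. Integrating this pointwise inequality over $[0,x]$ yields the upper bound $\int_0^x h(v)\,dv \le x\cdot h(x)$. Chaining the two displays gives $x\cdot h(x)\ge -\ln(1-q)$, and dividing by $x>0$ (the case $x=0$ forces $q=0$ and the claim is vacuous since the right-hand side is $0$) gives exactly $h(x)\ge -\frac{\ln(1-q)}{x}$, as desired.

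I do not anticipate a real obstacle here: the argument is two one-line estimates glued by \eqref{eq:HR-to-CDF}. The only points requiring a word of care are the boundary cases ($x=0$, $q\in\{0,1\}$) and making sure the identity \eqref{eq:HR-to-CDF} is applicable, which it is under the standing assumption in this section that the marginals are continuous, differentiable, and strictly increasing (so that $h$ and the inverse are well defined). The intuition to record alongside the proof is that the exponential distribution, which has constant hazard rate, is the extremal MHR distribution pinned at the quantile $q$ with value $x$: its hazard rate is exactly $-\ln(1-q)/x$, and MHR forces $h(x)$ to be at least this extremal value because the mass that $F$ must place below $x$ cannot all be ``back-loaded'' near $x$ without $h$ being large there.
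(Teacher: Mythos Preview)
Your proposal is correct and follows essentially the same approach as the paper: both combine the integral identity \eqref{eq:HR-to-CDF} with the monotonicity of $h$ to bound $\int_0^x h(v)\,dv$ above by $x\cdot h(x)$ and below by $-\ln(1-q)$. The only cosmetic difference is that the paper phrases this by contradiction (assuming $h(x)<-\ln(1-q)/x$ and deriving $q>F(x)$), whereas you argue directly; your version is arguably cleaner.
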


\begin{proof}
	Assume for contradiction that $h(x)< -\frac{\ln(1-q)}{x}$. Then by monotonicity of the hazard rate~$h(\cdot)$, for every $v\le x$ we have $h(v)< -\frac{\ln(1-q)}{x}$. Thus $\int_{0}^{x} h(v)dv < -\ln(1-q)$, and by \cref{eq:HR-to-CDF},
	$$
	\ln(1-F(x))>\ln(1-q).
	$$
	Taking $\exp(\cdot)$ of both sides and rearranging we get $q>F(x)$, in contradiction to the assumption that $F^{-1}(q)\le x$.
\end{proof}

We now use \cref{cla:LB-on-HR} to prove \cref{lem:MHR-vals-above-q}.

\begin{proof}[Proof of \cref{lem:MHR-vals-above-q}]
	Fix $q'>q$. 
	By \cref{cla:LB-on-HR}, $h(x)\ge -\frac{\ln(1-q)}{x}$, and by monotonicity this inequality holds for every $v\ge x$. So 
	\begin{equation}
	\int_{F^{-1}(q)}^{F^{-1}(q')}h(v)dv\ge -\frac{(F^{-1}(q')-x)\ln(1-q)}{x} = -\frac{F^{-1}(q')\ln(1-q)}{x}+\ln(1-q),\label{eq:one-hand}
	\end{equation}
	where the inequality uses that $F^{-1}(q)\le x$.
	On the other hand, by \cref{eq:HR-to-CDF},
	\begin{equation}
	\int_{F^{-1}(q)}^{F^{-1}(q')}h(v)dv = -\ln(1-q') + \ln(1-q).\label{eq:other-hand}
	\end{equation}
	Putting \cref{eq:one-hand,eq:other-hand} together, 
	$$
	\frac{F^{-1}(q')\ln(1-q)}{x}\ge \ln(1-q').
	$$
	Rearranging (while taking into account that $\ln(1-q)<0$) completes the proof.
\end{proof}

\subsubsection{Max-min optimality of a single price in a simple case}
\label{sub:simple-case}

\begin{proposition}
\label{pro:standard-uniforms}
A single price is max-min optimal for uniform marginals of the form $U[0,b_j]$. 
\end{proposition}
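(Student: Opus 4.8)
The plan is to pin down the exact optimum $R^*=\tfrac14 b^*$, where $b^*:=\max_j b_j$, and to identify the single price attaining it: offer the item $j^*\in\arg\max_j b_j$ at its median $\tfrac{b^*}{2}$ and every other item at $\infty$. The lower bound $R^*\ge\tfrac14 b^*$ is immediate: this pricing is a single price, hence correlation-agnostic (see \cref{sub:prelim-problem-def}), so its robust guarantee equals its revenue against $\com$, which is $\tfrac{b^*}{2}\cdot\Pr[v_{j^*}\ge\tfrac{b^*}{2}]=\tfrac{b^*}{4}$ since $F_{j^*}=U[0,b^*]$. It therefore remains to prove the matching upper bound $R^*\le\tfrac14 b^*$, and for that it suffices to show that for every pricing $p=(p_1,\dots,p_n)$ the comonotonic distribution already forces $R(p,\com)\le\tfrac14 b^*$, since $R(p)\le R(p,\com)$.

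To bound $R(p,\com)$ I would recast it through the buyer's utility curve under $\com$. Let $w(q):=\bigl(\max_j(q\,b_j-p_j)\bigr)^+$ be the utility the buyer derives at master quantile $q$. As the pointwise maximum of the affine maps $q\mapsto q\,b_j-p_j$ together with the constant $0$, the function $w$ is nonnegative, nondecreasing, convex, continuous and piecewise linear, with $w'(q)\in\{0,b_1,\dots,b_n\}$ for almost every $q$ and hence $w'\le b^*$ a.e. At a quantile $q$ where the buyer buys item $j^*(q)$ she pays $p_{j^*(q)}=q\,b_{j^*(q)}-w(q)=q\,w'(q)-w(q)$; where she buys nothing one has $w(q)=w'(q)=0$; and the tie-breaking rule does not affect $R(p,\com)$ (ties occur on a measure-zero set of quantiles, the only exception being duplicate items, which carry the same price). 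Consequently $R(p,\com)=\int_0^1\bigl(q\,w'(q)-w(q)\bigr)\,dq$, which by integration by parts (legitimate since $w$ is Lipschitz, hence absolutely continuous, and $q\,w(q)$ vanishes at $q=0$) equals $w(1)-2\int_0^1 w(q)\,dq$.

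The final step is to lower-bound $\int_0^1 w$ in terms of $u:=w(1)=\bigl(\max_j(b_j-p_j)\bigr)^+$, which lies in $[0,b^*]$ because $p_j\ge0$. From $w(1)-w(q)=\int_q^1 w'\le b^*(1-q)$ together with $w\ge0$ one gets the pointwise bound $w(q)\ge\max\bigl(0,\,u-b^*(1-q)\bigr)$; integrating the right-hand side (the area of a triangle of height $u$ and base $u/b^*$) gives $\int_0^1 w\ge\tfrac{u^2}{2b^*}$. Substituting into the identity above, $R(p,\com)\le u-2\cdot\tfrac{u^2}{2b^*}=u-\tfrac{u^2}{b^*}\le\max_{t\in[0,b^*]}\bigl(t-\tfrac{t^2}{b^*}\bigr)=\tfrac{b^*}{4}$, the maximum being attained at $t=\tfrac{b^*}{2}$. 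Hence $R^*\le\tfrac14 b^*$, and with the lower bound this gives $R^*=\tfrac14 b^*$, attained by the single price above, which is therefore max-min optimal.

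I expect the only nontrivial step to be the reformulation in the second paragraph: the observation that the per-quantile revenue under $\com$ equals $q\,w'(q)-w(q)$ everywhere (including the ``buy nothing'' region, where it vanishes, and excluding only the finitely many breakpoints of $w$), together with the justification of the integration by parts. Everything else---the correlation-agnosticism argument for the lower bound, the convexity/Lipschitz estimate $\int_0^1 w\ge u^2/(2b^*)$, and the single-variable optimization of $t-t^2/b^*$---is routine.
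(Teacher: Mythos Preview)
Your proof is correct and takes a genuinely different route from the paper's. Both arguments reduce to showing that a single price is optimal against $\com$ (and then invoke correlation-agnosticism), but the way this is shown differs substantially. The paper argues by a local first-order condition: among optimal pricings against $\com$ with the fewest finite prices, it fixes $p_2,p_3$, optimizes over $p_1$, and finds that the optimizing $p_1$ satisfies $\tfrac{p_1}{b_1}=\tfrac{p_2}{b_2}$, so item~$1$ is never purchased---contradicting minimality and forcing a single finite price. Your approach is instead global and quantitative: you recast $R(p,\com)$ via the convex utility envelope $w(q)$, obtain the closed-form identity $R(p,\com)=w(1)-2\int_0^1 w$, and then bound the integral using only the Lipschitz constraint $w'\le b^*$. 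This yields the explicit value $R^*=\tfrac14 b^*$ and identifies the optimizer, information the paper's argument does not directly extract. The paper's proof is shorter and more ``structural'' (it never needs to evaluate any integral), while yours gives a sharper conclusion and would generalize more readily to settings where one wants the numerical optimum rather than just its form.
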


\begin{proof}
	Consider $\com$ as the joint distribution. We show that a single price achieves the optimal expected revenue against $\com$. Recall (see \cref{ftn:sufficient}) that this is enough to establish max-min optimality of this single price.
	
Consider the optimal pricing given $\com$, excluding prices of $\infty$ and the corresponding items (if there are several optimal pricings take one with a minimum number of finite prices). 
We have at most $n$ prices $p_1,p_2, \dots$ (where $p_j$ is item $j$'s price); assume without loss of generality that the items are numbered such that $b_1\le b_2\le \cdots$. 
Denote the quantiles corresponding to the prices by $q_1,q_2,\dots$. 
Assume for simplicity that the buyer breaks ties in favor of higher-indexed items. 
So $q_1 < q_2 < \cdots$ (otherwise an item with lower $b_j$, or same $b_j$ but lower index, will never be bought), and therefore $p_1 < p_2 < \cdots$ (as a consequence of the monotonicity of $q$ and of $b$). 

Assume for contradiction that $p_1,p_2<\infty$. Since the joint distribution is $\com$, item 1 is bought from quantile $q$ such that $qb_1=p_1$ (i.e.,~$q=\frac{p_1}{b_1}$), item 2 is bought from quantile $q$ such that $qb_2-p_2=qb_1-p_1$ (i.e.,~$q=\frac{p_2-p_1}{b_2-b_1}$), and so on. The expected revenue from selling items 1 and 2 is
$$
\left(\frac{p_2-p_1}{b_2-b_1}-\frac{p_1}{b_1}\right)p_1+\left(\frac{p_3-p_2}{b_3-b_2}-\frac{p_2-p_1}{b_2-b_1}\right)p_2,
$$
where $\frac{p_3-p_2}{b_3-b_2}$ is understood to be 1 if there are only $m=2$ items.

Now fix $p_2,p_3$ and find $p_1$ that maximizes the revenue by taking the derivative:
$$
\frac{-p_1}{b_2-b_1}-\frac{p_1}{b_1} + \frac{p_2-p_1}{b_2-b_1}-\frac{p_1}{b_1} +\frac{p_2}{b_2-b_1} = \frac{2p_2b_1-2p_1b_2}{b_1(b_2-b_1)}.
$$ 
This is zero for $p_1=\frac{b_1}{b2}p_2$, which is the maximizing price. But then, $\frac{p_1}{b_1}=\frac{p_2}{b_2}=\frac{p_2-p_1}{b_2-b_1}$, which means that item 1 is never bought. This contradicts our assumption that we started out with a pricing with a minimum number of finite prices. We conclude that in this setting, there is a single finite price that is optimal against $\com$.
\end{proof} 

\subsection{Beyond MHR}
\label{appx:beyond-MHR}
\subsubsection{Observations related to the Myerson optimal revenue}\label{myerson-obs}

\paragraph{Myerson revenue.} 

Given an item $i$ with marginal distribution $F_i$, let $\Mye(F_i)$ denote the \emph{Myerson revenue} from this item, i.e., the optimal expected revenue from selling it to a buyer whose value is drawn from $F_i$. The Myerson revenue is obtained by pricing the item at its \emph{monopoly price}. 

\begin{observation}[Simple upper bound]
	\label{obs:simple-UB}
	Consider an instance of the max-min pricing problem with $n$ items and corresponding Myerson revenues $\Mye(F_1),\dots,\Mye(F_n)$.
	For every subset $S\subseteq [n]$ of the items, every pricing $p$ and every compatible distribution $F$, the contribution to the expected revenue $R(p,F)$ from selling items in $S$ is at most $\sum_{i\in S} \Mye(F_i)$. 
\end{observation}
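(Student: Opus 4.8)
The plan is to decompose the revenue contribution from $S$ item-by-item and bound each term by the corresponding single-item Myerson revenue. Since the buyer is unit-demand, for any realized valuation profile $v$ she purchases at most one item, so the events $\{j^*(v,p)=i\}$ for $i\in[n]$ are pairwise disjoint; hence the contribution to $R(p,F)$ from sales of items in $S$ is exactly $\sum_{i\in S} p_i\cdot\Pr_{v\sim F}[j^*(v,p)=i]$. (Items priced at $\infty$ are never purchased and contribute $0$, so they can be safely ignored.)

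Next I would bound each term $p_i\cdot\Pr_{v\sim F}[j^*(v,p)=i]$. The key observation is that whenever the buyer purchases item $i$ she obtains nonnegative utility from it, i.e.\ $v_i\ge p_i$; therefore $\{j^*(v,p)=i\}\subseteq\{v_i\ge p_i\}$, and so $\Pr_{v\sim F}[j^*(v,p)=i]\le\Pr_{v\sim F}[v_i\ge p_i]$. Because $F$ is compatible with the marginals, $\Pr_{v\sim F}[v_i\ge p_i]=\Pr_{\nu\sim F_i}[\nu\ge p_i]$, which depends only on $F_i$ and not on the (adversarial) correlation. Consequently the contribution from item $i$ is at most $p_i\cdot\Pr_{\nu\sim F_i}[\nu\ge p_i]$.

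Finally, $p_i\cdot\Pr_{\nu\sim F_i}[\nu\ge p_i]$ is precisely the revenue obtained by posting price $p_i$ to a single buyer whose value is drawn from $F_i$ (with any indifference resolved in favor of trade), and this is at most the optimal single-item revenue $\Mye(F_i)$ by definition of the Myerson/monopoly revenue. Summing over $i\in S$ gives $\sum_{i\in S} p_i\cdot\Pr_{v\sim F}[j^*(v,p)=i]\le\sum_{i\in S}\Mye(F_i)$, as claimed.

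I do not expect a substantive obstacle here: the argument is a short reduction to the single-item case. The only points requiring a little care are (i) the edge cases $p_i=\infty$ and buyer indifference, where the containment $\{j^*(v,p)=i\}\subseteq\{v_i\ge p_i\}$ still holds irrespective of the tie-breaking rule, and (ii) explicitly invoking compatibility of $F$ to replace a probability under the unknown joint distribution by the corresponding marginal probability. The final write-up should be only a few lines.
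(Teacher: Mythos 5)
Your proof is correct and follows essentially the same reasoning as the paper's: the paper observes that the contribution from selling item $i\in S$ is bounded by the revenue from offering only item $i$ at price $p_i$ (no cannibalization loss), which equals $p_i\cdot\Pr_{\nu\sim F_i}[\nu\ge p_i]\le\Mye(F_i)$; you make this explicit via the containment $\{j^*(v,p)=i\}\subseteq\{v_i\ge p_i\}$ and the compatibility of $F$ with the marginals. The extra care you take with tie-breaking and $\infty$ prices is sound but does not change the argument.
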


Before proving \cref{obs:simple-UB} we derive two immediate corollaries:

\begin{corollary}[Upper bound on $R^*$]
\label{cor:simple-UB}
For every instance of the max-min pricing problem with $n$ items and corresponding Myerson revenues $\Mye(F_1),\dots,\Mye(F_n)$, 
$$
R^*\leq \sum_{i\in[n]} \Mye(F_i).
$$
\end{corollary}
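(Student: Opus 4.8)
The plan is to obtain \cref{cor:simple-UB} as the special case $S=[n]$ of \cref{obs:simple-UB}. Granting \cref{obs:simple-UB}, the argument is immediate: for every pricing $p$ and every compatible distribution $F$, every unit of revenue collected under $F$ is paid for some item $j^*(v,p)\in[n]$, so the total expected revenue $R(p,F)$ is exactly the contribution to $R(p,F)$ from selling items in $[n]$, which by \cref{obs:simple-UB} applied with $S=[n]$ is at most $\sum_{i\in[n]}\Mye(F_i)$. Since this bound holds for every compatible $F$ (and at least one compatible $F$ exists, e.g.\ $\com$), we get $\inf_{\text{compatible }F}R(p,F)\le\sum_{i\in[n]}\Mye(F_i)$ for every pricing $p$; taking the supremum over all pricings $p$ then yields $R^*=\sup_p\inf_{\text{compatible }F}R(p,F)\le\sum_{i\in[n]}\Mye(F_i)$, as claimed.

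Should one also want the underlying \cref{obs:simple-UB} (which is the actual content), I would prove it directly. Fix $p$, a compatible $F$, and a subset $S\subseteq[n]$. The contribution to $R(p,F)$ from selling items in $S$ equals $\sum_{i\in S}p_i\cdot\Pr_{v\sim F}[j^*(v,p)=i]$, using that the events $\{j^*(v,p)=i\}$ over $i$ are pairwise disjoint. A buyer who purchases item $i$ obtains nonnegative utility, so $v_i\ge p_i$ whenever $j^*(v,p)=i$; hence $\Pr_{v\sim F}[j^*(v,p)=i]\le\Pr_{v\sim F}[v_i\ge p_i]=\Pr_{\nu\sim F_i}[\nu\ge p_i]$, where the equality uses that $F$ is compatible with the marginals. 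Therefore the contribution of item $i$ is at most $p_i\cdot\Pr_{\nu\sim F_i}[\nu\ge p_i]$, which is precisely the expected revenue of the take-it-or-leave-it mechanism that offers item $i$ alone at price $p_i$ to a buyer whose value is drawn from $F_i$; by definition of the monopoly price, this is at most $\Mye(F_i)$. Summing over $i\in S$ proves \cref{obs:simple-UB}.

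There is essentially no obstacle here; the single point that requires a moment's care is the direction of the buyer's participation inequality, i.e.\ whether one should write $v_i\ge p_i$ or $v_i>p_i$ (which depends on how ties against the ``null item'' are broken). But since both $p_i\cdot\Pr_{\nu\sim F_i}[\nu\ge p_i]$ and $p_i\cdot\Pr_{\nu\sim F_i}[\nu>p_i]$ are bounded above by $\Mye(F_i)$, using the weaker bound $\Pr_{v\sim F}[j^*(v,p)=i]\le\Pr_{\nu\sim F_i}[\nu\ge p_i]$ is safe and keeps the whole argument agnostic to the tie-breaking rule.
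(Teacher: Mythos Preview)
Your proposal is correct and matches the paper's approach exactly: the paper also derives \cref{cor:simple-UB} as the immediate special case $S=[n]$ of \cref{obs:simple-UB}, and its proof of \cref{obs:simple-UB} is essentially the same inclusion-of-events argument you give (bounding the sale probability of item $i$ by $\Pr_{\nu\sim F_i}[\nu\ge p_i]$ and then by $\Mye(F_i)$). Your extra care about the tie-breaking direction is fine and does not deviate from the paper's reasoning.
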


\begin{corollary}
	\label{cor:simple-UB-subset}
	For every instance of the max-min pricing problem with $n$ items and corresponding Myerson revenues $\Mye(F_1),\dots,\Mye(F_n)$, for every pricing $p$ and every compatible distribution $F$, denote by $S$ the set of items with finite prices according to $p$. Then
	$$
	R(p,F)\leq \sum_{i\in S} \Mye(F_i).
	$$
\end{corollary}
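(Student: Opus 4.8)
The plan is to derive this directly from \cref{obs:simple-UB}. The key observation is that under any pricing $p$, an item whose price is $\infty$ is never purchased by the buyer: buying such an item would yield utility $v_j-\infty<0$ (equivalently, an item priced at $\infty$ is simply not offered for sale), so the buyer strictly prefers to buy some finitely-priced item or nothing at all. Hence, for every valuation profile $v$ drawn from $F$, the item $j^*(v,p)$ that the buyer purchases (when she purchases anything) always belongs to $S$, the set of items with finite prices according to $p$. Consequently the entire expected revenue $R(p,F)$ is accounted for by sales of items in $S$; formally, $R(p,F)$ equals the contribution to $R(p,F)$ from selling items in $S$.

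With this in hand, the proof is one line: apply \cref{obs:simple-UB} with the chosen subset taken to be $S$ itself. That observation says the contribution to $R(p,F)$ from selling items in $S$ is at most $\sum_{i\in S}\Mye(F_i)$. Chaining this with the previous paragraph gives $R(p,F)\le\sum_{i\in S}\Mye(F_i)$, which is exactly the claim.

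There is no real obstacle here — the statement is an immediate specialization of \cref{obs:simple-UB}, in the same way that \cref{cor:simple-UB} is the special case $S=[n]$. The only point worth a sentence of care is the claim that items priced at $\infty$ contribute nothing to revenue, and this is immediate from the buyer's utility-maximizing behavior together with the unit-demand assumption. One could alternatively phrase the whole argument as: run the proof of \cref{cor:simple-UB} but restrict attention to the finitely-priced items, since the infinitely-priced items never affect $j^*(v,p)$ and hence never affect $R(p,F)$.
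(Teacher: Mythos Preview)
Your proposal is correct and matches the paper's approach: the paper states this as an ``immediate corollary'' of \cref{obs:simple-UB} without a separate proof, and your argument (items priced at $\infty$ are never purchased, so all revenue comes from $S$, then apply \cref{obs:simple-UB} to $S$) is exactly the intended one-line derivation.
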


\begin{proof}[Proof of \cref{obs:simple-UB}]
	For any compatible distribution $F$, an upper bound on the contribution to $R(p,F)$ from selling items in $S$ is the sum of expected revenues from selling each item $j\in S$ at price $p_j$ while all other prices are set to $\infty$, since this avoids any revenue loss due to cannibalization. The expected revenue from item $j$ is upper-bounded by $\Mye(F_j)$. The observation follows.
\end{proof}

Combining the next observation with normalization shows that up to log factors, we may restrict attention to settings with marginals whose Myerson revenues belong to the range $[1,2]$, as is indeed the case in \cref{ex:single-nonopt-cont}. 

\begin{observation}[Near-equal Myerson revenues]
	\label{obs:same-monopoly-rev}
	Consider an instance of the max-min pricing problem with finite Myerson revenues $\Mye(F_1)\ge \dots\ge \Mye(F_n)$. There exists a subset $S$ of items with Myerson revenues at most a factor 2 apart,
	and a pricing $p$ with prices of all items but those in $S$ set to $\infty$, such that $p$ is $O(\log n)$-max-min optimal. 
\end{observation}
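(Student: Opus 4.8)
I would prove this by bucketing the items by their Myerson revenue into $O(\log n)$ dyadic classes, discarding the ones with negligible Myerson revenue, and showing that the \emph{best} such bucket, priced according to a near-optimal global pricing (with all other items priced at $\infty$), already captures an $O(\log n)$-fraction of $R^*$.

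First, some preparation. Since a single price on item $i$ at its monopoly price is correlation-agnostic with revenue $\Mye(F_i)$, we get $R^*\ge\max_i\Mye(F_i)$, so in particular $\Mye(F_i)\le R^*$ for all $i$; and by \cref{cor:simple-UB}, $R^*\le\sum_i\Mye(F_i)<\infty$. We may assume $R^*>0$ (otherwise the claim is vacuous). Call item $i$ \emph{small} if $\Mye(F_i)<R^*/(4n)$; by \cref{obs:simple-UB} the total revenue contributed by small items, under \emph{any} pricing and \emph{any} compatible distribution, is at most $n\cdot R^*/(4n)=R^*/4$. Partition the remaining items into buckets $B_1,\dots,B_m$ with $B_k=\{i:\Mye(F_i)\in(R^*2^{-k-1},R^*2^{-k}]\}$; since non-small Myerson revenues lie in $[R^*/(4n),R^*]$, we have $m=O(\log n)$, and within each $B_k$ the Myerson revenues differ by a factor of at most $2$, as required of the set $S$ in the statement.

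Now fix a pricing $p$ with $R(p)\ge\tfrac34 R^*$ (one exists since $R^*=\sup_p R(p)>0$), and for each $k$ let $p^{(k)}$ be the pricing agreeing with $p$ on $B_k$ and equal to $\infty$ elsewhere. I would show $R(p^{(k)})\ge R^*/(6m)=R^*/O(\log n)$ for some $k$, which proves the claim with $S=B_k$ and pricing $p^{(k)}$. Suppose not: then for each $k$ there is a compatible $G_k$ with $R(p^{(k)},G_k)<R^*/(6m)$. Let $G$ be the compatible distribution that draws, independently across blocks, the value-block $(v_i)_{i\in B_k}$ according to the $B_k$-marginal of $G_k$ (for each $k$) and each small item's value according to its own marginal. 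The crux is a monotonicity-under-removal lemma: for a fixed value profile and a fixed set $B$, raising to $\infty$ the prices of items outside $B$ can only weakly increase the revenue collected when a $B$-item is bought — because if, under the full pricing, the buyer's utility-maximizing item lies in $B$, it remains utility-maximizing after the other items are removed (the tie-break between two $B$-items is unchanged, since it depends only on those two items' identities, prices, and the utility value). Hence for every $k$, the revenue from $B_k$-items under $(p,G)$ is at most that under $(p^{(k)},G)$, which equals $R(p^{(k)},G)=R(p^{(k)},G_k)<R^*/(6m)$ (the last equality because $p^{(k)}$ ignores all non-$B_k$ items and $G$ agrees with $G_k$ on $B_k$). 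Summing over the $m$ buckets and adding the at-most-$R^*/4$ contribution of the small items gives $R(p,G)<m\cdot R^*/(6m)+R^*/4=5R^*/12<\tfrac34 R^*\le R(p)=\inf_{\text{compatible }F}R(p,F)\le R(p,G)$, a contradiction. (Normalizing the winning bucket's Myerson revenues into $[1,2]$ then yields the stated reduction.)

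\textbf{Main obstacle.} The step I expect to be the crux is exactly the one handled by the product-distribution construction: one cannot simply claim that ``restricting a near-optimal global pricing to a single bucket inherits that bucket's share of $R^*$,'' because the adversary's best response to the restricted pricing differs from its best response to the global pricing, so a pointwise bound against one fixed distribution does not transfer to the infimum over all distributions. Stitching the $O(\log n)$ bucket-specific worst cases together with \emph{benign, independent} cross-bucket correlations, and using monotonicity-under-removal to certify that this stitching cannot help the seller, is what lets a bucketing of the upper bound $\sum_i\Mye(F_i)$ be turned into a bucketing of the achievable robust revenue. A pleasant feature is that the argument never needs the adversary's best response to exist, relying only on the definition of $R(\cdot)$ as an infimum and on \cref{obs:simple-UB,cor:simple-UB}.
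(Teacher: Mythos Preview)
Your proof is correct and follows essentially the same route as the paper: discard items with negligible Myerson revenue, bucket the rest into $O(\log n)$ dyadic classes, and argue that the best bucket already secures an $\Omega(1/\log n)$ fraction of $R^*$. The paper's version is terser: it simply asserts that ``the robust revenue guarantee for all items is upper-bounded by the sum of robust revenue guarantees for the items in each bucket separately,'' and then picks the best bucket. Your stitching construction --- taking a product of per-bucket worst-case distributions and invoking monotonicity-under-removal --- is precisely what is needed to justify that sentence, so you have made rigorous a step the paper leaves as an ``observe that.'' The only cosmetic differences are that the paper thresholds small items at $\Mye(F_1)/n$ (using $R^*\ge\Mye(F_1)$) rather than $R^*/(4n)$, and works abstractly with the per-bucket optima $R^*_{B_k}$ rather than fixing a near-optimal global pricing and restricting it; neither changes the substance of the argument.
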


\begin{proof}
	First observe that if we price at $\infty$ any item $j$ for which $\Mye(F_j)<\frac{\Mye(F_1)}{n}$, we lose at most a factor of 2 compared to the max-min optimal pricing. This is because the contribution of this set of items to $R^*$ is at most $\Mye(F_1)$ (\cref{obs:simple-UB}), and after ``throwing away'' these items by pricing them at $\infty$, the new robust revenue guarantee is still $\ge \Mye(F_1)$ (we can always set a single monopoly price for item 1). 
	We can now partition the remaining items into $\log n$ buckets such that in every bucket all items have the same Myerson revenue up to a factor of 2. Observe that the robust revenue guarantee for all items is upper-bounded by the sum of robust revenue guarantees for the items in each bucket separately. Thus pricing only the items in the best bucket at finite prices loses at most a factor of $\log n$, completing the proof.\footnote{We remark that instead of using $\infty$-prices we can price all items except for those in the best bucket at the highest price of the max-min optimal pricing for the best bucket.}
\end{proof}

\subsubsection{Analyzing Example~\ref{ex:single-nonopt-cont}}

The proofs below use definitions and observations from \cref{myerson-obs}.

\begin{proof}[Proof of \cref{lem:half-thresh}]
First, by \cref{cor:simple-UB-subset} and since $\Mye(\eqrev)=1$, for every pricing $p'$ with at most $k=|S|$ finite prices (and the rest of the prices set to $\infty$),
$R(p')\le k.$ 
In particular this holds for pricing $p$ as defined in \cref{lem:half-thresh}.
We argue that to complete the proof, it is sufficient to show that $R(p)=\Omega(k)$, from which we can conclude that $R(p)=\Theta(k)$. Indeed, by applying this to $S=[n]$ (the set of all items), we get that $R^*=\Omega(n)$, and by \cref{cor:simple-UB} we have $R^*=\Theta(n)$. So $p$ is $\Theta(\frac{n}{k})$-max-min optimal as required. 

It remains to show that $R(p)=\Omega(k)$. For every item $j\in S$, since the price is $p_j=2^j$ and the marginal distribution of $j$ is truncated at $t_j=2^{j+1}$, the maximum utility of the buyer from buying item~$j$ is $2^{j}$. 
Define a \emph{threshold} $\tau_j=\frac{3}{2}2^j$ for every item $j\in S$ (the threshold is halfway between this item's price and truncation point), and $\tau_j=\infty$ for every $j\notin S$.
Notice that $1-F_j(\tau_j)=\frac{2}{3}2^{-j}$ for every $j\in S$.
We claim that pointwise for every valuation profile~$v$ the following holds: If there exists some item $i\in[n]$ whose value \emph{clears} its threshold, that is, $v_{i}>\tau_{i}$, then the buyer purchases the highest-priced such item, i.e., item $j=\arg\max_{i\in[n]}\{\mathbbm{1}_{v_i> \tau_i}\cdot p_i\}$, or some other item with price at least $p_j$. 
This is because if $v_j>\tau_j$ then for every lower-priced item $j'<j$, the utility from buying $j'$ (upper-bounded by $2^{j'}\le 2^{j-1}$) is strictly lower than the utility $v_j-p_j$ from buying $j$; indeed, $v_j-p_j > \tau_j-p_j = 2^{j-1}$.

Consider any compatible joint distribution $F$. So far we have shown that $R(p,F)$ is at least the expected revenue from selling the highest-priced item that clears its threshold at every valuation profile $v\sim F$. Thus we can write:
\begin{align}
	R(p,F) &\ge 
	\sum_{j=1}^{n} \Pr_{v\sim F}[v_j>\tau_j\text{ and }\forall j'>j : v_{j'}\le \tau_{j'}]\cdot p_j\nonumber\\
	&=\sum_{j=1}^{n} \left(\Pr_{v\sim F}[\exists j'\ge j : v_{j'}> \tau_{j'}] - \Pr_{v\sim F}[\exists j'\ge j+1 : v_{j'}> \tau_{j'}]\right)\cdot p_j\nonumber\\
	&=
	\sum_{j=1}^{n} \left(\Pr_{v\sim F}[\exists j'\ge j : v_{j'}> \tau_{j'}]\right)\cdot(p_j-p_{j-1})\label{eq:rearrange}\\
	&\ge
	\sum_{j=1}^{n} \max_{j'\ge j}\bigl\{1-F_{j'}(\tau_{j'})\bigr\}\cdot 2^{j-1}\label{eq:ineq}\\ 
	&\ge \sum_{j\in S} \bigl(1-F_{j}(\tau_{j})\bigr)\cdot 2^{j-1} = \sum_{j\in S} \frac{2}{3}\cdot2^{-j}\cdot 2^{j-1} = \sum_{j\in S}\frac{1}{3}=\frac{|S|}{3},\label{eq:final}
\end{align}
where \cref{eq:rearrange} is by rearranging and defining $p_0=0$, \cref{eq:ineq} uses that $p_j=2^j$ and that $F$ is a compatible distribution, and \cref{eq:final} uses that $\max_{j'\ge j}\{1-F_{j'}(\tau_{j'})\}=1-F_{j}(\tau_{j})$ for every $j\in S$. 
We have shown that $R(p,F)=\Omega\bigl(|S|\bigr)$ for every compatible $F$, and this means that $R(p)=\Omega\bigl(|S|\bigr)$, completing the proof.
\end{proof}

\begin{proof}[Proof of \cref{pro:single-nonopt-cont}]
Consider \cref{ex:single-nonopt-cont}. 
It is sufficient to show that with $\le k$ different prices, $R(p)\le k$. 
The proof then follows by \cref{lem:half-thresh}.
We achieve this by showing that $R(p,\com)\le k$. 

Assume consistent tie-breaking according to price (and then possibly according to index, etc.). For every pricing $p$ with $\le k$ different prices, we show a pricing $p'$ with $\le k$ \emph{finite} prices (and the rest of the prices set to $\infty$) such that $R(p,\com)\le R(p',\com)$. We define $p'$ as follows: For every distinct price $p_\ell$ in $p$ where $\ell\in[k]$, let $j_\ell$ be the item with this price that has the highest truncation point. Let $p'_{j_\ell}=p_\ell$, and for every other item $i$ with price $p_\ell$ in the pricing $p$, set $p'_i=\infty$. 

We now compare the expected revenues $R(p,\com)$ and $R(p',\com)$: Fix a valuation profile~$v$ in the support of $\com$ for which an item is purchased given~$p$. 
Denote the price at which the item is purchased by $p_\ell$. Clearly this item $j^*$ has the maximum utility $u^*$ among all other items given~$p$. We argue that the utility of item $j_\ell$ given $p'$ is also $u^*$. This is because by definition of $\com$, valuation profile $v$ corresponds to a certain quantile $q$, i.e., $v_j=v_j(q)$ for every item~$j$. Since $j_\ell$ has a (weakly) higher truncation point than $j^*$, we have that $v_{j_\ell}(q)\ge v_{j^*}(q)$. Since $p'_{j_\ell}=p_{j_\ell}=p_{j^*}=p_\ell$, the utility of item $j_\ell$ given $p'$ is the same as given $p$ and both are at least $u^*$. Since $u^*$ is the maximum utility given $p$, the argument is complete.

To complete the comparison, notice that the utilities of all other items besides $j_\ell$ given $p'$ are a subset of the utilities of all other items given $p$. Thus $u^*$ is the maximum utility given $p'$ as well, and assuming consistent tie-breaking according to price, we conclude that the item purchased given $p'$ has price~$p_\ell$. Therefore, $R(p,\com)\le R(p',\com)$. The latter is upper-bounded by $k$ according to \cref{cor:simple-UB-subset}, completing the proof.
\end{proof}

\end{document}